\documentclass[preprint,10pt]{elsarticle}
\usepackage[utf8]{inputenc}
\usepackage{algpseudocode}
\usepackage{algorithm}

\usepackage[margin=2.5cm]{geometry}
\usepackage{graphicx}
\usepackage{amsmath}
\usepackage{amssymb}
\usepackage{amsfonts}
\usepackage{amsthm}
\usepackage{url}
\usepackage[all]{xy}
\usepackage{makeidx}
\usepackage{latexsym}
\usepackage{empheq}
\usepackage{verbatim}
\usepackage{stmaryrd}
\usepackage{enumitem}
\usepackage{xspace}
\usepackage{bussproofs}
\usepackage{boxedminipage}
\usepackage{hyperref}
\usepackage{ccaption}
\usepackage{cancel}
\usepackage{multicol}
\usepackage{xcolor}
\usepackage{proof}
\usepackage{longtable}
\usepackage{tablefootnote}
\usepackage{marginnote}

\newcommand{\jty}{J^{\infty}}

\newcommand{\mpnote}[1]{\textcolor{red}{MP:#1}}

\newcommand{\pcon}{\mathcal{C}}
\newcommand{\npcon}{\cancel{\mathcal{C}}}
\newcommand{\ppcon}{\mathcal{D}}
\newcommand{\nppcon}{\cancel{\mathcal{D}}}

\newcommand{\wt}{{\rhd}}
\newcommand{\bt}{{\blacktriangleright}}
\newcommand{\tw}{{\lhd}}
\newcommand{\tb}{{\blacktriangleleft}}
\usepackage{forest}
\usepackage{cmll}

\usetikzlibrary{patterns}

\def\fCenter{\vdash}

\ifdefined\sep
\else
  \newcommand{\sep}{\ \vert\ }
\fi

\newcommand{\wdia}{\ensuremath{\Diamond}\xspace}



\def\aol{\rule[0.5865ex]{1.38ex}{0.1ex}}


%

\theoremstyle{plain}
\newtheorem{thm}{Theorem}[section]
\newtheorem{theorem}[thm]{Theorem}

\newtheorem{proposition}[thm]{Proposition}

\newtheorem{lemma}[thm]{Lemma}

\theoremstyle{definition}

\newtheorem{definition}[thm]{Definition}

\newtheorem{example}[thm]{Example}

\newtheorem{remark}[thm]{Remark}

\def\pdra{\mbox{$\,>\mkern-8mu\raisebox{-0.065ex}{\aol}\,$}}

\tikzset{
	treenode/.style = {align=center, inner sep=0pt, text centered},
	Ske/.style = {treenode, ellipse, double, draw=black,
		minimum width=6pt, thick},
	PIA/.style = {treenode, ellipse, black, draw=black,
		minimum width=6pt},
	Crit/.style = {treenode, rectangle, draw=black,
		minimum width=0.5em, minimum height=0.5em}
}

\def\pdla{\mbox{\rotatebox[origin=c]{180}{$\,>\mkern-8mu\raisebox{-0.065ex}{\aol}\,$}}}



\journal{Artificial Intelligence Journal}

\begin{document}
\begin{frontmatter}

\title{Correspondence and Inverse Correspondence for Input/Output Logic and Region-Based Theories of Space }

\author[inst1]{Andrea De Domenico}
\author[inst2]{Ali Farjami}
\author[inst1]{Krishna Manoorkar}
\author[inst1,inst3]{Alessandra Palmigiano}
\author[inst1]{Mattia Panettiere}
\author[inst1,inst4]{Xiaolong Wang*}

\cortext[cor1]{Corresponding author: Xiaolong Wang, Email: 390292381@qq.com }

\affiliation[inst1]{organization={School of Business and Economics, Vrije Universiteit Amsterdam},
            country={
            The Netherlands}}

\affiliation[inst2]{organization={Interdisciplinary Centre for Security Reliability and Trust, University of Luxembourg},
            country={Luxembourg}}

\affiliation[inst3]{organization={Department of Mathematics and Applied Mathematics, University of Johannesburg},
            country={South Africa}}

\affiliation[inst4]{organization={School of Philosophy and Social Development, Shandong University},
            addressline={Jinan},
            country={China}
            }

\begin{abstract}
We further develop the algebraic approach to input/output logic initiated in \cite{wollic22}, where subordination algebras and a family of their generalizations were proposed as a semantic environment of various input/output logics. In particular: we extend  the  modal characterizations of a finite number of well known conditions on normative and permission systems, as well as on subordination, precontact, and dual precontact algebras developed in \cite{de2024obligations}, to those corresponding to the infinite class of {\em clopen-analytic inequalities} in a modal language consisting  both of positive and of negative unary modal operators; we characterize the syntactic shape of first-order conditions on algebras endowed with subordination, precontact, and dual precontact relations which guarantees these conditions to be the first-order correspondents of axioms in the modal language above; we introduce algorithms  for computing the first-order correspondents of modal axioms on algebras endowed with subordination, precontact, and dual precontact relations, and conversely, for computing the modal axioms of which the conditions satisfying the suitable syntactic shape are the first-order correspondents; finally, we extend Celani's dual characterization results between subordination lattices and subordination spaces to a wider environment which also encompasses precontact and dual precontact relations, and relative to an infinite class of first order conditions relating subordination, precontact and dual precontact relations on distributive lattices. The modal characterizations established in the present paper  pave the way to establishing faithful embeddings for infinite classes of input/output logics, and hence to their implementation  in LogiKEy, Isabelle/HOL, Lean, or other interactive systems.
\smallskip

\noindent {\em Keywords:} input/output logic, subordination algebras, subordination lattices, precontact algebras,  Sahlqvist theory, inverse correspondence, Kracht's theorem.

\end{abstract}

\end{frontmatter}

\section{Introduction}

Input/output logic \cite{Makinson00} has been introduced  as a formal framework for studying the interaction between logical inferences and other agency-related relations between formulas, encoding e.g.~conditional obligations, goals, ideals, preferences, actions, and beliefs, 
and has been applied not only in the context of the formalization of normative systems in philosophical logic and AI \cite{Parenthand}, but also in connection and combination with a very diverse range of issues, also relevant to present-day AI, spanning from formal argumentation \cite{berkel2022reasoning, chen2024bisimulation} to causal reasoning \cite{bochman2004causal}, and non-monotonic reasoning \cite{Makinson01}. 
In most of these research lines, and in fact, since its inception, the issue of establishing connections between the various forms of outputs associated with normative systems and  {\em modal logic} formalisms, and especially with deontic logic, has been deemed  worth being pursued, since the existence of this connection would grant access to a number of techniques and results, both model-theoretic \cite{strasser2016adaptive} and  proof-theoretic (e.g.~tableaux, decidability results, proof-search \cite{Lellmann2021}), which can shed light on the computational behaviour and performances of input/output logic. In particular, modal characterizations of input/output logic would allow for the powerful and flexible LogiKEy methodology \cite{Christoph2020} to be available  for the development and experimentation with ethical reasoners and normative theories, based on a semantic embedding of deontic logic within (decidable fragments of) higher-order logic (HOL). 
Modal characterizations of two of the four original   types of outputs   have been achieved by syntactic means  since the inception of the input/output framework \cite[Sections 4.3 and 5.5]{Makinson00}, and were used to obtain a faithful semantic embedding of the corresponding input/output logics into classical higher order logic, which allowed for their implementation into Isabelle/HOL \cite{benzmuller2019}.  More recently, the modal embeddings of the original outputs, together with those obtained via semantic interpretation  in the context of the neighbouring setting of causal reasoning   \cite{bochman2004causal},  have been used to design proof-search oriented proof-calculi  for various input/output logics and causal reasoning \cite{Agata2023}.

The present paper  is the prosecution of a line of investigation,  initiated in \cite{wollic22} and continued in \cite{de2024obligations, de2024obligations2}, aimed at establishing systematic translations and algorithmically generating modal characterization of {\em classes}  of outputs on {\em classes} of propositional logic settings. The methodology adopted in this line of research pivots on the recent introduction of {\em subordination algebras} and related structures  as a semantic environment for input/output logic \cite{wollic22}. 

Subordination algebras \cite{bezhanishvili2017irreducible} and related structures such as pre-contact algebras \cite{dimov2005topological} and quasi-modal algebras \cite{celani2001quasi,celani2016precontact} have been introduced in the context of a point-free approach to the region-based theories of discrete spaces, and have mainly been investigated with algebraic and duality-theoretic methods: for instance, the link established between subordination algebras and certain Boolean algebras with operators \cite{jonsson1951boolean} has made it possible to endow  various modal languages with a semantics  based on subordination algebras, and use these languages to modally characterize their properties. In particular, Sahlqvist-type canonicity for modal axioms on subordination algebras has been studied in \cite{de2020subordination} using topological techniques. Very similar structures  to subordination algebras have  been introduced by Lindahl and Odelstadt \cite{lindahl2013theory} in the context of an algebraic approach for the representation of normative systems; however, the explicit connection with subordination algebras allows access to a  systematic link between topological and algebraic insights in establishing general and uniform modal characterization results for classes of input/output logics. 
Specifically relevant to the present paper, in \cite{de2021slanted}, using algebraic techniques, the canonicity result  of \cite{de2020subordination} was strengthened and captured within the more general notion of canonicity in the context of {\em slanted algebras}, which was established using the tools of {\em unified correspondence theory} \cite{conradie2014unified,conradie2019algorithmic,conradie2020constructive}. Slanted algebras are based on general lattices, and encompass variations and generalizations of subordination algebras such as those very recently introduced by Celani in \cite{celani2020subordination}, which are based on distributive lattices, and for which Celani develops  duality-theoretic and correspondence-theoretic results for  finitely many axioms.

In the present paper, we directly build on \cite{de2024obligations2}, and generalize the modal characterizations of finitely many  conditions on normative and permission systems, as well as on subordination algebras and related structures developed in \cite{de2024obligations2}, to those corresponding to the infinite class of {\em clopen-analytic inequalities} (cf.~Definition \ref{def:clopen_analytic}).  Conversely, we  introduce {\em Kracht formulas} (cf.~Definition \ref{def:inverse_shape})  as the class of those first order conditions  on algebras with subordination, precontact, and dual precontact relations  which correspond to clopen-analytic  axioms; finally, as an application of these results, we generalize Celani's dual characterization results  between subordination lattices and subordination spaces  (cf.~\cite[Theorem 5.7]{celani2020subordination}) to the environment of spd-algebras based on distributive lattices and the infinite  class of  inductive Kracht formulas (cf.~Definition \ref{def:kracht_analytic_inductive}).

The modal characterizations established in the present paper  pave the way to establishing faithful embeddings for infinite classes of input/output logics, and hence to their implementation  in LogiKEy, Isabelle/HOL, Lean, or other interactive systems.  
Moreover, these characterizations modularly apply to a wide class  of algebraic settings,  corresponding to varieties of normal lattice expansions \cite{gehrke2001bounded} and their associated logics \cite{conradie2019algorithmic}. Hence, the present contributions further develop and strengthen the  principled generalization, initiated in \cite{de2024obligations}, of the theory of normative and permission systems from its original setting  (namely, classical propositional logic) to the very wide class of {\em selfextensional logics}. This class is well known and very well studied in abstract algebraic logic \cite{font2003survey, jansana2006referential,jansana2006selfextensional}, and includes all the best known and most used logics for computer science and AI, such as modal, intuitionistic, bi-intuitionistic and substructural logics, and logics with weaker form of negation, or no negation althogether, such as those pertaining to the family of  equilibrium logic  \cite{pearce2006equilibrium,cabalar2023deontic}.
Related to this observation, the logical signatures  considered in the present paper account both for obligations and various forms of permissions considered as primitive notions, as well as their {\em interaction};\footnote{Likewise, in neighbouring settings such as Bochmann's framework for causal reasoning \cite{bochman2004causal,bochman2003logic}, Kracht's formulas and analytic axioms of  suitable signatures can be used to represent relevant forms of interaction between the `is the cause of' and the `not prevents' relationships.}

\paragraph{Structure of the paper} In Section \ref{sec:prelim}, we collect basic definitions and facts about canonical extensions, algebras endowed with subordination, precontact, and dual precontact relations (referred to as {\em spd-algebras}, cf.~Definition \ref{def:spd-algebra}), their associated slanted algebras and canonical extensions, the first-order logic associated with spd-algebras, the propositional logics associated with the slanted algebras arising from spd-algebras, and the (clopen-)analytic axioms in the languages of these logics. In Section \ref{sec:correspondence}, we show that every clopen-analytic axiom corresponds to a first order condition in the language of spd-algebras. In Section \ref{sec:examples}, we illustrate the application of this procedure on a concrete example. In Section \ref{sec:kracht_formulas}, we introduce {\em Kracht formulas} (cf.~Definition \ref{def:kracht_analytic_inductive}) as the  first-order formulas in the languages of spd-algebras the syntactic shape of which guarantees them to be the first-order correspondents of some propositional axioms, and in Section \ref{sec:inverse_correspondence_procedure}, we present an algorithm for computing their associated  propositional axioms. In Section \ref{sec: applications}, we use the results of the previous section to  extend Celani's dual characterization  results \cite[Theorem 5.7]{celani2020subordination} and previous results  \cite[Propositions 6.11 and 6.12]{de2024obligations2}  to   inductive Kracht formulas. We conclude in Section \ref{sec:conclusions}.

\section{Preliminaries}
\label{sec:prelim}


 \subsection{Canonical extensions of bounded lattices}
\label{ssec:canext lattices}

In the present subsection, we adapt material from  \cite[Sections 2.2 and 3.1]{de2020subordination},\cite[Section 2]{DuGePa05}. In what follows,  when we say `lattice', we mean `bounded lattice'. We will make use of the following notation throughout the present paper: an {\em order-type} over $n\in \mathbb{N}$ is an $n$-tuple $\epsilon\in \{1, \partial\}^n$.  For every order type $\epsilon$, we denote its {\em opposite} order type by $\epsilon^\partial$, that is, $\epsilon^\partial_i = 1$ iff $\epsilon_i=\partial$ for every $1 \leq i \leq n$. For any lattice $A$, we let $A^1: = A$ and $A^\partial$ be the dual lattice of $A$, that is, the lattice associated with the converse partial order of $A$. For any order type $\varepsilon$ over $n$, we let $A^\varepsilon: = \Pi_{i = 1}^n A^{\varepsilon_i}$. 
\begin{definition}
Let $A$ be a sublattice 
of a complete lattice $A'$.
\begin{enumerate}
\item An element $k\in A'$ is {\em closed} if $k = \bigwedge F$ for some non-empty 
$F\subseteq A$; an element $o\in A'$ is {\em open} if $o = \bigvee I$ for some non-empty 
$I\subseteq A$;
\item  $A$ is {\em dense} in $A'$ if every element of $A'$ is both  the join of closed elements and  the meet of open elements of $A'$.
\item $A$ is {\em compact} in $A'$ if, for all nonempty $F, I\subseteq A$, 
if $\bigwedge F\leq \bigvee I$ then $a\leq b$ for some $a\in  F$ and $b\in I$. 
\item The {\em canonical extension} of a lattice $A$ is a complete lattice $A^\delta$ containing $A$
as a dense and compact sublattice.
\end{enumerate}
\end{definition}
The canonical extension $A^\delta$ of any lattice $A$ always exists\footnote{For instance, the canonical extension of a distributive lattice  $A$ is (isomorphic to) the  complete and completely distributive lattice $\mathcal{P}^{\uparrow}(Prime(A))$ of the upward closed subsets of the poset $Prime(A)$  of the prime filters of $A$ ordered by  inclusion.} and is  unique up to an isomorphism fixing $A$ (cf.\ \cite[Propositions 2.6 and 2.7]{DuGePa05}).
%
We let $K(A^\delta)$ (resp.~$O(A^\delta)$) denote the set of the closed (resp.~open) elements    of $A^\delta$. It is easy to see that $A = K(A^\delta)\cap O(A^\delta)$, which is why the elements of $A$ are referred to as the {\em clopen} elements of $A^\delta$. The following proposition collects well known facts which we will use in the remainder of the paper. In particular, item  (iv) is a variant of \cite[Lemma 3.2]{gehrke2001bounded}.

 \begin{proposition} \label{prop:background can-ext} (cf.~\cite[Proposition 2.6]{de2024obligations2})
 For any lattice $A$,
  all $k_1, k_2\in K(A^\delta)$, $o_1, o_2\in O(A^\delta)$, and  $u_1, u_2\in A^\delta$,
 \begin{enumerate}[label=(\roman*)]
     \item $k_1\leq k_2$ iff $k_2\leq b$ implies $k_1\leq b$ for all $b\in A$.
     \item $o_1\leq o_2$ iff $b\leq o_1$ implies $b\leq o_2$ for all $b\in A$.
     \item $u_1\leq u_2$ iff  $k\leq u_1$ implies $k\leq u_2$ for all $k\in K(A^\delta)$, iff $u_2\leq o$ implies $u_1\leq o$ for all $o\in O(A^\delta)$.
     \item 
     $k_1\vee k_2\in K(A^\delta)$ and  $o_1\wedge o_2\in O(A^\delta)$.
  \end{enumerate}
 \end{proposition}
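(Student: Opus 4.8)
The plan is to derive all four items from the density and compactness of $A$ in $A^\delta$, treating (i)--(iii) quickly and concentrating effort on (iv). For item~(i), the left-to-right direction is immediate from transitivity of $\leq$; for the converse, I would write $k_2 = \bigwedge F$ with $\emptyset \neq F \subseteq A$, note that the hypothesis yields $k_1 \leq b$ for every $b \in F$, and conclude $k_1 \leq \bigwedge F = k_2$. Item~(ii) is the order-dual statement and would be proved by the dual argument, writing $o_1 = \bigvee I$ with $\emptyset \neq I \subseteq A$.

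For item~(iii) the forward implications are again trivial by transitivity. For the converse of the first equivalence I would invoke density to write $u_1 = \bigvee\{k \in K(A^\delta) \mid k \leq u_1\}$, observe that the hypothesis makes each such $k$ lie below $u_2$, and infer $u_1 \leq u_2$; the converse of the second equivalence follows dually, using the representation $u_2 = \bigwedge\{o \in O(A^\delta) \mid u_2 \leq o\}$.

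The main work --- and the only genuinely delicate point, because $A^\delta$ need not be distributive --- is item~(iv). Writing $k_1 = \bigwedge F_1$ and $k_2 = \bigwedge F_2$ with $F_1, F_2$ non-empty subsets of $A$, I would prove the identity
\[
k_1 \vee k_2 \;=\; \bigwedge\{\, a \vee b \mid a \in F_1,\ b \in F_2 \,\}.
\]
Since $A$ is a sublattice of $A^\delta$, the right-hand side is the meet of a non-empty family of clopen elements, so this identity exhibits $k_1 \vee k_2$ as a closed element. The inequality $\leq$ is immediate, as $k_1 \leq a \leq a \vee b$ and $k_2 \leq b \leq a \vee b$ for all $a \in F_1$, $b \in F_2$. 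For the reverse inequality, let $x$ denote the right-hand meet and suppose toward a contradiction that $x \not\leq k_1 \vee k_2$; by density there is $o \in O(A^\delta)$ with $k_1 \vee k_2 \leq o$ but $x \not\leq o$. Writing $o = \bigvee I$ with $\emptyset \neq I \subseteq A$, from $\bigwedge F_1 = k_1 \leq o = \bigvee I$ and compactness I obtain $a \leq i_1$ for some $a \in F_1$ and $i_1 \in I$, and similarly $b \leq i_2$ for some $b \in F_2$ and $i_2 \in I$; then $a \vee b \leq i_1 \vee i_2 \leq \bigvee I = o$, using that $I \subseteq A$ is closed under binary joins, whence $x \leq a \vee b \leq o$, contradicting the choice of $o$. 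This establishes $x = k_1 \vee k_2 \in K(A^\delta)$; the assertion $o_1 \wedge o_2 \in O(A^\delta)$ then follows by the order-dual argument, via the identity $o_1 \wedge o_2 = \bigvee\{\, a \wedge b \mid a \in I_1,\ b \in I_2 \,\}$ with $o_i = \bigvee I_i$, applying compactness to the clopen elements lying above $o_1 \wedge o_2$. I expect the compactness step that bypasses the lack of distributivity in $A^\delta$ to be the crux; everything else is routine.
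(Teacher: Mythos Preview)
The paper does not supply its own proof, citing instead \cite[Proposition 2.6]{de2024obligations2} and, for item~(iv), \cite[Lemma 3.2]{gehrke2001bounded}. Your argument is correct and is exactly the standard one from those sources; the only slip is the parenthetical claim that $I$ is closed under binary joins, which is neither given nor needed---the inequality $i_1 \vee i_2 \leq \bigvee I$ holds simply because each $i_j \in I$ lies below $\bigvee I$.
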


\begin{proposition}[cf.~\cite{de2024obligations2}, Proposition 2.7]
    \label{prop: compactness and existential ackermann}
 For any  lattice $A$, 
\begin{enumerate}
   \item for any $b\in A$,
   $k_1, k_2 \in K(A^\delta)$, and $o \in O(A^\delta)$, 
    \begin{enumerate}[label=(\roman*)]
        \item $ k_1\wedge k_2 \leq b$ implies $a_1\wedge a_2\leq b$ for some $a_1, a_2 \in A$ s.t.~$ k_i \leq a_i$;
         \item $ k_1\wedge k_2 \leq o$ implies $a_1\wedge a_2\leq b$ for some $a_1, a_2 ,b \in A$ s.t.~$ k_i \leq a_i$ and $b\leq o$;
             \item 
             $\bigwedge K\in K(A^\delta)$ for every  $K\subseteq K(A^\delta)$.
    \end{enumerate}
    \item for any $a\in A$,
   $o_1, o_2 \in O(A^\delta)$, and $k \in K(A^\delta)$, 
    \begin{enumerate}[label=(\roman*)]
        \item $a\leq o_1\vee o_2 $ implies $a \leq b_1\vee b_2$ for some $b_1, b_2 \in A$ s.t.~$  b_i\leq o_i$;
         \item $ k \leq o_1\vee o_2 $ implies $a \leq b_1\vee b_2$ for some $a, b_1, b_2 \in A$ s.t.~$ b_i \leq o_i$ and $ k \leq a$.
         \item 
         $\bigvee O\in O(A^\delta)$ for every  $O\subseteq O(A^\delta)$.
    \end{enumerate}

    \end{enumerate}
\end{proposition}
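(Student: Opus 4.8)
The plan is to reduce every item of the proposition to three ingredients already available: the compactness of $A$ in $A^\delta$ (item~3 of the Definition), the fact that $A$, being a sublattice of $A^\delta$, is closed under finite meets and joins, and the defining representations $k=\bigwedge F$ and $o=\bigvee J$ of closed and open elements by nonempty subsets of $A$. Part~(2) will then follow from part~(1) by running the same arguments in the dual lattice $A^\partial$, whose canonical extension is $(A^\delta)^\partial$ and in which the roles of $K(A^\delta)$ and $O(A^\delta)$ are interchanged.

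First I would dispatch the ``limit'' items 1(iii) and 2(iii), which are essentially immediate from the definition. For a nonempty family $\{k_i\}_{i\in I}\subseteq K(A^\delta)$, writing $k_i=\bigwedge F_i$ with $\emptyset\neq F_i\subseteq A$, one has $\bigwedge_{i\in I}k_i=\bigwedge\bigl(\bigcup_{i\in I}F_i\bigr)$ and $\bigcup_{i\in I}F_i$ is a nonempty subset of $A$, so $\bigwedge_{i\in I}k_i\in K(A^\delta)$; for $I=\emptyset$ one uses that a bounded lattice and its canonical extension share the top element, so $\bigwedge\emptyset=1_{A^\delta}=1_A\in A\subseteq K(A^\delta)$. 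Item 2(iii) is the order-dual statement.

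The heart of the matter is 1(ii), from which 1(i) follows as a special case. Given $k_1,k_2\in K(A^\delta)$ and $o\in O(A^\delta)$ with $k_1\wedge k_2\leq o$, I would fix $\emptyset\neq F_i\subseteq A$ with $k_i=\bigwedge F_i$ and $\emptyset\neq J\subseteq A$ with $o=\bigvee J$, and consider $F:=\{f_1\wedge f_2\mid f_1\in F_1,\ f_2\in F_2\}$, which is a nonempty subset of $A$ (closure of $A$ under $\wedge$) with $\bigwedge F=(\bigwedge F_1)\wedge(\bigwedge F_2)=k_1\wedge k_2$. Then $\bigwedge F\leq\bigvee J$, and compactness yields $f_1\in F_1$, $f_2\in F_2$, $b\in J$ with $f_1\wedge f_2\leq b$; taking $a_i:=f_i$ gives $k_i=\bigwedge F_i\leq a_i$, $b\leq\bigvee J=o$ and $a_1\wedge a_2\leq b$, as required. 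For 1(i), apply 1(ii) with $o:=b$ (a clopen, hence open, element of $A^\delta$): it returns $a_1,a_2,b'\in A$ with $k_i\leq a_i$, $b'\leq b$ and $a_1\wedge a_2\leq b'\leq b$, so $a_1,a_2$ are the desired witnesses. For part~(2), the same argument in $A^\partial$ gives: in 2(i), writing $a=\bigwedge\{a\}$ and $o_i=\bigvee J_i$, compactness applied to the nonempty set $\{j_1\vee j_2\mid j_i\in J_i\}$ on the left of $a\leq o_1\vee o_2$ produces $b_i:=j_i\in J_i\subseteq A$ with $a\leq b_1\vee b_2$ and $b_i\leq o_i$; in 2(ii) one additionally uses $k=\bigwedge F$ to extract a witness $a\in F\subseteq A$ with $k\leq a$ and $a\leq b_1\vee b_2$.

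I do not expect a genuine obstacle. The only points requiring a little care are: (a) verifying that $F=\{f_1\wedge f_2\}$ really has meet $k_1\wedge k_2$ (using infinite distributivity of $\wedge$ over arbitrary meets, which holds in any complete lattice), so that compactness applies directly, without any preliminary replacement of the $F_i$ by their closures under finite meets; and (b) the boundary case of empty index sets in the (iii) items, which relies on the standard fact (recorded in the background material adapted from \cite{de2020subordination}) that canonical extensions of bounded lattices preserve top and bottom. Everything else is a direct unwinding of the definitions of \emph{closed}, \emph{open} and \emph{compact}.
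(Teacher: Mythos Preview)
The paper does not give its own proof of this proposition: it is imported verbatim from \cite{de2024obligations2} (Proposition 2.7) as a preliminary fact, so there is nothing in the present paper to compare against. Your argument is the natural one and is correct: items 1(iii)/2(iii) follow directly from the representation of closed/open elements as meets/joins of nonempty subsets of $A$ (plus the preservation of bounds for the empty case), and 1(i)--(ii) and their duals are exactly the unwinding of compactness via the auxiliary set $F=\{f_1\wedge f_2\mid f_i\in F_i\}$, whose meet you correctly identify with $k_1\wedge k_2$. One small caveat on your remark (a): the compactness clause as literally stated in the paper (``for all nonempty $F,I\subseteq A$ \ldots some $a\in F$, $b\in I$'') is the strong single-element form, which makes your argument go through without directedness; under the more standard formulation (e.g.\ \cite{DuGePa05}) one does need $F$ down-directed, and then the ``preliminary replacement of the $F_i$ by their closures under finite meets'' that you dismiss is exactly what makes your $F$ down-directed. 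Either way the proof is routine, but it is worth being explicit about which version of compactness you are invoking.
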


 \subsection{Subordination and (dual) precontact  relations on bounded lattices}
\label{ssec:subordination precontact}

Let $A$ be a bounded lattice. A binary relation $\pcon$ on the domain of $A$ is a {\em precontact relation} \cite{dimov2005topological, duntsch2007region} if, for all $a, b, c\in A$,
 \begin{enumerate}
\item[(C1)] $a \pcon b$ implies $a, b \neq\bot$;
\item[(C2)] $a \pcon (b \vee c)$ iff  $a \pcon b$ or $a \pcon c$;
\item[(C3)] $(a \vee b) \pcon  c$ iff  $a \pcon c$ or $b \pcon c$.
\end{enumerate}
A binary relation $\prec$ on the domain of $A$ is a {\em subordination relation} \cite{dimov2005topological, duntsch2007region} if, for all $a, b,  c, d\in A$,
 \begin{enumerate}
\item[(S1)] $\bot\prec\bot$ and $\top\prec\top$;
\item[(S2)] if $a \prec b$ and $a\prec c$ then $a \prec b \wedge c$;
\item[(S3)]  if $a \prec c$ and $b\prec c$ then $a\vee b \prec  c$;
\item[(S4)] if $a \leq b \prec c \leq d$ then $a \prec d$.
\end{enumerate}

A binary relation $\ppcon$ on the domain of $A$ is a {\em dual precontact relation} \cite[Section 2.3]{de2024obligations2} if, for all $a, b, c\in A$,
 \begin{enumerate}
\item[(D1)] $a \ppcon b$ implies $a, b \neq\top$;
\item[(D2)] $a \ppcon (b \wedge c)$ iff  $a \ppcon b$ or $a \ppcon c$;
\item[(D3)] $(a \wedge b) \ppcon  c$ iff  $a \ppcon c$ or $b \ppcon c$.
\end{enumerate}
The following definition introduces the first of the two main semantic environments of the present paper.
  \begin{definition}[\cite{de2024obligations2}, Footnote 16]
  \label{def:spd-algebra}
      An {\em spd-algebra} is a structure $\mathbb{H} = (A, \mathsf{S}, \mathsf{C}, \mathsf{D})$ s.t.~$A$ is an  LE-algebra\footnote{\label{ftn:LE-algebra}An {\em LE-algebra} (cf.~\cite[Definition 1.3]{conradie2019algorithmic}) is a structure $A = (L, \mathcal{F}_A, \mathcal{G}_A)$ s.t.~$L$ is a bounded lattice, and $\mathcal{F}_A$ and $ \mathcal{G}_A$ are finite sets of operations on $L$, s.t.~each $f\in\mathcal{F}_A$ (resp.~$g\in\mathcal{G}_A$) is endowed with an arity $n_f$ (resp.~$n_g$) and an order-type $\epsilon_f: \{1,\ldots ,n_f\}\to \{1, \partial\}$ (resp.~$\epsilon_g:\{1,\ldots , n_g\}\to \{1, \partial\}$), and is finitely join-preserving (resp.~finitely meet-preserving) if $\epsilon_f(i) = 1$ (resp.~$\epsilon_g(i) = 1$), and is  finitely meet-reversing (resp.~finitely join-reversing) if $\epsilon_f(i) = \partial$ (resp.~$\epsilon_g(i) = \partial$). E.g., a Boolean algebra is a distributive LE-algebra with $\mathcal{F}_A = \mathcal{G}_A = \{\neg\}$ with $n_{\neg} = 1$ and $\varepsilon_{\neg} = \partial$; a bi-Heyting algebra is a distributive LE-algebra with $\mathcal{F}_A = \{\pdla\}$, $ \mathcal{G}_A = \{\rightarrow\}$ with $n_{{\tiny\pdla}} = n_{\rightarrow} = 2$, $\varepsilon_{\rightarrow}(1) = \varepsilon_{{\tiny{\pdla}}} (2) = \partial$ and $\varepsilon_{\rightarrow}(2) = \varepsilon_{{\tiny \pdla}} (1) = 1$; a Heyting (resp.~co-Heyting) algebra is an LE-algebra of the $\pdla$-free (resp.~$\rightarrow$-free) fragment of the bi-Heyting algebra similarity type.}, and $\mathsf{S}$, $\mathsf{C}$, and $\mathsf{D}$ are finite families of subordination, precontact, and dual precontact relations on $A$, respectively. The {\em spd-type} of $\mathbb{H}$ is the triple of natural numbers $(|\mathsf{S}|, |\mathsf{C}|, |\mathsf{D}|)$.
  \end{definition}
  Spd-algebras form an environment in which normative, permission, and dual permission systems can be represented as subordination, precontact, and dual precontact relations, respectively. While these notions  are interdefinable in the setting of Boolean algebras\footnote{For instance,   any subordination relation $\prec$ on a Boolean algebra $A$ gives rise to a precontact relation $\pcon_{\prec} $ defined as $a\pcon_{\prec} b$  iff $a\not\prec\neg b$, and a dual precontact relation  $\ppcon_{\prec} $ defined as $a\ppcon_{\prec} b$  iff $\neg a\not\prec b$; for an expanded discussion, see \cite[Section 2.3]{de2024obligations2}.}, in the present, more general setting they all need to be regarded as primitive, and their inter-relations to be described via suitable axioms.
  In what follows, we will  consider spd-algebras based on distributive lattices, Boolean, Heyting, co-Heyting, and bi-Heyting algebras, De Morgan algebras and normal modal expansions thereof.

  We will refer to the operations on the domain algebra of a given spd-algebra as {\em standard},  {\em non-slanted}, or {\em clopen} (this terminology will be motivated in Section \ref{ssec:slanted spd}).
  \begin{definition}\label{def:fo-language-spd} For any spd-type $(n_\mathsf{S}, n_\mathsf{C}, n_\mathsf{D})$, and any variety of algebras $\mathsf{K}$,
      the first order language $\mathcal{L}^{\mathrm{FO}}_{spd} = \mathcal{L}^{\mathrm{FO}}_{spd} (|\mathsf{S}|, |\mathsf{C}|, |\mathsf{D}|, \mathsf{K})$ is determined by the following non-logical symbols:
      \begin{enumerate}
          \item function symbols (including constants) parametric in the signature of $\mathsf{K}$, each with corresponding arity;
          \item binary relation symbols parametric in the given spd-type.
      \end{enumerate}
  \end{definition}
  In the context of the theory developed in the present paper, the first order language above serves as the `frame correspondence' language of the modal language discussed in Section \ref{ssec:LE-logic}.

\subsection{Slanted algebras associated with spd-algebras}
\label{ssec:slanted spd}
In the previous section, we introduced the  `relational semantics' of  the modal language discussed in Section \ref{ssec:LE-logic},  while the structures introduced in the present section will serve as its algebraic semantics.
\begin{definition}
 For any  spd-algebra $\mathbb{H} = (A, \mathsf{S}, \mathsf{C}, \mathsf{D})$, the {\em slanted algebra} associated with $\mathbb{H}$ is    $\mathbb{H}^\ast = (A, \mathcal{F}^{\mathbb{H}^\ast}, \mathcal{G}^{\mathbb{H}^\ast})$ s.t.~$\mathcal{F}^{\mathbb{H}^\ast}: = \mathcal{F}_{A}\cup\mathcal{F}_{\mathsf{S}}\cup \mathcal{F}_{\mathsf{D}}$ and $\mathcal{G}^{\mathbb{H}^\ast}: = \mathcal{G}_{A}\cup\mathcal{G}_{\mathsf{S}}\cup \mathcal{G}_{\mathsf{C}}$, where $\mathcal{L}(\mathcal{F}_{A},\mathcal{G}_{A})$ is the (possibly empty) signature of the LE-algebra $A$, and moreover,
 \begin{center}
 \begin{tabular}{ll}
 $\mathcal{F}_{\mathsf{S}}: = \{\Diamond_{\prec}\mid {\prec}\in \mathsf{S}\}$ &
 $ \mathcal{G}_{\mathsf{S}}: = \{\blacksquare_{\prec}\mid {\prec}\in \mathsf{S}\}$\\
 $\mathcal{F}_{\mathsf{D}} : = \{\tw_{\mathcal{D}}, \tb_{\mathcal{D}}\mid \mathcal{D}\in \mathsf{D}\}$
 &  $\mathcal{G}_{\mathsf{C}} : = \{\wt_{\mathcal{C}}, \bt_{\mathcal{C}}\mid \mathcal{C}\in \mathsf{C}\}$
 \end{tabular}
 \end{center}
 and the elements in the sets displayed above are unary maps $A\to A^\delta$, defined by the assignments indicated in the table below, for every ${\prec}\in\mathsf{S}$, every $\pcon\in \mathsf{C}$, and every $\ppcon\in \mathsf{D}$.
 \begin{center}
     \begin{tabular}{|c|l||c|l|}
     \hline
     $\Diamond_{\prec} $ & $a\mapsto\bigwedge {\prec}[a]: = \bigwedge \{b\mid a\prec b\}\in K(A^\delta)$
        &
     $\blacksquare_{\prec} $ & $a\mapsto\bigvee {\prec}^{-1}[a]: = \bigvee \{b\mid b\prec a\}\in O(A^\delta)$\\
      \hline
       $\wt_{\pcon}$   & $a\mapsto\bigvee ({\pcon}[a])^c: = \bigvee \{b\mid a\npcon b\}\in O(A^\delta)$ &
        $\bt_{\pcon}$  & $a\mapsto\bigvee ({\pcon}^{-1}[a])^c: = \bigvee \{b\mid b\npcon a\}\in O(A^\delta)$\\
         \hline
         $\tw_{\ppcon}$   & $a\mapsto\bigwedge ({\ppcon}[a])^c: = \bigwedge \{b\mid a\nppcon b\}\in K(A^\delta)$ &
        $\tb_{\ppcon}$  & $a\mapsto\bigwedge ({\ppcon}^{-1}[a])^c: = \bigwedge \{b\mid b\nppcon a\}\in K(A^\delta)$\\
         \hline
     \end{tabular}
 \end{center}
\end{definition}
In what follows, we will typically omit the subscripts, and write e.g.~$\tw$ for any $\tw_{\ppcon}$ s.t.~$\ppcon\in\mathsf{D}$. Slanted algebras were defined in \cite{de2021slanted} for arbitrary LE-signatures. However, in what follows, when referring to slanted algebras, we will understand those with a similarity type compatible with those of the definition above. From the definitions above, it immediately follows that, for every spd-algebra $\mathbb{H}$ and all $a, b, \in A$:
\begin{center}
\begin{tabular}{c c ccc}
   $\Diamond a\leq b$  & iff & $a\prec b$ & iff & $a\leq \blacksquare b$ \\
   $b \leq \wt a$  & iff & $a\npcon b$ & iff & $a\leq \bt b$ \\
     $\tw b \leq  a$  & iff & $a\nppcon b$ & iff & $\tb a\leq  b$ \\
\end{tabular}
\end{center}
\begin{definition}
\label{def: sigma and pi extensions of slanted}
(cf.~\cite[Definition 1.10]{de2020subordination})
For any spd-algebra $\mathbb{H} = (A, \mathsf{S}, \mathsf{C}, \mathsf{D})$, the {\em canonical extension} of the slanted algebra $\mathbb{H}^\ast$ is the (standard!) LE-algebra $\mathbb{H}^\delta: = (A^\delta, \mathcal{F}^{\mathbb{H}^\delta}, \mathcal{G}^{\mathbb{H}^\delta})$ s.t.~$\mathcal{F}^{\mathbb{H}^\delta}: = \mathcal{F}_A\cup \mathcal{F}_{\mathsf{S}}\cup \mathcal{F}_{\mathsf{D}}$ and $\mathcal{G}^{\mathbb{H}^\delta}: = \mathcal{G}_A\cup \mathcal{G}_{\mathsf{S}}\cup \mathcal{G}_{\mathsf{C}}$, and
\begin{center}
 \begin{tabular}{ll}
 $\mathcal{F}_{A}: = \{f^\sigma\mid f\in \mathcal{F}_A\}$ &
 $ \mathcal{G}_{\mathsf{S}}: = \{g^\pi\mid g\in \mathcal{G}_A\}$\\
 $\mathcal{F}_{\mathsf{S}}: = \{\Diamond_{\prec}^\sigma\mid {\prec}\in \mathsf{S}\}$ &
 $ \mathcal{G}_{\mathsf{S}}: = \{\blacksquare_{\prec}^\pi\mid {\prec}\in \mathsf{S}\}$\\
 $\mathcal{F}_{\mathsf{D}} : = \{\tw_{\mathcal{D}}^\sigma, \tb_{\mathcal{D}}^\sigma\mid \mathcal{D}\in \mathsf{D}\}$
 &  $\mathcal{G}_{\mathsf{C}} : = \{\wt_{\mathcal{C}}^\pi, \bt_{\mathcal{C}}^\pi\mid \mathcal{C}\in \mathsf{C}\}$
 \end{tabular}
 \end{center}
where (omitting subscripts) $f^\sigma: (A^\delta)^{\epsilon_f}\to A^\delta$, $g^\pi: (A^\delta)^{\epsilon_g}\to A^\delta$ are defined as in \cite[Definition 1.9]{conradie2019algorithmic}, and $\Diamond^\sigma,  \blacksquare^\pi, \wt^\pi, \bt^\pi, \tw^\sigma, \tb^\sigma: A^\delta\to A^\delta$ are defined as follows:  for every $k\in K(A^\delta)$, $o\in O(A^\delta)$ and $u\in A^\delta$,
\[\Diamond^\sigma k:= \bigwedge\{ \Diamond a\mid a\in A\mbox{ and } k\leq a\}\quad \Diamond^\sigma u:= \bigvee\{ \Diamond^\sigma k\mid k\in K(A^\delta)\mbox{ and } k\leq u\}\]
\[\blacksquare^\pi o:= \bigvee\{ \blacksquare a\mid a\in A\mbox{ and } a\leq o\},\quad \blacksquare^\pi u:= \bigwedge\{ \blacksquare^\pi o\mid o\in O(A^\delta)\mbox{ and } u\leq o\}\]
\[\wt^\pi k:= \bigvee\{ \wt a\mid a\in A\mbox{ and } k\leq a\},\quad \wt^\pi u:= \bigwedge\{ \wt^\pi k\mid k\in K(A^\delta)\mbox{ and } k\leq u\}\]
\[\bt^\pi k:= \bigvee\{ \bt a\mid a\in A\mbox{ and } k\leq a\},\quad \bt^\pi u:= \bigwedge\{ \bt^\pi k\mid k\in K(A^\delta)\mbox{ and } k\leq u\}.\]
\[\tw^\sigma o:= \bigwedge\{ \tw a\mid a\in A\mbox{ and }  a \leq o \},\quad \tw^\sigma u:= \bigvee\{ \tw^\sigma o\mid o\in O(A^\delta)\mbox{ and }  u\leq o \}\]
\[\tb^\sigma o:= \bigwedge\{ \tb a\mid a\in A\mbox{ and } a\leq o\},\quad \tb^\sigma u:= \bigvee\{ \tb^\sigma o\mid o\in O(A^\delta)\mbox{ and } u\leq o\}.\]
\end{definition}
In what follows, $\overline a$ and $\overline b$ denote vectors of elements in $A$, and $\overline k$ and $\overline o$ denote  vectors of elements in $K(A^\delta)$ and $O(A^\delta)$, respectively. Moreover, for any $f\in \mathcal{F}$ (resp.~$g\in \mathcal{G}$) which is positive in each $x$-coordinate and negative in each $y$-coordinate (each coordinate being indicated by a placeholder variable, cf.~Section \ref{ssec:analytic-LE-axioms}), we write $f^\sigma[\overline{k}/!\overline x, \overline{o}/!\overline y]$ (resp.~$g^\pi[\overline{o}/!\overline x, \overline{k}/!\overline y ]$) to indicate that the positive coordinates of $f$ (resp.~$g$) have been instantiated with closed (resp.~open) elements, and the negative ones with open (resp.~closed) ones.
\begin{lemma}[cf.~\cite{de2021slanted} Lemma 3.5]
\label{lemma: distribution properties of sigma pi}
   For every spd-algebra $\mathbb{H}$, any  $f\in \mathcal{F}^{\mathbb{H}^\ast}$, and
any  $g\in \mathcal{G}^{\mathbb{H}^\ast}$ which are positive in each $x$-coordinate and negative in each $y$-coordinate,
\begin{enumerate}
    \item  $f^\sigma$ is completely join-preserving in each positive coordinate, and  completely meet-reversing in each negative coordinate. Moreover, $f^\sigma[\overline{k}/!\overline x, \overline{o}/!\overline y]\in K(A^\delta)$.
    \item  $g^\pi$ is completely meet-preserving in each positive coordinate, and is completely join-reversing in each negative coordinate. Moreover, $g^\pi[\overline{o}/!\overline x, \overline{k}/!\overline y ]\in O(A^\delta)$.
\end{enumerate}
\end{lemma}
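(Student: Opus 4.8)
The plan is to prove item (1) and obtain item (2) by order-dual reasoning (swapping $K(A^\delta)$ with $O(A^\delta)$, joins with meets, $\sigma$ with $\pi$, and reversing the lattice order; the slanted operators come in order-dual pairs precisely so that this duality applies). So I focus on (1). The key observation is that $\mathcal{F}^{\mathbb{H}^\ast}$ decomposes into the standard part $\mathcal{F}_A$ and the genuinely slanted maps $\mathcal{F}_{\mathsf{S}}\cup\mathcal{F}_{\mathsf{D}} = \{\Diamond_\prec\}\cup\{\tw_{\ppcon},\tb_{\ppcon}\}$, which are all \emph{unary}. For the standard connectives $f\in\mathcal{F}_A$, the statement is exactly \cite[Lemma 3.5(1)]{de2021slanted} (equivalently the bounded-lattice-expansion computations of \cite[Section 1]{conradie2019algorithmic}), so I would simply cite it. The work is therefore to verify the two claimed properties — complete join-/meet-preservation and the closedness $f^\sigma[\overline k/!\overline x,\overline o/!\overline y]\in K(A^\delta)$ — for each of the three unary slanted operators, and finally to explain how composites (i.e.\ arbitrary $f\in\mathcal{F}^{\mathbb{H}^\ast}$ built by the LE-signature together with these unary maps) inherit the property.

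First I would treat $\Diamond = \Diamond_\prec$, which is positive in its single coordinate. Its $\sigma$-extension is defined in two stages: on closed elements $\Diamond^\sigma k = \bigwedge\{\Diamond a\mid a\in A,\ k\leq a\}$, and then $\Diamond^\sigma u = \bigvee\{\Diamond^\sigma k\mid k\in K(A^\delta),\ k\leq u\}$. For the closedness claim: since $\Diamond a\in K(A^\delta)$ for every clopen $a$ (by the definition of the slanted algebra, $\Diamond a = \bigwedge\prec[a]$), and since by Proposition \ref{prop: compactness and existential ackermann}(1)(iii) arbitrary meets of closed elements are closed, $\Diamond^\sigma k$ is a meet of closed elements, hence closed. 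For complete join-preservation: let $u = \bigvee_{i\in I}u_i$ in $A^\delta$. The inequality $\bigvee_i\Diamond^\sigma u_i\leq \Diamond^\sigma u$ is monotonicity, which is immediate from the defining formulas. For the converse, $\Diamond^\sigma u = \bigvee\{\Diamond^\sigma k\mid k\leq u\}$, so it suffices to show that every closed $k\leq u = \bigvee_i u_i$ satisfies $\Diamond^\sigma k\leq\bigvee_i\Diamond^\sigma u_i$; here I would use compactness of $A$ in $A^\delta$ together with Proposition \ref{prop:background can-ext}(iii) — writing each $u_i$ as a join of closed elements, a closed $k$ below their total join lies below a finite subjoin, and then one reduces to the behaviour of $\Diamond^\sigma$ on closed elements and finitely many joins of closed elements, using that $\Diamond^\sigma$ restricted to closed elements preserves the existing joins by the Ackermann-type clauses in Proposition \ref{prop: compactness and existential ackermann}. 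This is the standard argument underlying $\sigma$-extensions of join-preserving maps; the only extra ingredient over the Boolean case is that one works with closed elements and the ``esakia-style'' lemmas of Proposition \ref{prop: compactness and existential ackermann} rather than with points.

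Next I would handle $\tw = \tw_{\ppcon}$ and $\tb = \tb_{\ppcon}$. These are the subtle cases, and I expect them to be the main obstacle, because of the \emph{order-reversal}: from the biconditional $\tw b\leq a \iff a\nppcon b \iff \tb a\leq b$ displayed before Definition \ref{def: sigma and pi extensions of slanted}, one reads off that $\tw$ is order-\emph{reversing} in its argument, i.e.\ it is a map that is negative in its coordinate, even though it is listed in $\mathcal{F}$ and takes values in $K(A^\delta)$. So for $\tw$ the relevant claim of item (1) is that $\tw^\sigma$ is completely meet-\emph{reversing} in that (negative) coordinate, i.e.\ $\tw^\sigma(\bigwedge_i u_i) = \bigvee_i \tw^\sigma u_i$, and that $\tw^\sigma[o/!y]\in K(A^\delta)$ when a \emph{open} element is plugged into the negative coordinate. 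The verification mirrors the $\Diamond$ case but with the roles of $K$ and $O$ and of meets and joins systematically swapped in the input: $\tw^\sigma o = \bigwedge\{\tw a\mid a\in A,\ a\leq o\}$ is a meet of the closed elements $\tw a$ (closed because $\tw a = \bigwedge(\ppcon[a])^c$), hence closed by Proposition \ref{prop: compactness and existential ackermann}(1)(iii), giving the closedness claim; and complete meet-reversal follows by the dual Ackermann argument, now using Proposition \ref{prop: compactness and existential ackermann}(2) (the open/join side) to turn a meet of $u_i$'s — approximated from above by open elements — into the join of the corresponding values, together with the fact that $\tw$ sends finite meets of clopens to joins by axiom (D3)/(D2) for $\ppcon$. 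The same argument applies verbatim to $\tb$ using (D2)/(D3) respectively.

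Finally, for a general $f\in\mathcal{F}^{\mathbb{H}^\ast}$ — which by the similarity type is a composite whose outermost layer is either an $\mathcal{F}_A$-operation, a $\Diamond_\prec$, or a $\tw_{\ppcon}/\tb_{\ppcon}$, applied to coordinates that are themselves such terms — the distribution properties and the closedness of $f^\sigma[\overline k/!\overline x,\overline o/!\overline y]$ follow by induction on term structure: complete join-preservation in positive coordinates and complete meet-reversal in negative coordinates compose in the expected way (a positive coordinate fed by a subterm that is itself completely join-preserving in its positive inputs and meet-reversing in its negative inputs stays so, and likewise a negative coordinate flips the polarities), and at the base one has the projections. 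The closedness statement propagates because substituting closed elements into positive coordinates and open elements into negative ones of an innermost slanted/standard operator yields, by the base cases just proved (and by Lemma 3.5 for $\mathcal{F}_A$-operations), a closed element, which is then an admissible ``closed'' input to the next layer up. The one point to state carefully is that this inductive bookkeeping is consistent with the convention, fixed just before the lemma, that $f^\sigma[\overline k/!\overline x,\overline o/!\overline y]$ means closed elements go into $x$-coordinates and open elements into $y$-coordinates — which is exactly compatible with how the polarities of $\tw,\tb$ (negative) and $\Diamond$ (positive) interact under composition.
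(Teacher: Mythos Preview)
The paper does not give a proof of this lemma; it simply cites \cite[Lemma 3.5]{de2021slanted}. Your case analysis for the individual slanted operators $\Diamond_\prec$, $\tw_{\ppcon}$, $\tb_{\ppcon}$ is sound and is essentially how one would unfold the cited reference in the present setting: the closedness of $\Diamond^\sigma k$ and $\tw^\sigma o$ follows from Proposition~\ref{prop: compactness and existential ackermann}(1)(iii), and the complete join-preservation / meet-reversal arguments via compactness and approximation by closed/open elements are the standard ones.

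There is one misreading, however. The set $\mathcal{F}^{\mathbb{H}^\ast}$ is defined as $\mathcal{F}_A\cup\mathcal{F}_{\mathsf{S}}\cup\mathcal{F}_{\mathsf{D}}$, i.e.\ a set of \emph{individual connectives}, not of composite term functions. So once you have handled the connectives in $\mathcal{F}_A$ (by citation) and the unary slanted maps $\Diamond$, $\tw$, $\tb$, you are done: there is no ``general $f$'' left to treat, and the inductive argument on term structure in your final paragraph is not part of this lemma. That inductive step is in fact the content of Proposition~\ref{prop: definite skeleton implies preservation properties}, which \emph{uses} the present lemma as its base case. Your last paragraph is therefore not wrong as mathematics, but it belongs to a different statement and should be removed from this proof.
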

In what follows, we will also omit the superscripts, and rely on the argument for disambiguation. 
%

\begin{lemma}[cf.~\cite{de2024obligations2}, Lemma 4.7]
\label{lem: diamond-output equivalence extended}
For any  spd-algebra $\mathbb{H} = (A, \mathsf{S}, \mathsf{C}, \mathsf{D})$, for all $a, b \in A$, and all $\overline k$ and $ \overline o$,  
\begin{enumerate}
    \item for any $f \in \mathcal{F}$, which is positive in each $x$-variable and negative in each $y$-variable,
    \begin{enumerate}[label=(\roman*)]
        \item $f[\overline k/!\overline x, \overline o/!\overline y]\leq b$ implies $f[\overline a/!\overline x, \overline b/!\overline y]\leq b$ for some $\overline a\geq \overline k$ and some $\overline b\leq \overline o$;
        \item $f[\overline k/!\overline x, \overline o/!\overline y]\leq o$ implies $f[\overline a/!\overline x, \overline b/!\overline y]\leq b$ for some $\overline a\geq \overline k$ some $\overline b\leq \overline o$, and some $b \leq o$;
    \end{enumerate}
    \item for any $g \in \mathcal{G}$, which is positive in each $x$-variable and negative in each $y$-variable,
    \begin{enumerate}[label=(\roman*)]
        \item $a \leq g[\overline o/!\overline x, \overline k/!\overline y]$ implies $a \leq g[\overline b/!\overline x, \overline a/!\overline y]$  for some $\overline a\geq \overline k$ and some $\overline b\leq \overline o$;
        \item $k \leq g[\overline o/!\overline x, \overline k/!\overline y]$ implies $a \leq g[\overline b/!\overline x, \overline a/!\overline y]$  for some $\overline a\geq \overline k$ some $\overline b\leq \overline o$, and some $a \geq k$.
    \end{enumerate}
\end{enumerate}
\end{lemma}
\subsection{The  LE-logic of spd-algebras}
\label{ssec:LE-logic}
For any spd-type $(n_{\mathsf{S}}, n_{\mathsf{C}},n_{\mathsf{D}})$,
the language $\mathcal{L}_\mathrm{LE}(\mathcal{F}, \mathcal{G})$ (from now on abbreviated as $\mathcal{L}_\mathrm{LE}$ or $\mathcal{L}$) takes as parameters: a denumerable set of proposition letters $\mathsf{Prop}$, elements of which are denoted $p,q,r$, possibly with indexes, and disjoint sets of connectives $\mathcal{F}: = \mathcal{F}_A\cup \mathcal{F}_{\mathsf{S}}\cup \mathcal{F}_{\mathsf{D}}$ and $\mathcal{G}: = \mathcal{G}_A\cup\mathcal{G}_{\mathsf{S}}\cup \mathcal{G}_{\mathsf{C}}$ s.t.~$|\mathcal{F}_{\mathsf{S}}| = |\mathcal{G}_{\mathsf{S}}| = n_{\mathsf{S}}$, while $|\mathcal{F}_{\mathsf{D}}| = 2n_{\mathsf{D}}$ and $|\mathcal{G}_{\mathsf{C}}| = 2n_{\mathsf{C}}$. Each $f\in \mathcal{F}_{A}$ and $g\in \mathcal{G}_{A}$ is as indicated in Footnote \ref{ftn:LE-algebra}. Each $f\in \mathcal{F}_{\mathsf{S}}\cup \mathcal{F}_{\mathsf{D}}$ and $g\in \mathcal{G}_{\mathsf{S}}\cup \mathcal{G}_{\mathsf{C}}$ is unary and is associated with some order-type $\varepsilon_f\in \{1, \partial\}$  (resp.~$\varepsilon_g\in \{1, \partial\}$).
Specifically, any $f \in \mathcal{F}_{\mathsf{S}}$ (resp.~$g \in \mathcal{G}_{\mathsf{S}}$) has order-type 1 and is  denoted  $\Diamond$ (resp.~$\blacksquare$). The order type of all the remaining connectives is $\partial$. Connectives in $\mathcal{G}_{\mathsf{C}}$ (resp.~$\mathcal{F}_{\mathsf{D}}$) come in pairs $\wt, \bt$ (resp.~$\tw, \tb$).  The terms (formulas) of $\mathcal{L}_\mathrm{LE}$ are defined recursively as follows:
\[
\varphi ::= p \mid \bot \mid \top \mid \varphi \wedge \varphi \mid \varphi \vee \varphi \mid f(\overline \varphi) \mid g(\overline \varphi)
\]
where $p \in \mathsf{Prop}$, $f\in\mathcal{F}$ and $g\in\mathcal{G}$. Terms in $\mathcal{L}_\mathrm{LE}$ are denoted  by lowercase Greek letters $\varphi, \psi, \gamma$. Terms in the $\mathcal{L}(\mathcal{F}_A, \mathcal{G}_A)$-fragment of $\mathcal{L}_{\mathrm{LE}}$ are  referred to as {\em clopen} terms, and are denoted  by  lowercase letters $s,t$.

\medskip
The {\em basic}, or {\em minimal tense} $\mathcal{L}_\mathrm{LE}$-{\em logic} is a set $\mathbf{L}_\mathrm{LE}$ of $\mathcal{L}_\mathrm{LE}$-sequents $\varphi\vdash\psi$,  which contains  the following sequents:\footnote{\label{ftn:distributive}If the propositional base is {\em distributive}, the minimal {\em DLE-logic} also contains the sequent $p\wedge (q\vee r)\vdash (p\wedge q)\vee (p\wedge r)$. For the sake of a more compact presentation, we let $x \vee^{\epsilon_i}y\coloneqq x\vee y$ if $\epsilon_i = 1$ and $x\vee^{\epsilon_i}y\coloneqq x\wedge y$ if $\epsilon_i = \partial$, and similarly for $\wedge^{\epsilon_i}$, $\top^{\epsilon_i}$ and $\bot^{\epsilon_i}$; we also write $\varphi\leq^{\epsilon_i}\psi$ for $\varphi\leq\psi$ if $\epsilon_i = 1$ and for  $\psi\leq\varphi$ if $\epsilon_i = \partial$, and similarly for $\varphi\vdash^{\epsilon_i}\psi$. Moreover, we write e.g.~$f[x\vee^{\epsilon_f(i)}y]_i$ to indicate that the value of  $f$ in its $i$th coordinate has been instantiated to $x\vee^{\epsilon_f(i)}y$, while the values of the other coordinates have remained unchanged.}
	\begin{align*}
	&p\vdash p, && \bot\vdash p, && p\vdash \top,&&\\
	&p\vdash p\vee q  && q\vdash p\vee q, && p\wedge q\vdash p, && p\wedge q\vdash q,
	\end{align*}
	\begin{align*}
	f[q\vee^{\epsilon_f(i)} r]_i \vdash f[q]_i\vee f[r]_i && 	
	g[q]_i\wedge g[r]_i\vdash g[q\wedge^{\epsilon_g(i)} r]_i &&
	f[\bot^{\epsilon_f(i)}]_i \vdash \bot &&
	 \top\vdash g[\top^{\epsilon_g(i)}]_i\\	
 \end{align*}
	and is closed under the following inference rules:\footnote{\label{ftn:display or residuation}The rules in the second row are referred to as {\em display} or {\em residuation rules}, and can be applied both top-to-bottom and bottom-to-top. The last two rules hold only for those connectives $f\in \mathcal{F}_A$ (resp.~$g\in \mathcal{G}_A$) whose residuals in their $i$th coordinate, denoted $f_i^\sharp$ (resp.~$g_i^\flat$), are included in the given LE-signature.}
	\begin{displaymath}
	\frac{\varphi\vdash \chi\quad \chi\vdash \psi}{\varphi\vdash \psi}
	\qquad
	\frac{\varphi\vdash \psi}{\varphi(\chi/p)\vdash\psi(\chi/p)}
	\qquad
	\frac{\chi\vdash\varphi\quad \chi\vdash\psi}{\chi\vdash \varphi\wedge\psi}
	\qquad
	\frac{\varphi\vdash\chi\quad \psi\vdash\chi}{\varphi\vee\psi\vdash\chi}
	\qquad
	\frac{\varphi\vdash^{\epsilon_{h}(i)}\psi}{h[\varphi]_i\vdash h[\psi]_i}
	\end{displaymath}

\begin{center}
\begin{tabular}{ccccccccc}
 \AxiomC{$\Diamond \varphi \fCenter \psi$}
\doubleLine
\UnaryInfC{$\varphi\fCenter\blacksquare \psi$}
\DisplayProof
& &
 \AxiomC{$\varphi\fCenter \wt\psi$}
\doubleLine
\UnaryInfC{$\psi\fCenter\bt\varphi$}
\DisplayProof
& &
\AxiomC{$\tw\varphi\fCenter \psi$}
\doubleLine
\UnaryInfC{$\tb\psi\fCenter\varphi$}
\DisplayProof

& &
\AxiomC{$f[\varphi]_i\fCenter \psi$}
\doubleLine
\UnaryInfC{$\varphi \vdash^{\epsilon_f(i)}f_i^\sharp[\psi]_i$}
\DisplayProof

& &
\AxiomC{$\varphi\fCenter g[\psi]_i$}
\doubleLine
\UnaryInfC{$g_i^\flat[\varphi]_i \vdash^{\epsilon_g(i)}\psi$}
\DisplayProof

\end{tabular}
\end{center}
	where $h\in\mathcal{F}\cup\mathcal{G}$, and $\Diamond$ and $\blacksquare$ (resp.~$\wt, \bt$, and $\tw, \tb$) have the same index in $n_\mathsf{S}$ (resp.~$n_\mathsf{C}$, $n_\mathsf{D}$). 
	
	\medskip
	
	 We typically drop reference to the parameters when they are clear from the context. By an {\em $\mathrm{LE}$-logic} in the context of spd-algebras we understand any axiomatic extension of $\mathbf{L}_\mathrm{LE}$ in the language $\mathcal{L}_{\mathrm{LE}}$. If all the axioms in the extension are analytic  (cf.~Section \ref{ssec:analytic-LE-axioms}) we say that the given $\mathrm{LE}$-logic is {\em analytic}.

   For any spd-algebra $\mathbb{H}$, an $\mathcal{L}$-inequality (or sequent) $\phi\leq\psi$ is {\em satisfied} in the slanted $\mathcal{L}$-algebra $\mathbb{H}^\ast$ under the assignment $v:\mathsf{Prop}\to A$ (notation: $(\mathbb{H}^\ast, v)\models \phi\leq\psi$) if $((\mathbb{H}^\ast)^\delta, e\cdot v)\models \phi\leq\psi$ in the usual sense, where $e\cdot v$ is the assignment on $A^\delta$ obtained by composing the assignment $v:\mathsf{Prop}\to A$ and the canonical embedding $e: A\to A^\delta$.
 Moreover, $\phi\leq\psi$ is {\em valid} in $\mathbb{H}^\ast$ (notation: $\mathbb{H}^\ast\models \phi\leq\psi$) if $((\mathbb{H}^\ast)^\delta, e\cdot v)\models \phi\leq\psi$ for every assignment $v: \mathsf{Prop}\to A$   (notation: $(\mathbb{H}^\ast)^\delta\models_{A} \phi\leq\psi$).
\subsection{Analytic $\mathcal{L}_{\mathrm{LE}}$-axioms}
\label{ssec:analytic-LE-axioms}
The \emph{positive} (resp.~\emph{negative}) {\em generation tree} of any $\mathcal{L}_\mathrm{LE}$-term $\phi$ is defined by labelling the root node of the generation tree (i.e.~syntax tree) of $\phi$ with the sign $+$ (resp.~$-$), and then propagating the labelling on each remaining node as follows:
	\begin{itemize}
		\item For any node labelled with $ \lor$ or $\land$, assign the same sign to its children nodes.
		\item For any node labelled with $h\in \mathcal{F}\cup \mathcal{G}$ of arity $n_h\geq 1$, and for any $1\leq i\leq n_h$, assign the same (resp.~the opposite) sign to its $i$th child node if $\varepsilon_h(i) = 1$ (resp.~if $\varepsilon_h(i) = \partial$).
	\end{itemize}
	Nodes in signed generation trees are \emph{positive} (resp.~\emph{negative}) if they are signed $+$ (resp.~$-$). Signed generation trees will  mostly be used in the context of term inequalities $\varphi\leq \psi$. In this context, we will typically consider the positive generation tree $+\varphi$ for the left-hand side and the negative one $-\psi$ for the right-hand side. When we consider  $-\varphi$ and $+\psi$, we will speak of the dual generation tree of the term inequality $\varphi\leq \psi$.
 In what follows, we will often need to use {\em placeholder variables} to e.g.~specify the occurrence of a subformula within a given formula. In these cases, we will write e.g.~$\varphi(!z)$ (resp.~$\varphi(!\overline{z})$) to indicate that the variable $z$ (resp.~each variable $z$ in  vector $\overline{z}$) occurs exactly once in $\varphi$. Accordingly, we will write $\varphi[\gamma / !z]$  (resp.~$\varphi[\overline{\gamma}/!\overline{z}]$)   to indicate the formula obtained from $\varphi$ by substituting $\gamma$ (resp.~each formula $\gamma$ in $\overline{\gamma}$) for the unique occurrence of (its corresponding variable) $z$ in $\varphi$. Also, in what follows, we will find it sometimes useful to group placeholder variables together according to certain assumptions we make about them. So, for instance, we will sometimes write e.g.~$\varphi(!\overline{x}, !\overline{y})$  or $f(!\overline{x}, !\overline{y})$ to indicate that $\varphi$ or $f$ is monotone (resp.~antitone) in the coordinates corresponding to every variable $x$ in  $\overline{x}$ (resp.~$y$ in  $\overline{y}$).  We will provide further explanations as to the intended meaning of these groupings whenever required. Finally, we will also extend these conventions to inequalities or sequents, and thus write e.g.~$(\varphi\leq \psi) [\overline{\gamma}/!\overline{z}, \overline{\delta}/!\overline{w}] $ to indicate  the inequality obtained from $\varphi\leq \psi$ by substituting each formula $\gamma$ in $\overline{\gamma}$ (resp.~$\delta$ in $\overline{\delta}$) for the unique occurrence of its corresponding variable $z$ (resp.~$w$) in $\varphi\leq \psi$. Non-leaf nodes in signed generation trees are called \emph{$\Delta$-adjoints}, \emph{syntactically left residuals (SLR)}, \emph{syntactically right residuals (SRR)}, and \emph{syntactically right adjoints (SRA)}, according to the specification given in Table \ref{Join:and:Meet:Friendly:Table} (the occurrences of $\vee$ and $\wedge$ in round brackets signify that these connectives can be listed among the SLR and SRR nodes, as indicated, only if the basic setting is distributive, cf.~Footnote \ref{ftn:distributive}).
	Nodes that are either classified as  $\Delta$-adjoints or SLR are collectively referred to as {\em Skeleton-nodes}, while SRA- and SRR-nodes are referred to as {\em PIA-nodes}. An $\mathcal{L}$-formula $\varphi$ (resp.~$\psi$) is a {\em positive} (resp.~{\em negative}) {\em Skeleton formula} if all nodes in $+\varphi$ and $-\psi$ are Skeleton. A Skeleton formula is {\em definite} if all its nodes are SLR. Likewise, an $\mathcal{L}$-formula $\gamma$ (resp.~$\delta$) is a {\em positive} (resp.~{\em negative}) {\em PIA formula} if all nodes in $+\gamma$  (resp.~$-\delta$) are PIA.
	A branch in a signed generation tree $\ast \phi$, with $\ast \in \{+, - \}$, is called a \emph{good branch} if it is the concatenation of two paths $P_1$ and $P_2$, one of which may possibly be of length $0$, such that $P_1$ is a path from the leaf consisting (apart from variable nodes) only of PIA-nodes, and $P_2$ consists (apart from variable nodes) only of Skeleton-nodes.
\begin{table}[h]
		\begin{center}
			\bgroup
			\def\arraystretch{1.2}
			\begin{tabular}{| c | c |}
				\hline
				Skeleton &PIA\\
				\hline
				$\Delta$-adjoints & Syntactically Right Adjoint (SRA) \\
				\begin{tabular}{ c c c c c c}
					$+$ &$\vee$ &\\
					$-$ &$\wedge$ \\
				\end{tabular}
				&
				\begin{tabular}{c c c c }
					$+$ &$\wedge$ &$\blacksquare\ \ \wt\ \ \bt$ & \\
					$-$ &$\vee$ &$\wdia\ \ \tw\ \ \tb$ &  \\

				\end{tabular}
				\\ \hline
				Syntactically Left Residual (SLR) & Syntactically Right Residual (SRR) \\
				\begin{tabular}{c c c c }
					$+$ &  &$ (\wedge)\ \ \wdia\ \ \tw\ \ \tb\ \ f\in \mathcal{F}_A$ & \\
					$-$ &  &$ (\vee)\ \ \blacksquare\ \ \wt\ \ \bt\ \ g\in \mathcal{G}_A$ & \\
				\end{tabular}
				&\begin{tabular}{c c c c }
				$+$ &  &$ (\vee)\ \ \blacksquare\ \ \wt\ \ \bt\ \ g\in \mathcal{G}_A$ & \\	$-$ &  &$ (\wedge)\ \ \wdia\ \ \tw\ \ \tb\ \ f\in \mathcal{F}_A$ & \\
					
				\end{tabular}
				\\
				\hline
			\end{tabular}
			\egroup
		\end{center}
		\caption{Skeleton and PIA nodes for the  LE-language of spd-algebras.}\label{Join:and:Meet:Friendly:Table}
		\vspace{-1em}
	\end{table}

 An inequality $\varphi \leq \psi$ is \emph{analytic} if every branch of $+\varphi$ and $-\psi$ is good. This implies that  analytic inequalities can  equivalently be represented as $\mathcal{L}$-inequalities $(\varphi\leq \psi)[\overline{\gamma}/!\overline{x}, \overline{\delta}/!\overline{y}]$ such that  $\varphi$ (resp.~$\psi$) is positive (resp.~negative)  Skeleton,  $\gamma$  is positive  PIA for every $\gamma$ in $\overline{\gamma}$, and $\delta$ is negative PIA for  every $\delta$ in $\overline{\delta}$.
 Such an analytic inequality  is \emph{definite} if $+\varphi$ and $-\psi$ only  contain SLR nodes. Special  analytic inequalities are those s.t.~every node in their signed generation trees is Skeleton (resp.~PIA); we refer to these as {\em Skeleton} (resp.~{\em PIA}) {\em inequalities}.
 \begin{example}
 \label{ex:running example}
     The $\mathcal{L}$-inequality \[\tw (\blacksquare \wt\tw \tw \wt v_1\wedge \wt \Diamond \wt v_1)\leq \blacksquare \wt (\bt v_1\vee \blacksquare \bt (\tb \wt v_2\vee \Diamond v_2 ))\] is analytic, since it can be represented as $\varphi[\delta_1/!y_1, \gamma_1/!x_1]\leq \psi[\gamma_2/!x_2, \gamma_3/!x_3]$ with
     \begin{center}
     \begin{tabular}{llll}
     $\varphi(!y_1, !x_1)\coloneqq \tw (\blacksquare \wt\tw  y_1\wedge \wt \Diamond x_1)$ & $\delta_1\coloneqq \tw\wt v_1$ & $\gamma_1\coloneqq \wt v_1$ & \\
     $\psi(!x_2, !x_3)\coloneqq \blacksquare \wt (x_2\vee x_3)$ & $\gamma_2\coloneqq \bt v_1$ & $\gamma_3\coloneqq \blacksquare \bt (\tb \wt v_2\vee \Diamond v_2 )$
     \end{tabular}
     \end{center}
     and it can be easily verified, using the table above, that $\varphi$ is positive Skeleton (i.e.~all nodes in $+\varphi$ are Skeleton), $\psi$ is negative Skeleton (i.e.~all nodes in $-\psi$ are Skeleton), $\delta_1$ is negative PIA and all the $\gamma$s are positive PIA.
 \end{example}
\begin{definition}
\label{def:clopen_analytic}
A {\em clopen-analytic} (resp.~{\em clopen-Skeleton}, {\em clopen-PIA}) {\em inequality}  is an $\mathcal{L}_\mathrm{LE}$-inequality $(\phi \leq \psi)[\overline t/!\overline x]$, s.t.~$(\phi \leq \psi)(!\overline x)$ is an analytic (resp.~Skeleton, PIA) $\mathcal{L}_\mathrm{LE}$-inequality, and every $t$ in $\overline t$ is a clopen term (cf.~Section \ref{ssec:LE-logic}).
\end{definition}
Clopen-analytic inequalities  properly extend analytic inequalities, as shown by the following

\begin{example}
    The DLE-inequality
\[\tw\tw( (v_1\pdla v_2) \vee  \wt (v_1\rightarrow v_2))\leq \blacksquare ((v_1\rightarrow v_2)\wedge \blacksquare (v_1\pdla v_2))\]
is not analytic, since all branches on the left-hand side of the inequality are not good; however, it is  clopen-analytic,  since it can be represented as $\varphi[\delta_1[t_1/!y_4, t_2/!x_2]/!y_1]\leq \psi[\gamma_1/!x_1, \delta_2/!y_2, \delta_3/!y_3]$ with
     \begin{center}
     \begin{tabular}{llll}
     $\varphi(!y_1)\coloneqq \tw y_1$ & $\delta_1 (!y_4, !x_2)\coloneqq \tw (y_4 \vee  \wt x_2)$ & $t_1\coloneqq v_1\pdla v_2$ & \\
     $\psi(!x_1, !y_2, !y_3)\coloneqq \blacksquare ((x_1\rightarrow y_2)\wedge \blacksquare y_3)$ & $\gamma_1\coloneqq  v_1$ & $t_2\coloneqq v_1\rightarrow v_2$ \\
& $\delta_2\coloneqq v_2$ \\
& $\delta_3\coloneqq v_1\pdla v_2$ \\
     \end{tabular}
     \end{center}
For ease of reading, in the representation above we have isolated only the clopen terms which violate the analyticity condition, and have left the non-offending ones as integral parts of $\gamma_1$, $\delta_2$ and $\delta_3$.
\end{example}

In what follows, we will sometimes refer to formulas $\varphi[\overline{t}/!\overline{x}]$ s.t.~$\varphi(!\overline{x})$ is a positive (resp.~negative) Skeleton formula and every $t$ in $\overline{t}$ is a clopen term as  {\em positive} (resp.~{\em negative}) {\em clopen-Skeleton formulas}.


Recall that the unique homomorphic extension of  every valuation $v: \mathsf{Prop}\to A$ is a map $v: \mathcal{L}\to A^\delta$.

\begin{proposition}(cf.~\cite{conradie2020constructive} Lemma 5.2)
\label{prop: definite skeleton implies preservation properties}
    For any spd-algebra $\mathbb{H} = (A, \mathsf{S}, \mathsf{C}, \mathsf{D})$, all $\mathcal{L}$-formulas $\varphi(!\overline{x}, !\overline{y})$, $ \psi(!\overline{y}, !\overline{x})$ s.t.~$\varphi$ is definite positive Skeleton, $\psi$ is definite negative Skeleton, and the signed generation tree of $(\varphi\leq \psi)[!\overline{x}, !\overline{y}]$ is positive in each $x$ in $\overline{x}$ and negative in each $y$ in $\overline{y}$,
\begin{enumerate}
\item the term function corresponding to $\varphi$ on $\mathbb{H}^\delta$   is completely join-preserving, hence monotone, in each $x$ in $\overline{x}$ and completely meet-reversing, hence antitone, in each $y$ in $\overline{y}$.

    \item the term function corresponding to $\psi$ on $\mathbb{H}^\delta$   is completely meet-preserving, hence monotone, in each $y$ in $\overline{y}$ and completely join-reversing, hence antitone, in each $x$ in $\overline{x}$.

    \end{enumerate}
\end{proposition}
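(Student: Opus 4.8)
The plan is to prove both items at once by induction on the complexity of the Skeleton formula, exploiting the fact that the classification of nodes as SLR in Table~\ref{Join:and:Meet:Friendly:Table} is designed precisely so that the interpretation on $\mathbb{H}^\delta$ of an SLR connective is a complete operator with the expected order-type profile. Concretely, I would establish the following strengthened claim: for every $\mathcal{L}$-term $\theta$ and every $\ast\in\{+,-\}$ such that $\ast\theta$ is a \emph{definite} Skeleton signed generation tree, writing $\overline{x}$ (resp.~$\overline{y}$) for the variables occurring positively (resp.~negatively) in $\ast\theta$, the term function $\theta^{\mathbb{H}^\delta}$ is completely join-preserving in each $x$ in $\overline{x}$ and completely meet-reversing in each $y$ in $\overline{y}$ when $\ast=+$, and completely meet-preserving in each $y$ in $\overline{y}$ and completely join-reversing in each $x$ in $\overline{x}$ when $\ast=-$. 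Items~1 and~2 are then the instance $\ast=+$ applied to $\varphi$ and the instance $\ast=-$ applied to $\psi$, the hypothesis that $(\varphi\leq\psi)$ is positive in each $x$ and negative in each $y$ guaranteeing that the partition $\overline{x},\overline{y}$ is the one in the statement. Note that the two polarities cannot be handled separately: descending into an order-type-$\partial$ coordinate of a definite \emph{positive} Skeleton tree yields a definite \emph{negative} Skeleton tree — since the SLR nodes of sign $-$ are exactly, up to $\vee$, the connectives of $\mathcal{G}^{\mathbb{H}^\ast}$ — and symmetrically; hence a single simultaneous induction is needed.

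The base case ($\theta$ a variable, $\bot$, or $\top$) is immediate, as identities and constants are trivially completely join- and meet-preserving and reversing. For the inductive step, write $\theta=h(\theta_1,\dots,\theta_n)$ and take $\ast=+$ (the case $\ast=-$ is order-dual). The root $h$ is an SLR node of sign $+$, so, apart from the distributive subcase $h=\wedge$, it is a connective of $\mathcal{F}^{\mathbb{H}^\ast}$ interpreted on $\mathbb{H}^\delta$ by its $\sigma$-extension $h^\sigma$; by Lemma~\ref{lemma: distribution properties of sigma pi}(1), $h^\sigma$ is completely join-preserving in each coordinate $i$ with $\epsilon_h(i)=1$ and completely meet-reversing in each coordinate $i$ with $\epsilon_h(i)=\partial$. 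By the induction hypothesis applied to each $\theta_i$ (in case $+$ if $\epsilon_h(i)=1$, in case $-$ if $\epsilon_h(i)=\partial$, the relevant subtree being again definite Skeleton), each $\theta_i^{\mathbb{H}^\delta}$ enjoys the corresponding preservation/reversal properties. One then concludes by composition: for a variable $x$ occurring positively in $+\theta$, in each coordinate $i$ in which it occurs the sign of $x$ in the subtree and $\epsilon_h(i)$ combine so that $\theta_i^{\mathbb{H}^\delta}$ sends an arbitrary join in $x$ either to a join (which $h^\sigma$ preserves in coordinate $i$) or to a meet (which $h^\sigma$ turns back into a join in coordinate $i$); when $x$ occurs in several $\theta_i$ one additionally uses that the diagonalisation of a completely join-preserving (resp.~meet-reversing) multivariable map is again completely join-preserving (resp.~meet-reversing); the dual bookkeeping with joins and meets interchanged handles $y$. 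The distributive subcase $h=\wedge$ needs in addition that binary meet on $A^\delta$ is completely join-preserving in each coordinate, which holds because $A^\delta\cong\mathcal{P}^{\uparrow}(\mathrm{Prime}(A))$ is completely distributive; dually, the $\ast=-$ step uses $h^\pi$, Lemma~\ref{lemma: distribution properties of sigma pi}(2), and complete distributivity for binary join.

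The routine verifications — compositions of complete operators, diagonalisation, and complete distributivity of canonical extensions of distributive lattices — are standard; the single point deserving genuine care, and the only real obstacle, is the sign bookkeeping in the inductive step: one must check that, for each variable, the polarity of its occurrence(s) in each immediate subterm matches the order-type of the corresponding coordinate of $h$ in exactly the way that makes the composite land in the asserted class. This is precisely where the polarity hypothesis on $(\varphi\leq\psi)$ is used, ensuring that no variable ever occurs with both signs below $+\varphi$ or below $-\psi$. Everything else reduces to Lemma~\ref{lemma: distribution properties of sigma pi}.
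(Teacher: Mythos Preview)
Your proposal is correct and follows essentially the same approach as the paper: a simultaneous induction on $\varphi$ and $\psi$, with the inductive step unfolding the outermost connective and invoking Lemma~\ref{lemma: distribution properties of sigma pi} for the complete join-preservation/meet-reversal of $f^\sigma$ (resp.~$g^\pi$) combined with the induction hypothesis on the immediate subterms. Two minor remarks: since the notation $\varphi(!\overline{x},!\overline{y})$ means each placeholder occurs \emph{exactly once}, your discussion of diagonalisation and of variables occurring with both signs is unnecessary here; and your explicit treatment of the distributive subcase $h=\wedge$ is a welcome addition that the paper's proof leaves implicit.
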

\begin{proof}
    By simultaneous induction  on the complexity of $\varphi$ and $  \psi$. If $\varphi\coloneqq x$, $\psi\coloneqq y$, then the statements are trivially true. Let $\varphi\coloneqq f(\overline{\varphi'}, \overline{\psi'})$ for $f\in \mathcal{F} $, where each $\varphi'$  in $\overline{\varphi'}$ (resp.~$\psi'$ in $\overline{\psi'}$) is a definite positive (resp.~negative) Skeleton formula; hence, by inductive hypothesis, each $\varphi'$ and each $\psi'$ satisfies the statement of the proposition. Moreover, by Lemma \ref{lemma: distribution properties of sigma pi}, $f$ is completely join-preserving in each positive coordinate and completely meet-reversing in each negative coordinate. Hence, the following chain of identities hold:
    \begin{center}
        \begin{tabular}{r cl l}
          $\varphi [\overline{\bigvee a_i}/!\overline{x}, \overline{\bigwedge b_i}/!\overline{y}]$   & $=$ & $f(\overline{\varphi'}, \overline{\psi'})[\overline{\bigvee a_i}/!\overline{x}, \overline{\bigwedge b_i}/!\overline{y} ] $\\
           & $=$ & $f(\overline{\varphi'[\overline{\bigvee a_i}/!\overline{x}, \overline{\bigwedge b_j}/!\overline{y}]}, \overline{\psi'[ \overline{\bigwedge b_j}/!\overline{y},\overline{\bigvee a_i}/!\overline{x}]})$ \\
             & $=$ & $f(\overline{\bigvee\varphi'(a_i, b_j)}, \overline{\bigwedge \psi'( b_j, a_i)})$ & induction hypothesis \\
             & $=$ & $\bigvee f(\overline{\varphi'(a_i, b_j)}, \overline{ \psi'( b_j, a_i)})$ & Lemma \ref{lemma: distribution properties of sigma pi} \\
             & $=$ & $\bigvee \varphi[\overline{a_i}/!\overline{x}, \overline{b_j}/!\overline{y}]$. & \\
        \end{tabular}
    \end{center}
The case in which $\psi\coloneqq g( \overline{\psi'}, \overline{\varphi'})$ for $g\in \mathcal{G}$ is analogous.
\end{proof}

The next proposition accounts for why terms in the $\mathcal{L}(\mathcal{F}_A, \mathcal{G}_A)$-fragment are referred to as clopen terms.
\begin{proposition}
\label{prop: positive PIA are open}
For any spd-algebra $\mathbb{H} = (A, \mathsf{S}, \mathsf{C}, \mathsf{D})$, all $\mathcal{L}$-formulas $\varphi, \psi$, any clopen $\mathcal{L}$-term $t$  and any valuation $v: \mathsf{Prop}\to A$,
\begin{enumerate}
  \item  $v(t) \in A$.
    \item If $\psi$ is positive (clopen) PIA, i.e.~negative (clopen) Skeleton, then $v(\psi)\in O(A^\delta)$.
    \item If $\varphi$ is positive (clopen) Skeleton, i.e.~negative (clopen) PIA, then $v(\varphi)\in K(A^\delta)$.

\end{enumerate}
\end{proposition}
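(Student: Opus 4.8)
The plan is to prove all three items by induction on term structure: item (1) by a standalone induction on clopen terms, and then items (2) and (3) together by a simultaneous induction on formula complexity, delegating the real content to the closure properties of $K(A^\delta)$ and $O(A^\delta)$ recorded in Propositions \ref{prop:background can-ext} and \ref{prop: compactness and existential ackermann} and to the distribution properties of the $\sigma$/$\pi$-extensions in Lemma \ref{lemma: distribution properties of sigma pi}. For item (1): when $t$ is a proposition letter this is just the definition of $v$; when $t \in \{\top, \bot\}$ it holds since the bounds of $A^\delta$ coincide with those of $A$ (by compactness and density of $A$ in $A^\delta$); when $t = t_1 \wedge t_2$ or $t_1 \vee t_2$ it follows because $A$ is a sublattice of $A^\delta$; and when $t = f(\overline t\,)$ with $f \in \mathcal{F}_A$ or $t = g(\overline t\,)$ with $g \in \mathcal{G}_A$, it follows from the inductive hypothesis together with the standard fact that the interpretations $f^\sigma, g^\pi$ on $\mathbb{H}^\delta$ restrict on clopen arguments to the original operations of $A$, keeping the value in $A$.

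For (2) and (3) I would run a simultaneous induction on complexity, using the sign/classification duality underlying the ``i.e.'' in the statement: a node is PIA in the positive generation tree iff it is Skeleton in the negative one, so ``positive clopen PIA'' and ``negative clopen Skeleton'' denote the same class, as do ``positive clopen Skeleton'' and ``negative clopen PIA''. The base case is that the formula is a clopen term, which by item (1) evaluates into $A = K(A^\delta) \cap O(A^\delta)$, hence is simultaneously closed and open. For the inductive step of (3), by Table \ref{Join:and:Meet:Friendly:Table} the root connective of a positive clopen-Skeleton formula that is not a clopen term is $\vee$, or $\wedge$ (only in the distributive setting), or some $f \in \mathcal{F}$; in the first case both immediate subformulas are again positive clopen-Skeleton, so by the inductive hypothesis they evaluate to closed elements and $v(\varphi) \in K(A^\delta)$ by Proposition \ref{prop:background can-ext}(iv); the $\wedge$ case is the same but concludes via Proposition \ref{prop: compactness and existential ackermann}(1)(iii); and for $\varphi = f(\overline{\varphi'}, \overline{\psi'})$ the monotone-coordinate subformulas $\overline{\varphi'}$ are positive clopen-Skeleton and the antitone-coordinate ones $\overline{\psi'}$ are negative clopen-Skeleton, equivalently positive clopen-PIA, so the inductive hypotheses for (3) and (2) give $v(\overline{\varphi'}) \subseteq K(A^\delta)$ and $v(\overline{\psi'}) \subseteq O(A^\delta)$, whence $v(\varphi) = f^\sigma[v(\overline{\varphi'})/!\overline x, v(\overline{\psi'})/!\overline y] \in K(A^\delta)$ by Lemma \ref{lemma: distribution properties of sigma pi}(1). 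Item (2) is the exact dual: the root of a positive clopen-PIA formula that is not a clopen term is $\wedge$, or $\vee$ (distributive only), or some $g \in \mathcal{G}$, and one concludes by Proposition \ref{prop:background can-ext}(iv), Proposition \ref{prop: compactness and existential ackermann}(2)(iii), and Lemma \ref{lemma: distribution properties of sigma pi}(2) respectively.

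The only genuinely delicate step is the sign/Skeleton/PIA bookkeeping in the connective case: reading off Table \ref{Join:and:Meet:Friendly:Table}, one must verify that descending into a monotone (positive) coordinate of a Skeleton-headed formula keeps the subformula in the Skeleton class, while descending into an antitone (negative) coordinate lands it in the PIA class, and dually for PIA-headed formulas. This is precisely what makes the two statements feed into one another, and is the reason (2) and (3) must be established simultaneously rather than in sequence. Everything else reduces to a direct appeal to the closure lemmas for $K(A^\delta)$ and $O(A^\delta)$ and to the ``moreover'' clauses of Lemma \ref{lemma: distribution properties of sigma pi}.
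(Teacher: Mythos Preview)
Your proposal is correct and follows essentially the same approach as the paper: item (1) by induction on clopen terms, then items (2) and (3) by simultaneous structural induction with base case given by (1), appealing to Proposition \ref{prop:background can-ext}(iv) for the lattice connectives and to Lemma \ref{lemma: distribution properties of sigma pi} for the $\mathcal{F}$/$\mathcal{G}$-connectives. If anything, your write-up is slightly more explicit than the paper's (you spell out the distributive-only $+\wedge$ and $-\vee$ Skeleton cases and treat all $f\in\mathcal{F}$ uniformly, whereas the paper works through the unary slanted connectives one by one and defers the rest to ``the remaining cases are treated analogously''), but the underlying argument is the same.
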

\begin{proof}
Item (1) is proven by straightforward induction on the shape of clopen terms.
To prove (2) and (3), we proceed by simultaneous induction on the complexity of $\varphi$ and $\psi$.
The base case holds by item (1).

If $\psi \coloneqq \psi_1 \wedge \psi_2$, then $\psi_1 $ and $\psi_2$ are  positive (clopen) PIA formulas of lower complexity than $\psi$, hence by inductive hypothesis $v(\psi_1), v(\psi_2)\in O(A^\delta)$. Then, by Proposition \ref{prop:background can-ext}(iv), $v(\psi) = v(\psi_1 \wedge \psi_2) = v(\psi_1)\wedge v(\psi_2)\in O(A^\delta)$. The case of $\varphi \coloneqq \varphi_1 \vee \varphi_2$ is shown similarly, using the remaining part of Proposition \ref{prop:background can-ext}(iv).

If $\psi \coloneqq \blacksquare \psi'$, then $\psi'$ is a  positive (clopen) PIA formula of lower complexity than $\psi$, hence by inductive hypothesis  $v(\psi')\in O(A^\delta)$. Then, by Lemma \ref{lemma: distribution properties of sigma pi}.2, $v(\psi) = v(\blacksquare \psi') = \blacksquare v(\psi') \in O(A^\delta)$.  The case of $\varphi \coloneqq \wdia \varphi'$ is shown similarly, using Lemma \ref{lemma: distribution properties of sigma pi}.1.

If $\psi \coloneqq \wt \varphi$, then $\varphi$ is a negative (clopen) PIA formula of lower complexity than $\psi$, hence by inductive hypothesis,  $v(\varphi)\in K(A^\delta)$. Then, by Lemma \ref{lemma: distribution properties of sigma pi}.2, $v(\psi) = v(\wt \varphi) = \wt v(\varphi) \in O(A^\delta)$.
The remaining cases  are treated analogously, using Lemma \ref{lemma: distribution properties of sigma pi}.
\end{proof}

\section{Correspondence}
\label{sec:correspondence}

In the present section, we fix an arbitrary spd-type and its associated language $\mathcal{L} = \mathcal{L}_{\mathrm{LE}}(\mathcal{F}, \mathcal{G})$, and show that any clopen-analytic $\mathcal{L}$-inequality  (cf.~Definition \ref{def:clopen_analytic}) {\em corresponds} to some $\mathcal{L}^{\mathrm{FO}}_{spd}$-formulas in the sense specified in the following theorem. Moreover, the $\mathcal{L}^{\mathrm{FO}}_{spd}$-formulas  can effectively be computed from the input $\mathcal{L}$-inequality.  

\begin{theorem}
  For every spd-algebra $\mathbb{H}$ and every clopen-analytic $\mathcal{L}$-inequality $(\varphi\leq \psi)[\overline{\gamma}/!\overline{x}, \overline{\delta}/!\overline{y}]$,
\[\mathbb{H}^\ast\models (\varphi\leq \psi)[\overline{\gamma}/!\overline{x}, \overline{\delta}/!\overline{y}] \quad\text{ iff }\quad \mathbb{H}\models \mathsf{FLAT}(\mathsf{ALBA}((\varphi\leq \psi)[\overline{\gamma}/!\overline{x}, \overline{\delta}/!\overline{y}]))^\tau,\]
where $\mathsf{ALBA}((\varphi\leq \psi)[\overline{\gamma}/!\overline{x}, \overline{\delta}/!\overline{y}])$ is the output of the algorithm  $\mathsf{ALBA}$ (cf.~Algorithm \ref{algo:alba}) described  in Steps 1-4 below, $\mathsf{FLAT}$ (cf.~Algorithm \ref{algo:flat}) is described in Step 5, and $\tau$ is the translation from flat $\mathcal{L}$-inequalities  to $\mathcal{L}^{\mathrm{FO}}_{spd}$-formulas described in Step 6.
\end{theorem}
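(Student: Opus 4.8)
The plan is to follow the standard architecture of an ALBA-style correspondence argument, adapted to the slanted/topological setting. First I would observe that, by the definition of validity in slanted algebras, $\mathbb{H}^\ast\models (\varphi\leq \psi)[\overline{\gamma}/!\overline{x}, \overline{\delta}/!\overline{y}]$ holds iff $\mathbb{H}^\delta\models_A (\varphi\leq \psi)[\overline{\gamma}/!\overline{x}, \overline{\delta}/!\overline{y}]$, i.e.\ iff the inequality is satisfied on the (standard!) LE-algebra $\mathbb{H}^\delta$ under every \emph{clopen} assignment $v\colon\mathsf{Prop}\to A$. Thus the whole argument is run on $\mathbb{H}^\delta$, keeping track at each step of the fact that the assignments of propositional variables range over $A$, while the auxiliary variables introduced along the way range over $K(A^\delta)$ and $O(A^\delta)$ (and, crucially, get instantiated to clopen elements at the end).

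The ALBA run (Steps 1--4, Algorithm \ref{algo:alba}) proceeds as usual. \textbf{Preprocessing.} Using distribution and splitting, together with the monotonicity/antitonicity information recorded in the signed generation trees, I would bring the inequality into the shape $(\varphi\leq \psi)[\overline{\gamma}/!\overline{x}, \overline{\delta}/!\overline{y}]$ with $\varphi$ positive clopen-Skeleton, $\psi$ negative clopen-Skeleton, each $\gamma$ positive PIA and each $\delta$ negative PIA, which is precisely the normal form guaranteed by Definition \ref{def:clopen_analytic}. \textbf{First approximation.} Apply the first-approximation rule, replacing the single inequality by a quasi-inequality with a closed variable $\mathbf{i}$ on the left and an open variable $\mathbf{m}$ on the right. \textbf{Adjunction and residuation.} Using Lemma \ref{lemma: distribution properties of sigma pi} (complete join-preservation/meet-reversal of the $\sigma$-extensions of the $\mathcal{F}$-connectives in their positive/negative coordinates, and dually for $\mathcal{G}$), I would move all Skeleton connectives out of the way via their adjoints/residuals; the Skeleton shape of $\varphi$ and $\psi$ guarantees that this strips them down to conjunctions of inequalities with a pure left-hand or right-hand side. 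Then, inside the PIA subterms $\overline{\gamma}, \overline{\delta}$, further residuation/adjunction --- now using Proposition \ref{prop: positive PIA are open} to see that these subterms evaluate into $O(A^\delta)$ resp.\ $K(A^\delta)$, and the clopen terms into $A$ --- isolates each propositional variable so that the Ackermann rule applies.

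\textbf{Ackermann elimination and the topological subtlety.} Applying the right- and left-handed Ackermann lemmas eliminates all propositional variables, leaving a \emph{pure} quasi-inequality whose atoms contain only closed, open, and clopen terms and the slanted connectives. The point that needs care --- and which I expect to be the main obstacle --- is that the minimal/maximal valuations substituted by Ackermann must be \emph{admissible}, i.e.\ clopen, so that the resulting pure quasi-inequality is equivalent to the original one \emph{over $\mathbb{H}^\ast$} and not merely over $\mathbb{H}^\delta$ under arbitrary valuations. This is exactly where clopen-analyticity is used: the minimal valuation of each eliminated variable is a positive or negative clopen-Skeleton term evaluated at closed/open arguments, and by Proposition \ref{prop: definite skeleton implies preservation properties} together with Proposition \ref{prop: positive PIA are open} such a term lands in $K(A^\delta)$ or $O(A^\delta)$; the passage to genuinely clopen witnesses is then delivered by the compactness properties of Proposition \ref{prop: compactness and existential ackermann} and by Lemma \ref{lem: diamond-output equivalence extended}, the topological Ackermann-type lemma tailored to the slanted connectives. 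Chasing these instantiations shows that each ALBA step is sound for the clopen semantics, hence $\mathbb{H}^\ast\models (\varphi\leq \psi)[\overline{\gamma}/!\overline{x}, \overline{\delta}/!\overline{y}]$ iff $\mathbb{H}\models \mathsf{ALBA}((\varphi\leq \psi)[\overline{\gamma}/!\overline{x}, \overline{\delta}/!\overline{y}])$.

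\textbf{Flattening and translation.} Finally I would show that $\mathsf{FLAT}$ (Algorithm \ref{algo:flat}) rewrites the pure quasi-inequality into an equivalent conjunction of \emph{flat} inequalities, each containing at most one occurrence of a slanted connective applied to clopen/closed/open arguments; this is again a residuation argument using Lemma \ref{lemma: distribution properties of sigma pi}, plus the compactness of Proposition \ref{prop: compactness and existential ackermann} to reduce closed/open parameters to clopen ones wherever a slanted connective must be read off a relation. Each resulting flat inequality then sits on one side of one of the biconditionals $\Diamond a\leq b$ iff $a\prec b$, $b\leq \wt a$ iff $a\npcon b$, $\tw b\leq a$ iff $a\nppcon b$ (and their $\blacksquare, \bt, \tb$ variants) displayed after the definition of $\mathbb{H}^\ast$, so the translation $\tau$ replaces it by the corresponding atomic $\mathcal{L}^{\mathrm{FO}}_{spd}$-formula in $\prec, \pcon, \ppcon$ and the clopen operations. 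Combining the ALBA-equivalence, the $\mathsf{FLAT}$-equivalence, and the definitional correctness of $\tau$ yields the chain $\mathbb{H}^\ast\models (\varphi\leq \psi)[\cdots]$ iff $\mathbb{H}\models \mathsf{FLAT}(\mathsf{ALBA}((\varphi\leq \psi)[\cdots]))^\tau$, as required. The routine parts are the individual rewriting steps; the delicate part, as noted, is the admissibility (clopenness) of the Ackermann substitutions, which is exactly what the clopen-analytic shape and the preparatory lemmas of Section \ref{sec:prelim} are designed to secure.
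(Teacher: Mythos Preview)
Your proposal describes the \emph{classical} ALBA run for Kripke-frame correspondence, but the $\mathsf{ALBA}$ of this paper (Algorithm~\ref{algo:alba}, Steps~1--4) is a different and much simpler procedure, with a different target. Three concrete discrepancies:

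\textbf{First approximation is at the PIA positions, not at the root.} You introduce a single closed $\mathbf{i}$ on the left and open $\mathbf{m}$ on the right. In the paper's Step~2 (Lemma~\ref{lemma:first_approx}) the approximation variables $\overline{k},\overline{o}$ are introduced at the Skeleton/PIA interface: $(\varphi\leq\psi)[\overline{\alpha}/!\overline{x},\overline{\beta}/!\overline{y}]$ is rewritten as $\forall\overline{k}\forall\overline{o}\,(\overline{k\leq\alpha}\ \&\ \overline{\beta\leq o}\Rightarrow(\varphi\leq\psi)[\overline{k}/!\overline{x},\overline{o}/!\overline{y}])$, using complete join/meet preservation of the definite Skeleton in those coordinates (Proposition~\ref{prop: definite skeleton implies preservation properties}). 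The Skeleton $\varphi,\psi$ is never residuated away; it stays intact in the consequent and is only later decomposed by $\mathsf{FLAT}$.

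\textbf{Ackermann eliminates the auxiliary $\overline{k},\overline{o}$, not the propositional variables.} Step~3 uses compactness (each $\alpha$ is open, each $\beta$ closed, by Proposition~\ref{prop: positive PIA are open}) to interpolate clopen $\overline{a},\overline{b}$; Step~4 then drops $\overline{k},\overline{o}$ by instantiating them to $\overline{a},\overline{b}$ via monotonicity. The propositional variables $\overline{v}$ are \emph{never eliminated}: they persist into the output $\mathcal{L}^{\mathrm{FO}}_{spd}$-formula as universally quantified first-order variables ranging over $A$ (cf.~the worked example in Section~\ref{sec:examples}, where $v_1,v_2$ appear in the final formula). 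This is the point where your plan fails to prove the theorem as stated: the target language $\mathcal{L}^{\mathrm{FO}}_{spd}$ has individual variables over the algebra, not over points of a dual space, and the specific formula $\mathsf{FLAT}(\mathsf{ALBA}(\ldots))^\tau$ named in the theorem still quantifies over $\overline{v}$. A ``pure'' quasi-inequality obtained by eliminating all propositional variables is simply not the object the theorem asserts to be equivalent, and no admissibility-of-minimal-valuations argument is needed (or present) here.
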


In what follows, we prove the theorem above for analytic inequalities, and we then extend it to clopen-analytic inequalities in Remark \ref{remark:extend_to_clopen}.  Also, we will use $\overline{x}$ (resp.~$\overline{y}$) as placeholder variables which are positive (resp.~negative) in the signed generation tree of the input $\mathcal{L}$-inequalities, and we assume that $\overline v$ is the vector of variables occurring in the input analytic inequality $(\varphi\leq \psi)[\overline{\gamma}/!\overline{x}, \overline{\delta}/!\overline{y}]$.

\paragraph{Step 1 (preprocessing).}

$\mathsf{ALBA}$ receives an analytic inequality $(\varphi\leq \psi)[\overline{\gamma}/!\overline{x}, \overline{\delta}/!\overline{y}]$ as input and  exhaustively distributes  $ f \in \mathcal{F}$  over $\vee$, and $g \in \mathcal{G}$ over $\wedge$. Hence, occurrences of $+\vee$ and $-\wedge$ which are brought to the roots of the generation trees can be eliminated  via exhaustive applications of \textit{splitting} rules.

\textbf{Splitting-rules}.
\begin{center}
    \begin{tabular}{ll}
    \infer[]{\varphi \leq \psi_1\quad \varphi \leq \psi_2}{\varphi \leq \psi_1 \wedge \psi_2 }      &
     \infer[]{\varphi_1\leq \psi \quad \varphi_2 \leq \psi}{\varphi_1 \vee \varphi_2 \leq \psi }
    \end{tabular}
\end{center}

This gives rise to a finite set of {\em definite} analytic inequalities (cf.~discussion after Table \ref{Join:and:Meet:Friendly:Table}), each of which of the form $(\varphi \leq \psi)[\overline{\alpha}/!\overline{x}, \overline{\beta}/!\overline{y}]$, where each $\alpha$ in $\overline{\alpha}$ (resp.~$\beta$  in $\overline{\beta}$) is  of the form $\bigwedge_i \gamma_i$ (resp.~$\bigvee_i \delta_i$), and each $\gamma_i$ (resp.~$\delta_i$) is a definite positive (resp.~negative) PIA formula (cf.~discussion before Table \ref{Join:and:Meet:Friendly:Table}). Each element of this set is passed separately to step 2.

\paragraph{Step 2 (first approximation).}

Let $(\varphi \leq \psi)[\overline{\alpha}/!\overline{x}, \overline{\beta}/!\overline{y}]$ be a definite analytic inequality as above.
In what follows, variables $k$  range in $ K(A^\delta)$, $o$  in $O(A^\delta)$, and  $a, b$  in $A$.  For the sake of improved readability, in what follows we use the same notations both for formulas and for their interpretations in spd-algebras under valuations $v$ into $A$.
\begin{lemma}[Soundness of first approximation]
\label{lemma:first_approx}
    For any spd-algebra $\mathbb{H}$ and every definite analytic inequality $(\varphi \leq \psi)[\overline{\alpha}/!\overline{x}, \overline{\beta}/!\overline{y}]$,

\begin{center}
    \begin{tabular}{rll}
         & $\mathbb{H}^\ast\models (\varphi \leq \psi)[\overline{\alpha}/!\overline{x}, \overline{\beta}/!\overline{y}]$ & \\
    iff & $\mathbb{H}^\ast\models \forall \overline{k}\forall \overline{o}( \overline{k\leq \alpha}\ \&\ \overline{\beta\leq o}\Rightarrow (\varphi \leq \psi)[\overline{k}/!\overline{x}, \overline{o}/!\overline{y}]) $ &\\
    \end{tabular}
\end{center}
\end{lemma}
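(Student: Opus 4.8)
The statement is a standard ``first-approximation'' soundness lemma, and the plan is to prove the two implications separately, using compactness and denseness of the canonical extension together with the monotonicity properties of definite Skeleton and PIA formulas established earlier.

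For the downward direction (from the quantified statement to validity of the inequality), I would fix an arbitrary valuation $v:\mathsf{Prop}\to A$ and show $v(\varphi[\overline\alpha/!\overline x,\overline\beta/!\overline y])\leq v(\psi[\overline\alpha/!\overline x,\overline\beta/!\overline y])$. By Proposition~\ref{prop: positive PIA are open}, each $v(\alpha_i)\in K(A^\delta)$ and each $v(\beta_j)\in O(A^\delta)$ (the $\alpha$'s are definite positive Skeleton, hence closed; the $\beta$'s are definite negative Skeleton, hence open). So one may instantiate the universally quantified $\overline k,\overline o$ with $\overline{v(\alpha)},\overline{v(\beta)}$; the antecedent $\overline{k\leq\alpha}\ \&\ \overline{\beta\leq o}$ holds trivially with equalities, and the consequent gives exactly $v((\varphi\leq\psi)[\overline\alpha/!\overline x,\overline\beta/!\overline y])$ after observing that substituting closed/open interpretations back is literally the interpretation of the composite formula. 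This direction is essentially immediate.

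The upward direction (from validity of the inequality to the quantified statement) is the main obstacle. Here I would assume $\mathbb{H}^\ast\models(\varphi\leq\psi)[\overline\alpha/!\overline x,\overline\beta/!\overline y]$ and fix $\overline k,\overline o$ with $\overline{k\leq\alpha}$ and $\overline{\beta\leq o}$ under some valuation $v$ into $A$; I must derive $v(\varphi)[\overline k/!\overline x,\overline o/!\overline y]\leq v(\psi)[\overline k/!\overline x,\overline o/!\overline y]$. The idea is to ``push the closed/open elements out to clopen approximants'': since $\varphi$ is definite positive Skeleton and $\psi$ definite negative Skeleton, Proposition~\ref{prop: definite skeleton implies preservation properties} tells us $\varphi$ is completely join-preserving (monotone) in the $x$-coordinates and completely meet-reversing (antitone) in the $y$-coordinates, and dually for $\psi$. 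Then, using that each $k_i=\bigwedge\{a\in A\mid k_i\leq a\}$ and each $o_j=\bigvee\{b\in A\mid b\leq o_j\}$ (denseness), together with the complete distribution properties, one reduces the desired inequality between closed/open instantiations to a family of inequalities between clopen instantiations $\varphi[\overline a/!\overline x,\overline b/!\overline y]\leq\psi[\overline a/!\overline x,\overline b/!\overline y]$; but each such clopen inequality follows from the hypothesis $\mathbb{H}^\ast\models(\varphi\leq\psi)[\overline\alpha/!\overline x,\overline\beta/!\overline y]$ once we check that the chosen clopen tuples $\overline a,\overline b$ can be realized as values $\overline{v'(\alpha)},\overline{v'(\beta)}$ for a suitable valuation $v'$, which is where the constraints $\overline{k\leq\alpha}$, $\overline{\beta\leq o}$ and Lemma~\ref{lem: diamond-output equivalence extended} (the Ackermann-style approximation lemma) enter: they guarantee enough clopen witnesses $a\geq k_i$ with $a\leq\alpha_i$-type bounds and $b\leq o_j$ with $\beta_j\leq b$. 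The careful bookkeeping of which clopen elements sit between the closed/open bounds and the PIA-formula values, and assembling them into a single valuation so that the hypothesis applies, is the delicate part; the monotonicity/antitonicity directions must be tracked coordinate by coordinate, and one uses Proposition~\ref{prop:background can-ext}(iii) to pass from ``holds for all clopen approximants'' back to the inequality between the original closed/open instantiations.

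A cleaner packaging, which I would adopt if the bookkeeping above becomes unwieldy, is to prove the contrapositive of the nontrivial direction via an Ackermann-lemma argument: if the inequality fails at some $\overline k,\overline o$ satisfying the antecedent, then by denseness and the distribution properties there is a clopen witness to the failure, i.e.\ clopen $\overline a,\overline b$ with $\overline a\geq\overline k$, $\overline b\leq\overline o$, and $\varphi[\overline a/!\overline x,\overline b/!\overline y]\not\leq\psi[\overline a/!\overline x,\overline b/!\overline y]$; then the order-theoretic relations $k_i\leq a_i$ and (via the PIA monotonicity of $\alpha_i$, since $\alpha_i$ is a meet of positive PIA formulas) $a_i\leq\alpha_i$ under the modified valuation, and dually for the $\beta$'s, let one transfer this to a failure of $(\varphi\leq\psi)[\overline\alpha/!\overline x,\overline\beta/!\overline y]$ itself. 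Either way, the essential engine is Proposition~\ref{prop: definite skeleton implies preservation properties} for the complete distribution of Skeleton formulas plus Proposition~\ref{prop: positive PIA are open} and Lemma~\ref{lem: diamond-output equivalence extended} for the clopen approximation of closed and open elements.
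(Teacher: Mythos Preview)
Your proposal rests on a misidentification that propagates through both directions: the $\alpha$'s are not ``definite positive Skeleton'' but (conjunctions of) definite positive \emph{PIA} formulas, and the $\beta$'s are negative PIA; the Skeleton parts are $\varphi$ and $\psi$ themselves. By Proposition~\ref{prop: positive PIA are open} this gives $v(\alpha)\in O(A^\delta)$ and $v(\beta)\in K(A^\delta)$, exactly the opposite of what you assert. This breaks your ``downward'' argument at once: the variables $\overline k$ range over $K(A^\delta)$, so you cannot instantiate $k$ with the open element $v(\alpha)$.

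The same misidentification leads you to invert the relative difficulty of the two directions. The direction you call ``the main obstacle'' (validity of the inequality $\Rightarrow$ the quantified statement) is in fact immediate from monotonicity alone: given $k\leq v(\alpha)$ and $v(\beta)\leq o$, Proposition~\ref{prop: definite skeleton implies preservation properties} yields
\[
\varphi[\overline k,\overline o]\ \leq\ \varphi[\overline{v(\alpha)},\overline{v(\beta)}]\ \leq\ \psi[\overline{v(\alpha)},\overline{v(\beta)}]\ \leq\ \psi[\overline k,\overline o].
\]
No clopen approximation, compactness, or Lemma~\ref{lem: diamond-output equivalence extended} is needed here, and your attempt to realize arbitrary clopen tuples as values $\overline{v'(\alpha)},\overline{v'(\beta)}$ under some modified valuation $v'$ is not only unnecessary but generally impossible: nothing guarantees a given clopen element lies in the range of a fixed PIA term. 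The genuinely substantive direction is the one you dismissed as immediate. The paper handles it by writing $\alpha=\bigvee\{k\in K(A^\delta)\mid k\leq\alpha\}$ and $\beta=\bigwedge\{o\in O(A^\delta)\mid\beta\leq o\}$ via denseness, then using the \emph{complete} join-preservation and meet-reversal of $\varphi$ together with the complete meet-preservation and join-reversal of $\psi$ (Proposition~\ref{prop: definite skeleton implies preservation properties}) to distribute these out; the equivalence with the quantified statement then follows from the definition of suprema and infima.
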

\begin{proof}
    By denseness, $\alpha = \bigvee\{ k \mid k\leq\alpha\}$ (resp.~$\beta = \bigwedge\{ o \mid \beta \leq o\}$) for any $\alpha$ in $\overline{\alpha}$ (resp.~$\beta$  in $\overline{\beta}$); hence,
    \begin{center}
    \begin{tabular}{rll}
         & $\mathbb{H}^\ast\models (\varphi \leq \psi)[\overline{\alpha}/!\overline{x}, \overline{\beta}/!\overline{y}]$ & \\
    iff & $\mathbb{H}^\ast\models (\varphi \leq \psi)[\overline{\bigvee\{ k \mid k\leq\alpha\}}/!\overline{x}] ,\overline{\bigwedge\{ o \mid \beta \leq o\}}/!\overline{y}]$ &\\
    iff & $\mathbb{H}^\ast\models \bigvee\{ \varphi[\overline{k}/!\overline{x}, \overline{o}/!\overline{y}] \mid \overline{k\leq\alpha}, \overline{\beta \leq o} \} \leq \bigwedge\{ \psi [\overline{o}/!\overline{y},\overline{k}/!\overline{x}]  \mid \overline{k\leq\alpha}, \overline{\beta \leq o} \}$ &\\
    \end{tabular}
\end{center}
    The last equivalence holds since, by Proposition \ref{prop: definite skeleton implies preservation properties}, $\varphi$ (resp.~$\psi$)  is completely join-preserving (resp.~join-reversing) in its $\overline{x}$-coordinates  and completely meet-reversing (resp.~meet-preserving) in its $\overline{y}$-coordinates.
    Hence, by definition of suprema and infima,
    \begin{center}
    \begin{tabular}{rll}
     & $\mathbb{H}^\ast\models \bigvee\{ \varphi[\overline{k}/!\overline{x}, \overline{o}/!\overline{y}] \mid \overline{k\leq\alpha}, \overline{\beta \leq o} \} \leq \bigwedge\{ \psi [\overline{o}/!\overline{y},\overline{k}/!\overline{x}]  \mid \overline{k\leq\alpha}, \overline{\beta \leq o} \}$  &\\
      iff   & $\mathbb{H}^\ast\models \forall \overline{k}\forall \overline{o}( \overline{k\leq \alpha}\ \&\ \overline{\beta\leq o}\Rightarrow (\varphi \leq \psi)[\overline{k}/!\overline{x}, \overline{o}/!\overline{y}]) $.&\\
    \end{tabular}
\end{center}
\end{proof}


\paragraph{Step 3 (compactness).}
\begin{lemma}
    For any spd-algebra $\mathbb{H}$ and every definite analytic $\mathcal{L}$-inequality $(\varphi \leq \psi)[\overline{\alpha}/!\overline{x}, \overline{\beta}/!\overline{y}]$,
\begin{center}
    \begin{tabular}{rll}
     & $\mathbb{H}^\ast\models \mathbb{H}^\ast\models \forall \overline{k}\forall \overline{o}( \overline{k\leq \alpha}\ \&\ \overline{\beta\leq o}\Rightarrow (\varphi \leq \psi)[\overline{k}/!\overline{x}, \overline{o}/!\overline{y}]) $ &\\
iff & $\mathbb{H}^\ast\models \mathbb{H}^\ast\models \forall \overline{k}\forall \overline{o}\forall \overline{a}\forall \overline{b}( \overline{k\leq a\leq \alpha}\ \&\ \overline{\beta\leq b\leq o}\Rightarrow (\varphi \leq \psi)[\overline{k}/!\overline{x}, \overline{o}/!\overline{y}]) $ &\\
 \end{tabular}
\end{center}
\end{lemma}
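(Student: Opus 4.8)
The plan is to prove the two implications separately, with the only non-routine work isolated in a single interpolation step. First I would record the polarities of the parameters inside $A^\delta$. Each $\alpha$ in $\overline{\alpha}$ is a finite conjunction of definite positive PIA formulas, hence is itself a positive PIA formula (a $+\wedge$-node sitting above PIA nodes is again a PIA node, by Table \ref{Join:and:Meet:Friendly:Table}), so Proposition \ref{prop: positive PIA are open}(2) gives $v(\alpha)\in O(A^\delta)$ for every valuation $v:\mathsf{Prop}\to A$; dually, each $\beta$ in $\overline{\beta}$ is a finite disjunction of definite negative PIA formulas, hence a negative PIA formula, so $v(\beta)\in K(A^\delta)$ by Proposition \ref{prop: positive PIA are open}(3). (Alternatively, one applies Proposition \ref{prop: positive PIA are open} to each $\gamma_i$ and each $\delta_i$ separately and then closes under finite meets, resp.\ joins, via Proposition \ref{prop:background can-ext}(iv).) From here on I fix an arbitrary $v:\mathsf{Prop}\to A$, conflating each formula with its value under $v$ as in the rest of this section, and argue coordinatewise — which is legitimate because $\overline{x}$, $\overline{y}$, and hence $\overline{k}$, $\overline{o}$, $\overline{a}$, $\overline{b}$, are finite vectors.

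The top-to-bottom implication is immediate: given $\overline{k}$, $\overline{o}$, $\overline{a}$, $\overline{b}$ witnessing $\overline{k\leq a\leq \alpha}$ and $\overline{\beta\leq b\leq o}$, transitivity of $\leq$ yields $\overline{k\leq \alpha}$ and $\overline{\beta\leq o}$, so the upper statement applies at this $\overline{k}$, $\overline{o}$ and delivers $(\varphi \leq \psi)[\overline{k}/!\overline{x}, \overline{o}/!\overline{y}]$.

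For the converse I would assume the lower statement and let $\overline{k}\in K(A^\delta)$, $\overline{o}\in O(A^\delta)$ be arbitrary with $\overline{k\leq \alpha}$ and $\overline{\beta\leq o}$; it then suffices to exhibit $\overline{a}, \overline{b}\in A$ with $\overline{k\leq a\leq \alpha}$ and $\overline{\beta\leq b\leq o}$, since feeding these into the lower statement yields $(\varphi \leq \psi)[\overline{k}/!\overline{x}, \overline{o}/!\overline{y}]$, and $\overline{k}$, $\overline{o}$ were arbitrary. Fix a coordinate $i$: here $k_i$ is closed, $\alpha_i$ is open (by the first paragraph), and $k_i\leq \alpha_i$. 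Writing $k_i=\bigwedge F_i$ and $\alpha_i=\bigvee I_i$ for nonempty $F_i, I_i\subseteq A$ and invoking compactness of $A$ in $A^\delta$, I obtain $a_i\in F_i$ and $c_i\in I_i$ with $a_i\leq c_i$; then $k_i\leq a_i\leq c_i\leq \alpha_i$, so the clopen $a_i\in A$ is as required (equivalently, Proposition \ref{prop: compactness and existential ackermann}(1)(ii) with both closed arguments set to $k_i$, followed by taking the meet of the two resulting clopen bounds). The $b_i$'s are produced symmetrically, using that $\beta_i$ is closed, $o_i$ is open and $\beta_i\leq o_i$ (or Proposition \ref{prop: compactness and existential ackermann}(2)(ii)). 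The one genuinely non-bookkeeping ingredient is exactly this interpolation of a clopen element between a closed element and an open element lying above it; it is precisely what compactness of the canonical extension provides, and I expect it to be the crux of the argument, even though by now it is entirely standard.
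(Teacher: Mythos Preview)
Your proposal is correct and follows essentially the same route as the paper: establish that each $\alpha$ is open and each $\beta$ is closed via Proposition~\ref{prop: positive PIA are open}, then use compactness to interpolate a clopen element between $k$ and $\alpha$ (resp.\ between $\beta$ and $o$), with the other direction being trivial by transitivity. The paper phrases this more tersely, rewriting $k\leq\alpha$ as $\exists a(k\leq a\leq\alpha)$ in one breath and then pulling the existentials out as universals, but the content is identical.
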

\begin{proof}
     The assumptions imply that each $\alpha$ in $\overline{\alpha}$ (resp.~$\beta$ in $\overline{\beta}$) is a positive (resp.~negative) PIA. Hence, by Proposition \ref{prop: positive PIA are open},  $\alpha \in O(A^\delta)$ (resp.~$\beta \in K(A^\delta)$). Therefore, by compactness,

\begin{center}
    \begin{tabular}{rll}
     & $\mathbb{H}^\ast\models \mathbb{H}^\ast\models \forall \overline{k}\forall \overline{o}( \overline{k\leq \alpha}\ \&\ \overline{\beta\leq o}\Rightarrow (\varphi \leq \psi)[\overline{k}/!\overline{x}, \overline{o}/!\overline{y}]) $ &\\
     iff & $\mathbb{H}^\ast\models \forall \overline{k}\forall \overline{o}  ( \overline{\exists a(k\leq a\leq \alpha)}\ \&\  \overline{\exists b(\beta \leq b\leq o)}\Rightarrow (\varphi \leq \psi)[\overline{k}/!\overline{x}, \overline{o}/!\overline{y}]) $ &\\
     iff & $\mathbb{H}^\ast\models\forall \overline{k}\forall \overline{o}\forall \overline{a}\forall \overline{b}( \overline{k\leq a\leq \alpha}\ \&\  \overline{\beta \leq b\leq o}\Rightarrow (\varphi \leq \psi)[\overline{k}/!\overline{x}, \overline{o}/!\overline{y}])$. & \\
    \end{tabular}
\end{center}
\end{proof}

\paragraph{Step 4 (Ackermann lemma).}
\begin{lemma}
    For any spd-algebra $\mathbb{H}$ and every definite analytic $\mathcal{L}$-inequality $(\varphi \leq \psi)[\overline{\alpha}/!\overline{x}, \overline{\beta}/!\overline{y}]$,

\begin{center}
    \begin{tabular}{rll}
     &
$\mathbb{H}^\ast\models\forall \overline{k}\forall \overline{o}\forall \overline{a}\forall \overline{b}( \overline{k\leq a\leq \alpha}\ \&\  \overline{\beta \leq b\leq o}\Rightarrow (\varphi \leq \psi)[\overline{k}/!\overline{x}, \overline{o}/!\overline{y}])$ & \\
     iff & $\mathbb{H}^\ast\models\forall \overline{a}\forall \overline{b}( \overline{ a\leq \alpha}\ \&\  \overline{\beta \leq b }\Rightarrow (\varphi \leq \psi)[\overline{a}/!\overline{x}, \overline{b}/!\overline{y}]$ & \\
    \end{tabular}
\end{center}
\end{lemma}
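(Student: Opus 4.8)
I will prove the two implications separately; in both directions the argument reduces to (anti)monotonicity of the Skeleton term functions $\varphi$ and $\psi$, available from Proposition~\ref{prop: definite skeleton implies preservation properties}, together with the fact that every clopen element is simultaneously closed and open, i.e.~$A = K(A^\delta)\cap O(A^\delta)$. As in the proof of Lemma~\ref{lemma:first_approx}, I read both displayed conditions via the interpretation of $\varphi,\psi$ as term functions on the canonical extension $\mathbb{H}^\delta$; recall that $\varphi$ is definite positive Skeleton and $\psi$ definite negative Skeleton, so by Proposition~\ref{prop: definite skeleton implies preservation properties} the term function of $\varphi$ on $\mathbb{H}^\delta$ is monotone in each $\overline x$-coordinate and antitone in each $\overline y$-coordinate, while that of $\psi$ is antitone in each $\overline x$-coordinate and monotone in each $\overline y$-coordinate. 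Intuitively this is the ``minimal valuation''/Ackermann step: the hardest instance of the inner inequality of the first formula is the one in which the closed parameters $\overline k$ and the open parameters $\overline o$ are pushed as close as possible to their clopen witnesses $\overline a$ and $\overline b$, i.e.~equal to them, and that hardest instance is exactly the inner inequality of the second formula.

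\textbf{Top-to-bottom.} Assume the first condition and fix $\overline a,\overline b$ with $\overline{a\leq\alpha}$ and $\overline{\beta\leq b}$. Since clopen elements are in particular closed and open, $\overline a$ are closed and $\overline b$ are open, so I may instantiate the universally quantified first condition with $\overline k\coloneqq\overline a$, $\overline o\coloneqq\overline b$, and the given $\overline a,\overline b$: the side conditions $\overline{k\leq a\leq\alpha}$ and $\overline{\beta\leq b\leq o}$ hold by reflexivity of $\leq$ together with the assumptions on $\overline a,\overline b$. The resulting conclusion $(\varphi\leq\psi)[\overline k/!\overline x,\overline o/!\overline y]$ is literally $(\varphi\leq\psi)[\overline a/!\overline x,\overline b/!\overline y]$, which is the conclusion required by the second condition.

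\textbf{Bottom-to-top.} Assume the second condition and fix $\overline k,\overline o,\overline a,\overline b$ with $\overline{k\leq a\leq\alpha}$ and $\overline{\beta\leq b\leq o}$; in particular $\overline{a\leq\alpha}$ and $\overline{\beta\leq b}$, so the second condition yields $\varphi[\overline a/!\overline x,\overline b/!\overline y]\leq\psi[\overline a/!\overline x,\overline b/!\overline y]$. Using $\overline{k\leq a}$ and $\overline{b\leq o}$ together with the monotonicity/antitonicity recalled above, I obtain
\begin{align*}
\varphi[\overline k/!\overline x,\overline o/!\overline y]
&\leq \varphi[\overline a/!\overline x,\overline o/!\overline y]
\leq \varphi[\overline a/!\overline x,\overline b/!\overline y]\\
&\leq \psi[\overline a/!\overline x,\overline b/!\overline y]
\leq \psi[\overline k/!\overline x,\overline b/!\overline y]
\leq \psi[\overline k/!\overline x,\overline o/!\overline y],
\end{align*}
whence $(\varphi\leq\psi)[\overline k/!\overline x,\overline o/!\overline y]$, as required.

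\textbf{Main obstacle.} There is no genuine obstacle: each direction is a two-line monotonicity computation. The only points demanding care are (i) invoking Proposition~\ref{prop: definite skeleton implies preservation properties} with the correct orientation, so that every inequality in the displayed chain runs the right way (monotone versus antitone, in $\overline x$- versus $\overline y$-coordinates, for $\varphi$ versus $\psi$); and (ii) checking that the reflexive choices $\overline k\coloneqq\overline a$, $\overline o\coloneqq\overline b$ used in the easy direction actually satisfy the side conditions $\overline{k\leq a\leq\alpha}$ and $\overline{\beta\leq b\leq o}$, which holds precisely because clopen elements belong to $K(A^\delta)\cap O(A^\delta)$.
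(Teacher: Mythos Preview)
Your proof is correct and follows essentially the same approach as the paper: the ``only if'' direction by instantiating $\overline k\coloneqq\overline a$ and $\overline o\coloneqq\overline b$ using $A\subseteq K(A^\delta)\cap O(A^\delta)$, and the ``if'' direction via the monotonicity/antitonicity of $\varphi$ and $\psi$ from Proposition~\ref{prop: definite skeleton implies preservation properties}. The only difference is cosmetic: you spell out the monotonicity chain coordinate-by-coordinate, whereas the paper collapses it into a single three-step chain.
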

\begin{proof}
The `only if' direction is obtained  by instantiating each $k:=a$ and each $o:=b$, which is allowed, since $A\subseteq K(A^\delta)\cap O(A^\delta)$. To show the `if' direction, let $\overline{a}$, $\overline{b}$, $\overline{k}$, $\overline{o}$ be suitable vectors of elements in $A$, $K(A^\delta)$ and $O(A^\delta)$ s.t.~$\overline{k\leq a \leq \alpha}$ and $\overline{\beta\leq b\leq o}$. The assumptions imply (cf.~Proposition \ref{prop: definite skeleton implies preservation properties}) that the term function $\varphi(!\overline{x}, !\overline{y})$ (resp.~$\psi(!\overline{y}, !\overline{x})$) is monotone (resp.~antitone) in the $\overline{x}$-variables, and antitone (resp.~monotone) in the $\overline{y}$-variables; moreover, $(\varphi \leq \psi)[\overline{a}/!\overline{x}, \overline{b}/!\overline{y}]$. Hence,
    \begin{center}
   $\varphi [\overline{k}/!\overline{x}, \overline{o}/!\overline{y}] \leq \varphi [\overline{a}/!\overline{x}, \overline{b}/!\overline{y}]\leq \psi [\overline{b}/!\overline{y},\overline{a}/!\overline{x}]\leq \psi [ \overline{o}/!\overline{y},\overline{k}/!\overline{x}]$,
\end{center}
which shows that $(\varphi \leq \psi)[\overline{k}/!\overline{x}, \overline{o}/!\overline{y}]$, as  required.
\end{proof}
\paragraph{Step 5 (flattification).}
An  $\mathcal{L}$-inequality is {\em flat} if it is of the form $a\leq b$, or $f(\overline{a})\leq b$, or $a\leq g(\overline{b})$ where $a, b$ and all variables in $\overline{a}$ and $\overline{b}$ range in $\mathsf{Prop}$, $f\in \mathcal{F}$, and $g\in \mathcal{G}$. In what follows, we will generically represent flat inequalities as $\xi(a, \overline{b})$, where the first coordinate is the variable `in display', i.e.~one which occurs in isolation on one side of the inequality.
\begin{definition}
\label{def:big xi}
For any definite positive (resp.~negative) skeleton   $\mathcal{L}$-formula $\varphi(!\overline{x}, !\overline{y})$ (resp.~$\psi(!\overline{y}, !\overline{x})$), the inequality $\varphi[\overline a/!\overline{x}, \overline{b}/!\overline{y}] \leq b$ (resp.~$b\leq \psi[\overline{b}/!\overline{y}, \overline a/!\overline{x}]$) is associated with the existentially quantified conjunction of flat inequalities $\Xi_\varphi(b, \overline a, \overline b)$ (resp.~$\Xi_\psi(b, \overline b, \overline a)$)  defined by simultaneous induction on $\varphi$ and $\psi$ as follows:
\begin{enumerate}
    \item if $\varphi \coloneqq x$ (resp.~$\psi\coloneqq y$), then $\Xi_{\varphi}(a, b)\coloneqq a\leq b$ (resp.~$\Xi_{\psi}(b, a)\coloneqq b\leq a$);
\item if $\varphi \coloneqq f[\overline{\varphi[\overline a/!\overline{x}, \overline{b}/!\overline{y}]}/!\overline{z}, \overline{\psi[ \overline{b}/!\overline{y}, \overline a/!\overline{x}]}/!\overline{z'}]$ (resp.~$\psi\coloneqq g[\overline{\psi[ \overline{b}/!\overline{y}, \overline a/!\overline{x}]}/!\overline{z'}, \overline{\varphi[\overline a/!\overline{x}, \overline{b}/!\overline{y}]}/!\overline{z}]$), then
\[
\Xi_{\varphi}
(b, \overline a, \overline b)
\coloneqq
\exists \overline{u^+} \exists \overline{u^-} \left( f(\overline{u^+}, \overline{u^-})\leq b, \ \& \ \bigwith_i \Xi_{\varphi_i}(u^+_i, \overline a, \overline b) \ \& \ \bigwith_i \Xi_{\psi_i}(u^-_i, \overline b, \overline a)\right)
\]
\[
\Xi_{\psi}
(b, \overline b, \overline a)
\coloneqq
\exists \overline{u^+} \exists \overline{u^-} \left( b\leq g(\overline{u^+}, \overline{u^-}), \ \& \ \bigwith_i \Xi_{\psi_i}(u^+_i, \overline b, \overline a)\ \& \ \bigwith_i \Xi_{\varphi_i}(u^-_i, \overline a, \overline b) \right)
\]
\end{enumerate}
\end{definition}

\begin{lemma}
\label{lemma: flattification of definite skeleton into chains}
For any spd-algebra $\mathbb{H}$ and  any definite positive (resp.~negative) skeleton   $\mathcal{L}$-formula $\varphi(!\overline{x}, !\overline{y})$ (resp.~$\psi(!\overline{y}, !\overline{x})$),
\begin{enumerate}
    \item $\mathbb{H}^\ast\models \varphi[\overline a/!\overline{x}, \overline{b}/!\overline{y}] \leq b\quad $ iff $\quad \mathbb{H}^\ast\models \Xi_\varphi(b, \overline a, \overline b)$
\item $\mathbb{H}^\ast\models b\leq \psi[\overline{b}/!\overline{y}, \overline a/!\overline{x}]\quad $ iff $\quad\mathbb{H}^\ast\models \Xi_\psi(b, \overline b, \overline a)$
\end{enumerate}
\end{lemma}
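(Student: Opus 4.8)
The plan is to prove both statements simultaneously by induction on the complexity of the definite skeleton formulas $\varphi$ and $\psi$, mirroring the recursive definition of $\Xi_\varphi$ and $\Xi_\psi$ in Definition \ref{def:big xi}. The base case, where $\varphi \coloneqq x$ or $\psi \coloneqq y$, is immediate: the associated inequality is literally $a \leq b$ (resp.~$b \leq a$), which is exactly $\Xi_\varphi(a,b)$ (resp.~$\Xi_\psi(b,a)$), so there is nothing to verify beyond unfolding definitions.

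For the inductive step, consider $\varphi \coloneqq f[\,\overline{\varphi_i}/!\overline{z},\ \overline{\psi_j}/!\overline{z'}\,]$ with $f \in \mathcal{F}$ definite positive skeleton (so $f$ is SLR, hence completely join-preserving in its positive coordinates and completely meet-reversing in its negative coordinates on $\mathbb{H}^\delta$ by Lemma \ref{lemma: distribution properties of sigma pi}, and each subformula $\varphi_i$, $\psi_j$ is itself a definite skeleton formula of lower complexity). I would argue the equivalence $\mathbb{H}^\ast \models \varphi[\overline a/!\overline x, \overline b/!\overline y] \leq b$ iff $\mathbb{H}^\ast \models \Xi_\varphi(b,\overline a, \overline b)$ in two directions. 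From left to right: set $u^+_i \coloneqq \varphi_i[\overline a/!\overline x, \overline b/!\overline y]$ and $u^-_j \coloneqq \psi_j[\overline b/!\overline y, \overline a/!\overline x]$; then $f(\overline{u^+}, \overline{u^-}) = \varphi[\overline a/!\overline x, \overline b/!\overline y] \leq b$ by assumption, while each $\Xi_{\varphi_i}(u^+_i, \overline a, \overline b)$ and $\Xi_{\psi_j}(u^-_j, \overline b, \overline a)$ holds by the induction hypothesis applied to the trivially true inequalities $\varphi_i[\overline a/!\overline x, \overline b/!\overline y] \leq u^+_i$ and $u^-_j \leq \psi_j[\overline b/!\overline y, \overline a/!\overline x]$ (using reflexivity). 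Hence the existential witnesses exist and $\Xi_\varphi(b,\overline a,\overline b)$ holds. Conversely, suppose witnesses $\overline{u^+}, \overline{u^-}$ make the body of $\Xi_\varphi$ true. By the induction hypothesis, $\Xi_{\varphi_i}(u^+_i,\overline a, \overline b)$ gives $\varphi_i[\overline a/!\overline x, \overline b/!\overline y] \leq u^+_i$ and $\Xi_{\psi_j}(u^-_j,\overline b,\overline a)$ gives $u^-_j \leq \psi_j[\overline b/!\overline y, \overline a/!\overline x]$; then monotonicity of $f$ in the positive coordinates and antitonicity in the negative ones yields $\varphi[\overline a/!\overline x, \overline b/!\overline y] = f(\overline{\varphi_i[\cdots]}, \overline{\psi_j[\cdots]}) \leq f(\overline{u^+}, \overline{u^-}) \leq b$, as required. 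The case $\psi \coloneqq g[\,\overline{\psi_i}/!\overline{z'},\ \overline{\varphi_j}/!\overline{z}\,]$ with $g \in \mathcal{G}$ is order-dual and handled analogously, using the meet-preservation/join-reversal properties of $g$.

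One subtlety I would be careful about is the interleaving of the two halves of the statement: because a positive skeleton $\varphi$ can have negative skeleton subformulas $\psi_j$ sitting in its $\partial$-coordinates (and vice versa), the induction must be a genuine simultaneous induction on the pair, with the induction hypothesis available for both items at lower complexity — this is exactly why Definition \ref{def:big xi} defines $\Xi_\varphi$ and $\Xi_\psi$ by simultaneous recursion, and the proof structure should match it node for node. A second point worth stating explicitly is that all the monotonicity/antitonicity facts invoked are about the term functions on the canonical extension $\mathbb{H}^\delta$ (validity in $\mathbb{H}^\ast$ being defined via $(\mathbb{H}^\ast)^\delta$), so the appeal is to Lemma \ref{lemma: distribution properties of sigma pi} and Proposition \ref{prop: definite skeleton implies preservation properties} rather than to any property of the clopen algebra alone.

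The main obstacle, such as it is, is bookkeeping rather than conceptual: keeping the vectors $\overline a$, $\overline b$, $\overline{u^+}$, $\overline{u^-}$ and the coordinate-wise sign pattern of $f$ and $g$ straight through the recursion, and making sure the "trivially true" instances of the induction hypothesis used in the left-to-right direction are correctly stated (one feeds the hypothesis an inequality of the form $\varphi_i[\cdots] \leq u^+_i$ with $u^+_i$ chosen to be precisely $\varphi_i[\cdots]$). No new semantic ingredient beyond complete distributivity of the skeleton term functions is needed; the existential quantifiers in $\Xi$ are there precisely to record the intermediate values that the distributivity manipulation would otherwise hide, so the equivalence is essentially a restatement of the preservation properties in flattened form.
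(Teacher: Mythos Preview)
Your right-to-left direction is fine, but the left-to-right direction has a genuine gap: the existential variables $\overline{u^+}, \overline{u^-}$ in $\Xi_\varphi$ range over $A$ (this is the whole point of flattification --- to produce a formula in $\mathcal{L}^{\mathrm{FO}}_{spd}$, whose quantifiers range over the base algebra), whereas your proposed witnesses $u^+_i \coloneqq \varphi_i[\overline a/!\overline x,\overline b/!\overline y]$ lie in $K(A^\delta)$ and $u^-_j \coloneqq \psi_j[\overline b/!\overline y,\overline a/!\overline x]$ lie in $O(A^\delta)$ (by Proposition~\ref{prop: positive PIA are open}), and these need not be clopen. So the ``obvious'' witnesses simply do not instantiate the existential.

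The paper's proof closes this gap via Lemma~\ref{lem: diamond-output equivalence extended}, which is a compactness-type statement: from $f(\overline k,\overline o)\leq b$ with $\overline k$ closed and $\overline o$ open one extracts genuine elements $\overline{a'},\overline{b'}\in A$ with $\overline k\leq\overline{a'}$, $\overline{b'}\leq\overline o$, and $f(\overline{a'},\overline{b'})\leq b$. One first invokes Proposition~\ref{prop: positive PIA are open} to certify that the immediate subterms are closed/open, and then Lemma~\ref{lem: diamond-output equivalence extended} to produce the clopen witnesses $\overline{u^+}\coloneqq\overline{a'}$, $\overline{u^-}\coloneqq\overline{b'}$; only then does the induction hypothesis apply to $\varphi_i[\cdots]\leq a'_i$ and $b'_j\leq\psi_j[\cdots]$. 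Your argument bypasses this compactness step entirely, and monotonicity alone cannot manufacture clopen interpolants. The remark that ``no new semantic ingredient beyond complete distributivity of the skeleton term functions is needed'' is therefore incorrect: compactness of $A$ in $A^\delta$ is essential for the forward direction.
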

\begin{proof}
    By simultaneous induction on  $\varphi$ and $\psi$.
    If $\varphi: = x$ or $\psi\coloneqq  y$, the statement is  immediately true. As to the inductive step, we proceed by cases.
    if $\varphi: = \tw \psi$, then    $\varphi[\overline a/!\overline x, \overline b/!\overline y]: = \tw \psi[\overline b/!\overline y,  \overline a/!\overline x]$ for some  negative Skeleton formula $\psi$ of  strictly lower complexity.  Hence, by Proposition \ref{prop: positive PIA are open}.2, $\psi[\overline b/!\overline y,  \overline a/!\overline x]\in O(A^\delta)$. Therefore,
 \begin{center}
     \begin{tabular}{r c l l}
      $\mathbb{H}^\ast\models \tw \psi[\overline b/!\overline y,  \overline a/!\overline x]\leq b$ & iff & $\mathbb{H}^\ast\models \exists u_1 (\tw u_1\leq b\ \&\ u_1\leq \psi[\overline b/!\overline y,  \overline a/!\overline x])$ & Lemma \ref{lem: diamond-output equivalence extended}.1(i) \\
      & iff & $\mathbb{H}^\ast\models \exists u_1 (\tw u_1\leq b\ \&\ \Xi_\psi(u_1, \overline b, \overline a))$ & inductive hypothesis \\
    & iff & $\mathbb{H}^\ast\models \Xi_\varphi(b, \overline a, \overline{b})$, & Definition \ref{def:big xi} \\
     \end{tabular}
 \end{center}
 as required.  The remaining cases are proved similarly, using Proposition \ref{prop: positive PIA are open} and Lemma \ref{lem: diamond-output equivalence extended}.
\end{proof}
   In what follows, we  let $\Xi'(b, \overline u, \overline{b},\overline a)$ denote $\Xi(b, \overline{b},\overline a)$ stripped of the existential quantification.

\begin{lemma}
For any spd-algebra $\mathbb{H}$ and every definite analytic $\mathcal{L}$-inequality $(\varphi \leq \psi)[\overline{\alpha}/!\overline{x}, \overline{\beta}/!\overline{y}]$,

\begin{center}
    \begin{tabular}{rll}

 & $\mathbb{H}^\ast\models\forall \overline v\forall \overline{a}\forall \overline{b}( \overline{ a\leq \alpha}\ \&\  \overline{\beta \leq b }\Rightarrow (\varphi \leq \psi)[\overline{a}/!\overline{x}, \overline{b}/!\overline{y}])$ & \\
iff &$\mathbb{H}^\ast\models\forall \overline v\forall \overline a \forall \overline b \forall \overline c \exists  d \exists \overline d( \overline{\bigwith_i \Xi'_{\gamma_i}(a,\overline c, \overline v, \overline v)}\ \&\ \overline{\bigwith_j \Xi'_{\delta_j}(b, \overline c, \overline v, \overline v)}\Rightarrow (\Xi'_{\varphi}(d, \overline d, \overline a, \overline b) \ \&\ \Xi'_{\psi}(d, \overline d, \overline b, \overline a)))$.&\\

    \end{tabular}
\end{center}

\end{lemma}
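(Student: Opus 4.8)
The plan is to transform the left-hand side into the right-hand side through a chain of equivalences, using the flattification machinery of Definition~\ref{def:big xi} and Lemma~\ref{lemma: flattification of definite skeleton into chains}, together with Proposition~\ref{prop: positive PIA are open} and compactness. First, since $\mathbb{H}^\ast\models$ already quantifies over all valuations into $A$, the outermost $\forall\overline v$ on both sides may be made explicit without changing the assertion. Next, recall from Step~1 that each $\alpha$ in $\overline\alpha$ has the form $\bigwith_i\gamma_i$ (a finite meet) with each $\gamma_i$ a definite positive PIA formula, and each $\beta$ in $\overline\beta$ has the form $\bigvee_j\delta_j$ with each $\delta_j$ a definite negative PIA formula; hence $a\leq\alpha$ is equivalent to $\bigwith_i(a\leq\gamma_i)$, and $\beta\leq b$ is equivalent to $\bigwith_j(\delta_j\leq b)$. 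This reduces the antecedent of the implication to a conjunction of inequalities of the form $a\leq\gamma_i(\overline v)$ and $\delta_j(\overline v)\leq b$.

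For each such inequality I would invoke Lemma~\ref{lemma: flattification of definite skeleton into chains}, exploiting the fact (stated after Proposition~\ref{prop: positive PIA are open}) that a positive PIA formula is a negative Skeleton formula and a negative PIA formula is a positive Skeleton formula, and that ``definite'' is preserved under this correspondence since the relevant PIA and Skeleton classifications of the connectives involved coincide. Treating $\gamma_i$ as a definite negative Skeleton formula whose placeholders are instantiated by the variables $\overline v$, part~2 of Lemma~\ref{lemma: flattification of definite skeleton into chains} gives $\mathbb{H}^\ast\models a\leq\gamma_i(\overline v)$ iff $\mathbb{H}^\ast\models\Xi_{\gamma_i}(a,\overline v,\overline v)$, i.e.~iff $\mathbb{H}^\ast\models\exists\overline c_i\,\Xi'_{\gamma_i}(a,\overline c_i,\overline v,\overline v)$; dually, part~1 gives $\mathbb{H}^\ast\models\delta_j(\overline v)\leq b$ iff $\mathbb{H}^\ast\models\exists\overline c_j\,\Xi'_{\delta_j}(b,\overline c_j,\overline v,\overline v)$. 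Collecting all the fresh witnesses (which are pairwise disjoint) into a single vector $\overline c$ and merging the existential blocks, the antecedent $\overline{a\leq\alpha}\ \&\ \overline{\beta\leq b}$ becomes equivalent to $\exists\overline c\big(\overline{\bigwith_i\Xi'_{\gamma_i}(a,\overline c,\overline v,\overline v)}\ \&\ \overline{\bigwith_j\Xi'_{\delta_j}(b,\overline c,\overline v,\overline v)}\big)$.

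For the consequent $(\varphi\leq\psi)[\overline a/!\overline x,\overline b/!\overline y]$, I would first observe that $\varphi$ is definite positive Skeleton and $\psi$ definite negative Skeleton, so by Proposition~\ref{prop: positive PIA are open} we have $\varphi[\overline a/!\overline x,\overline b/!\overline y]\in K(A^\delta)$ and $\psi[\overline b/!\overline y,\overline a/!\overline x]\in O(A^\delta)$ for any $\overline a,\overline b\in A$. Hence, by compactness and denseness (equivalently, Proposition~\ref{prop: compactness and existential ackermann}.1(ii) with $k_1=k_2$), $\mathbb{H}^\ast\models\varphi[\overline a/!\overline x,\overline b/!\overline y]\leq\psi[\overline b/!\overline y,\overline a/!\overline x]$ iff there exists a clopen midpoint $d\in A$ with $\varphi[\overline a/!\overline x,\overline b/!\overline y]\leq d\leq\psi[\overline b/!\overline y,\overline a/!\overline x]$, i.e.~iff $\mathbb{H}^\ast\models\exists d\,(\varphi[\overline a/!\overline x,\overline b/!\overline y]\leq d\ \&\ d\leq\psi[\overline b/!\overline y,\overline a/!\overline x])$. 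Applying Lemma~\ref{lemma: flattification of definite skeleton into chains}.1 to $\varphi[\overline a/!\overline x,\overline b/!\overline y]\leq d$ and Lemma~\ref{lemma: flattification of definite skeleton into chains}.2 to $d\leq\psi[\overline b/!\overline y,\overline a/!\overline x]$, then stripping and renaming the inner existential quantifiers into one fresh vector $\overline d$, the consequent becomes equivalent to $\exists d\exists\overline d\,(\Xi'_\varphi(d,\overline d,\overline a,\overline b)\ \&\ \Xi'_\psi(d,\overline d,\overline b,\overline a))$.

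It then remains to rearrange the quantifier prefix: substituting the two displayed equivalences into the implication and noting that the antecedent is free of $d,\overline d$ while the consequent is free of $\overline c$, one pulls $\exists\overline c$ out of the antecedent as $\forall\overline c$ and $\exists d\exists\overline d$ out of the consequent to the front of the matrix (both moves being valid first-order equivalences over a nonempty domain), which yields exactly $\mathbb{H}^\ast\models\forall\overline v\forall\overline a\forall\overline b\forall\overline c\exists d\exists\overline d(\dots)$. I expect the only genuinely delicate point to be the consequent step: splitting $\varphi\leq\psi$ through a single \emph{clopen} midpoint $d\in A$ must be justified, and this is precisely where the closed/open shape of $\varphi$ and $\psi$ (Proposition~\ref{prop: positive PIA are open}) together with compactness are essential; everything else is a routine combination of Lemma~\ref{lemma: flattification of definite skeleton into chains}, the distributivity of $\leq$ over finite meets and joins, the identification of definite positive (resp.~negative) PIA formulas with definite negative (resp.~positive) Skeleton formulas, and prenex manipulations.
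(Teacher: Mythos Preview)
Your proposal is correct and follows essentially the same route as the paper's own proof: break each $\alpha$ and $\beta$ into their definite PIA constituents, apply Lemma~\ref{lemma: flattification of definite skeleton into chains} to the antecedent inequalities via the identification of definite positive (resp.\ negative) PIA with definite negative (resp.\ positive) Skeleton, use Proposition~\ref{prop: positive PIA are open} together with compactness to interpolate a clopen midpoint $d$ in the consequent, apply Lemma~\ref{lemma: flattification of definite skeleton into chains} again to the two resulting halves, and finally prenex the quantifiers. The paper's argument is more compressed but uses exactly these ingredients in the same order; your explicit justification of the midpoint step via Proposition~\ref{prop: compactness and existential ackermann} and of the prenex manipulations is a welcome addition.
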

\begin{proof}
     By assumption,  $\varphi [\overline{\alpha}/!\overline{x}, \overline{\beta}/!\overline{y}]$ is definite positive Skeleton (hence, by Proposition \ref{prop: positive PIA are open}, $\varphi [\overline{\alpha}/!\overline{x}, \overline{\beta}/!\overline{y}]\in K(A^\delta)$), $\psi[\overline{\beta}/!\overline{y}, \overline{\alpha}/!\overline{x}]$ is definite negative Skeleton (hence, by Proposition \ref{prop: positive PIA are open}, $\psi [\overline{\beta}/!\overline{y}, \overline{\alpha}/!\overline{x}]\in O(A^\delta)$), and $\alpha = \bigwedge_i\gamma_i$ for each $\alpha$ in $\overline \alpha$, with each $\gamma_i$ definite positive PIA (i.e.~definite negative Skeleton) containing only variables in $\overline v$, and $\beta = \bigvee_j\delta_j$ for each $\beta$ in $\overline \beta$ with each $\delta_j$ definite negative PIA (i.e.~definite positive Skeleton) containing only variables in $\overline v$. Thus,
    \begin{center}
    \begin{tabular}{rll}
      & $\mathbb{H}^\ast\models\forall \overline v \forall \overline{a}\forall \overline{b}( \overline{ a\leq \alpha}\ \&\  \overline{\beta \leq b }\Rightarrow (\varphi \leq \psi)[\overline{a}/!\overline{x}, \overline{b}/!\overline{y}])$ & \\
     iff  & $\mathbb{H}^\ast\models\forall \overline v\forall a \forall b ( \overline{\bigwith_i a\leq \gamma_i}\ \&\ \overline{\bigwith_j \delta_j\leq b}\Rightarrow (\varphi \leq \psi)[\overline{a}/!\overline{x}, \overline{b}/!\overline{y}])$ & def.~$\bigwedge$, $\bigvee$ \\
     iff  & $\mathbb{H}^\ast\models\forall \overline v\forall \overline a \forall \overline b ( \overline{\bigwith_i \Xi_{\gamma_i}(a, \overline v, \overline v)}\ \&\ \overline{\bigwith_j \Xi_{\delta_j}(b, \overline v, \overline v)}\Rightarrow \exists d(\varphi [\overline{a}/!\overline{x}, \overline{b}/!\overline{y}] \leq d\ \&\ d\leq \psi[\overline{b}/!\overline{y}, \overline{a}/!\overline{x}]))$ & compactness \\
     iff  & $\mathbb{H}^\ast\models \forall \overline v\forall \overline a \forall \overline b ( \overline{\bigwith_i \Xi_{\gamma_i}(a, \overline v, \overline v)}\ \&\ \overline{\bigwith_j \Xi_{\delta_j}(b, \overline v, \overline v)}\Rightarrow \exists d(\Xi_{\varphi}(d, \overline a, \overline b) \ \&\ \Xi_{\psi}(d, \overline b, \overline a)))$ & Lemma \ref{lemma: flattification of definite skeleton into chains} \\
     iff  & $\mathbb{H}^\ast\models\forall \overline v\forall \overline a \forall \overline b \forall \overline c \exists  d \exists \overline d( \overline{\bigwith_i \Xi'_{\gamma_i}(a,\overline c, \overline v, \overline v)}\ \&\ \overline{\bigwith_j \Xi'_{\delta_j}(b, \overline c, \overline v, \overline v)}\Rightarrow (\Xi'_{\varphi}(d, \overline d, \overline a, \overline b) \ \&\ \Xi'_{\psi}(d, \overline d, \overline b, \overline a)))$ & fresh var's \\
    \end{tabular}
\end{center}
\end{proof}


\paragraph{Step 6 (translation):} Finally, every flat inequality occurring in any $\Xi'$ can be translated to an atomic $\mathcal{L}^{\mathrm{FO}}_{spd}$-formula as indicated in the following table (in which $f\in \mathcal{F}_A$ and $g\in \mathcal{G}_A$):
\smallskip

{{\centering
\begin{tabular}{c c c||ccc}
$\tau(f(\overline a, \overline b)\leq b)$  & $=$ & $f(\overline a, \overline b)\leq b$ &  $\tau(a\leq g(\overline b,\overline a))$ & $=$ & $a\leq g(\overline b,\overline a)$  \\

   $\tau(\Diamond a\leq b)$  & $=$ & $a\prec b$ &  $\tau(a\leq \blacksquare b)$ & $=$ & $a\prec b$  \\
   $\tau(b \leq \wt a)$  & $=$ & $a\npcon b$ &   $\tau(a\leq \bt b)$ &$=$ & $a\npcon b$  \\
     $\tau(\tw b \leq  a)$  & $=$ & $a\nppcon b$   & $\tau(\tb a\leq  b)$ & $=$ & $a\nppcon b$  \\
\end{tabular}
\par}}

 In what follows, we let $\Xi^\tau(b, \overline u, \overline{b}, \overline a)$ denote the result of replacing all flat  inequalities in $\Xi'(b, \overline u, \overline{b}, \overline a)$ with their translations.

\paragraph{Simplifications.}
In case $\varphi(!x)\coloneqq x$ and every $\gamma$ and $\delta$ in $\psi[\overline{\delta}/!\overline{y}, \overline{\gamma}/!\overline{x} ]$ is atomic (resp.~$\psi(!y)\coloneqq y$ and every $\gamma$ and $\delta$ in $\varphi[\overline{\gamma}/!\overline{x}, \overline{\delta}/!\overline{y}]$ is atomic), some steps in the procedure can be  simplified or skipped altogether. Specifically, after step 1, each definite analytic inequality returned at the end of step 1 is of the form $\alpha(\overline v)\leq \psi (\overline v)$ (resp.~$\varphi(\overline v)\leq \beta (\overline v)$), with $\alpha\coloneqq \bigwedge_i \gamma_i(\overline v)$ (resp.~$\beta\coloneqq \bigvee \delta_j(\overline v)$) with each $\gamma_i$ (resp.~$\delta_j$) definite positive (resp.~negative) PIA and $\psi$ (resp.~$\varphi$) definite negative (resp.~positive) Skeleton, and since every variable in $\overline v$ ranges in $A$, by Proposition \ref{prop: positive PIA are open},  $\alpha(\overline v), \psi (\overline v)\in O(A^\delta)$ (resp.~$\beta(\overline v), \varphi(\overline v) \in K(A^\delta)$).
Hence, rather than performing first approximation using closed and open variables and the general denseness, we can use the definition that the set of closed  (resp.~open) elements is the meet-closure (resp.~join-closure) of $A$, and hence perform first approximation by introducing fresh variables ranging in $A$ as follows (cf.~\cite[Proposition 2.6.1(i) and (ii)]{de2024obligations2}):
\begin{center}
\begin{tabular}{cc}
    \begin{tabular}{rll}
         & $\mathbb{H}^\ast\models \alpha\leq \psi $ & \\
    iff  & $\mathbb{H}^\ast\models\forall a  ( \bigwith_i a\leq \gamma_i\Rightarrow a\leq \psi )$ & \\
    \end{tabular}
    &
     \begin{tabular}{rll}
         & $\mathbb{H}^\ast\models \varphi\leq \beta$ & \\
    iff  & $\mathbb{H}^\ast\models\forall b  ( \bigwith_j  \delta_i\leq b\Rightarrow \varphi \leq b)$ & \\
    \end{tabular}
    \end{tabular}
\end{center}
thus  steps 3 and 4 can be skipped. Moreover, in step 5, there is no need to apply compactness to the consequent inequality of the quasi inequality, since it is already in a suitable shape for the application of Lemma \ref{lemma: flattification of definite skeleton into chains}.

\begin{remark}
\label{remark:extend_to_clopen}
The steps above can be applied also to clopen-analytic inequalities  $(\varphi \leq \psi)[\overline t / !\overline x]$ (cf.\ Definition \ref{def:clopen_analytic}). Indeed, by Proposition \ref{prop: positive PIA are open}.1, $v(t) \in A$ for any valuation $v: \mathsf{Prop} \to A$, and any clopen term $t$.
For any such inequality, the correspondent resulting from the procedure above is the conjunction of the formulas in $\mathsf{FLAT}(\mathsf{ALBA}((\varphi \leq \psi)(!\overline x)))^\tau[\overline t / !\overline x]$, i.e., the first order correspondent resulting from applying the procedure above to $(\varphi \leq \psi)(!\overline x)$, where the placeholders in $\overline x$ are then replaced by the terms in $\overline t$.
\end{remark}

\begin{algorithm}
\caption{Algorithm $\mathsf{ALBA}$}
\label{algo:alba}
\small
\hspace*{\algorithmicindent} \textbf{Input}: An analytic inequality $\varphi \leq \psi$ \\
\hspace*{\algorithmicindent} \textbf{Output}: a set of quasi-inequalities encoded as a pair containing the set of inequalities occurring in the antecedent, and the inequality in the consequent.
\begin{algorithmic}[1]
\Function{$\mathsf{ALBA}$}{$\varphi$, $\psi$}
    \State {\textbf{let} $\bigvee^n_i\varphi_i\leq \bigwedge^m_j \psi_j$ the inequality obtained by distributing $\vee$ and $\wedge$ in $\varphi\leq\psi$ exhaustively}
    \State {\textbf{let} $output \leftarrow \varnothing$}
    \For{$i \in \{1, \ldots, n\}$, $j \in \{1, \ldots, m\}$}
        \State {\textbf{let} $ineqs \leftarrow \varnothing$ }
        \State {\textbf{let} $consequent \leftarrow \mathrm{\bf copy} \ \mathrm{ of } \ \varphi_i \leq \psi_j$}
        \For{every maximal PIA subformula $\alpha$ of $\varphi_i \leq \psi_j$ (see note below)}
            \State{\textbf{let} $a$ be a fresh variable}
            \State{$ineqs \leftarrow ineqs \cup \{a \leq \alpha \}$ {\bf if} $\alpha$ is positive in $\varphi_i \leq \psi_j$ \\ \hspace*{\algorithmicindent}\hspace*{\algorithmicindent}\hspace*{\algorithmicindent}\phantom{$ineqs \leftarrow$} $ineqs \cup \{\alpha \leq a \}$ {\bf otherwise}}
            \State{\textbf{replace} $\alpha$ in $consequent$ by $a$}
        \EndFor
        \State {$output \leftarrow output \cup \{ (ineqs, consequent) \}$}
    \EndFor
    \State\Return{$output$}
\EndFunction
\end{algorithmic}
\hspace*{\algorithmicindent} {\footnotesize \textbf{Note}: maximal PIA subformulas can be easily enumerated when the input inequality $\varphi\leq\psi$ is encoded by its syntax tree: it is sufficient to perform a DFS visit of the two sides of the inequality which keeps track of the polarity of the node currently visited. As soon as a PIA node is encountered ($+g$ or $-f$), it is returned, and the visit stops proceeding more in depth from that node. For certain applications, it might be useful to encode analytic inequalities by storing their skeleton and the list of maximal PIA subformulas, in which case, it would be even more straightforward to enumerate them.}
\end{algorithm}

\begin{algorithm}
\caption{Algorithm $\mathsf{FLAT}$}
\label{algo:flat}
\small
\hspace*{\algorithmicindent} \textbf{Input}: An inequality $\varphi \leq \psi$, such that $\varphi$ and $\psi$ only contain skeleton nodes. \\
\hspace*{\algorithmicindent} \textbf{Output}: a set of inequalities equivalent to $\varphi \leq \psi$, and such that each element contains almost one connective.
\begin{algorithmic}[1]
\Function{{Flat}}{$\varphi$, $\psi$}
    \If{$\varphi\leq\psi$ contains at most one connective}
        \State\Return{$\{ \varphi\leq\psi \}$}
    \EndIf
    \If{$\varphi$ is a variable}
        \State{{\bf let} $\psi_1,\ldots,\psi_n$ be the children of $\psi$'s root connective}
        \If{$\psi$'s root connective is $\wedge$}
            \State\Return{\textsc{Flat}$(\{\varphi \leq \psi_1\}) \ \cup $ \textsc{Flat}$(\{\varphi \leq \psi_2\})$}
        \Else
            \State {\textbf{let} $h$ be $\psi$'s root connective, and $c_1,\ldots,c_n$ be fresh variables}
            \State\Return{$\{ \varphi \leq h(c_1,\ldots,c_n) \} \ \cup \bigcup_i^n$ \textsc{Flat}$(\{c_i \leq^{\varepsilon_h(i)} \psi_i\})$ }
        \EndIf
    \Else
        \State{{\bf let} $\varphi_1,\ldots,\varphi_n$ be the children of $\varphi$'s root connective}
        \If{$\varphi$'s root connective is $\vee$}
            \State\Return{\textsc{Flat}$(\{\varphi_1 \leq \psi\}) \ \cup $ \textsc{Flat}$(\{\varphi_2 \leq \psi\})$}
        \Else
            \State {\textbf{let} $h$ be $\varphi$'s root connective, and $c_1,\ldots,c_n$ be fresh variables}
            \State\Return{$\{ h(c_1,\ldots,c_n) \leq \psi \} \ \cup \bigcup_i^n$ \textsc{Flat}$(\{\varphi_i \leq^{\varepsilon_h(i)} c_i\})$ }
        \EndIf
    \EndIf
\EndFunction
\end{algorithmic}
\hspace*{\algorithmicindent} \textbf{Input}: A finite collection $X$ of quasi-inequalities $\bigwith_i \varphi_i\leq \psi_i\Rightarrow \varphi\leq \psi$ s.t.~$\varphi_i, \varphi$ are positive Skeleton and $\psi_i, \psi$ are negative Skeleton. \\
\hspace*{\algorithmicindent} \textbf{Output}: A finite collection of formulas obtained by applying \textsc{Flat} to all inequalities in each formula in $X$.
\begin{algorithmic}[1]
\Function{$\mathsf{FLAT}$}{$X$}
    \For{all inequalities $\varphi\leq\psi$ in formulas $\xi$ in $X$}
        \State {\bf substitute} the occurrence of the inequality with $\bigwith$ \textsc{Flat}$(\varphi, \psi)$
    \EndFor
    \State\Return $X$
\EndFunction
\end{algorithmic}
\end{algorithm}

\section{Example}
\label{sec:examples}
Let us demonstrate the procedure described in the previous section on the $\mathcal{L}$-inequality \[\tw (\blacksquare \wt\tw \tw \wt v_1\wedge \wt \Diamond \wt v_1)\leq \blacksquare \wt (\bt v_1\vee \blacksquare \bt (\tb \wt v_2\vee \Diamond v_2 ))\]
which was shown to be  analytic   in Example \ref{ex:running example}.

In step 1, we distribute the leftmost occurrence of $\tw$ over $\wedge$ on the lefthand side of the inequality, and the  terms $\blacksquare \wt$ and $\blacksquare \bt$ over $\vee$, thus obtaining

\[ \tw\blacksquare \wt\tw \tw \wt v_1\vee \tw\wt \Diamond \wt v_1\leq  \blacksquare \wt\bt v_1\wedge \blacksquare \wt(\blacksquare \bt \tb \wt v_2\wedge \blacksquare \bt \Diamond v_2 ).\]
By exhaustively applying the splitting rules, we obtain the following four definite analytic inequalities, on each of which we proceed separately:
\begin{center}
\begin{tabular}{ll}
$\tw\blacksquare \wt\tw \tw \wt v_1 \leq \blacksquare \wt\bt v_1 $& $  \tw\wt \Diamond \wt v_1 \leq \blacksquare \wt\bt v_1$\\

$\tw\blacksquare \wt\tw \tw \wt v_1 \leq  \blacksquare \wt(\blacksquare \bt \tb \wt v_2\wedge \blacksquare \bt \Diamond v_2 )$& $

\tw\wt \Diamond \wt v_1 \leq  \blacksquare \wt(\blacksquare \bt \tb \wt v_2\wedge \blacksquare \bt \Diamond v_2 ).$\\
\end{tabular}
\end{center}
In the remainder of this section, we apply the procedure  only on the lower left inequality above.  This is a definite analytic inequality $\varphi [\beta/!y] \leq \psi [\alpha/!x]$, with $\varphi(!y)\coloneqq \tw\blacksquare \wt\tw y$ definite positive Skeleton, $\psi(!x)\coloneqq \blacksquare \wt x$ definite negative Skeleton, $\alpha (v_2)\coloneqq \gamma_1\wedge \gamma_2$ where $\gamma_1\coloneqq \blacksquare \bt \tb \wt v_2$ and $\gamma_2\coloneqq \blacksquare \bt \Diamond v_2 $ are definite positive PIA  and $\beta =  \delta_1\coloneqq \tw\wt v_1$ is definite negative PIA.
In step 2, using denseness, we introduce fresh open and closed variables and equivalently rewrite the lower left inequality above as the following quasi-inequality:
\[\forall k \forall o(\tw \wt v_1\leq o\ \&\ k\leq \blacksquare \bt \tb \wt v_2\wedge \blacksquare \bt \Diamond v_2 \Rightarrow \tw\blacksquare \wt\tw o \leq  \blacksquare \wt k).\]
In step 3, we apply compactness and introduce two fresh variables ranging in $A$:
\[\forall k \forall o\forall a \forall b(\tw \wt v_1\leq b\leq o\ \&\ k\leq a\leq  \blacksquare \bt \tb \wt v_2\wedge \blacksquare \bt \Diamond v_2 \Rightarrow \tw\blacksquare \wt\tw o \leq  \blacksquare \wt k).\]
In step 4, we apply Ackermann's lemma to eliminate $k$ and $o$:
\[\forall a \forall b(\tw \wt d\leq b \ \&\  a\leq  \blacksquare \bt \tb \wt v_2\wedge \blacksquare \bt \Diamond v_2 \Rightarrow \tw\blacksquare \wt\tw b \leq  \blacksquare \wt a).\]
In step 5, we
first apply compactness to the inequality in the consequent of the quasi-inequality above, and splitting to eliminate $\wedge$ in the second inequality in the antecedent, thus obtaining:
\[\forall a \forall b\exists d_0(\tw \wt v_1\leq b \ \&\  a\leq  \blacksquare \bt \tb \wt v_2\ \&\  a\leq \blacksquare \bt \Diamond v_2 \Rightarrow \tw\blacksquare \wt\tw b \leq d_0\ \&\ d_0\leq   \blacksquare \wt a).\]
In the quasi inequality above, every inequality is of the syntactic shape required for applying Lemma \ref{lemma: flattification of definite skeleton into chains}. Thus, we rewrite each inequality into a chain of flat inequalities by  recursively applying the procedure indicated in the proof of Lemma \ref{lemma: flattification of definite skeleton into chains}:
\begin{center}
\begin{tabular}{r c l}
$\tw\wt v_1\leq b$ &iff & $\exists c_1(\tw c_1\leq b\ \&\ c_1\leq \wt v_1)=: \Xi_{\delta_1}(b, v_1)$\\
& \\
$a\leq  \blacksquare \bt \tb \wt v_2$ & iff & $\exists c_2(a\leq \blacksquare c_2\ \&\ c_2\leq \bt \tb \wt v_2)$\\
&iff & $\exists c_2\exists c_3(a\leq \blacksquare c_2\ \&\ c_2\leq \bt c_3\ \& \ \tb \wt v_2\leq c_3)$\\
&iff & $\exists c_2\exists c_3\exists c_4(a\leq \blacksquare c_2\ \&\ c_2\leq \bt c_3\ \& \ \tb c_4 \leq c_3\ \&\ c_4\leq \wt v_2)=: \Xi_{\gamma_1}(a,  v_2)$\\
&\\
$a\leq \blacksquare \bt \Diamond v_2 $ &iff &$\exists c_5(a\leq \blacksquare c_5\ \&\ c_5\leq \bt \Diamond v_2)$\\
&iff &$\exists c_5\exists c_6(a\leq \blacksquare c_5\ \&\ c_5\leq \bt c_6 \ \&\ \Diamond v_2\leq c_6)=:\Xi_{\gamma_2}(a,  v_2)$\\
&\\
$\tw\blacksquare \wt\tw b \leq d_0$ & iff & $\exists d_1(\tw d_1 \leq d_0\ \&\ d_1\leq \blacksquare \wt\tw b)$\\
& iff & $\exists d_1\exists d_2(\tw d_1 \leq d_0\ \&\ d_1\leq \blacksquare d_2\ \&\ d_2\leq  \wt\tw b)$\\
& iff & $\exists d_1\exists d_2\exists d_3(\tw d_1 \leq d_0\ \&\ d_1\leq \blacksquare d_2\ \&\ d_2\leq  \wt d_3 \ \&\ \tw b\leq d_3)=: \Xi_{\varphi}(d_0, b)$\\
&\\
$d_0\leq   \blacksquare \wt a$ & iff & $\exists d_4(d_0\leq   \blacksquare d_4\ \&\ d_4\leq  \wt a)=:\Xi_{\psi}(d_0,  a)$
\end{tabular}
\end{center}
Hence we obtain:
\[\forall \overline{v}\forall a \forall b\forall \overline{c}\exists d_0\exists\overline{d}(\Xi'_{\delta_1}(b, \overline{c}, v_1) \ \&\  \Xi'_{\gamma_1}(a, \overline{c}, v_2)\ \&\  \Xi'_{\gamma_2}(a, \overline{c}, v_2) \Rightarrow \Xi'_{\varphi}(d_0, \overline{d}, b)\ \&\ \Xi'_{\psi}(d_0,\overline{d}, a)).\]
In step 6, we  translate the flat inequalities in each $\Xi'$ so as to obtain the following formula in the first order language of spd-algebras:
\[\forall \overline{v}\forall a \forall b\forall \overline{c}\exists d_0\exists\overline{d}(\Xi^\tau_{\delta_1}(b, \overline{c}, v_1) \ \&\  \Xi^\tau_{\gamma_1}(a, \overline{c}, v_2)\ \&\  \Xi^\tau_{\gamma_2}(a, \overline{c}, v_2) \Rightarrow \Xi^\tau_{\varphi}(d_0, \overline{d}, b)\ \&\ \Xi^\tau_{\psi}(d_0,\overline{d}, a)).\]
where:
\begin{center}
    \begin{tabular}{lllll}
    $\Xi^\tau_{\delta_1}(b, \overline{c}, v_1)$ & = & $\tau(\tw c_1\leq b)\ \&\  \tau(c_1\leq \wt v_1)$ \\
    &  = & $b\nppcon c_1 \ \&\ v_1\npcon c_1 $\\
         &  \\
      $\Xi^\tau_{\gamma_1}(a, \overline{c}, v_2)$ & = &  $\tau(a\leq \blacksquare c_2)\ \&\ \tau(c_2\leq \bt c_3)\ \& \ \tau(\tb c_4 \leq c_3)\ \&\ \tau(c_4\leq \wt v_2) $\\
      & = & $a\prec c_2 \ \&\ c_2\npcon c_3\ \&\ c_4\nppcon c_3\ \&\ v_2\npcon c_4 $\\
      &\\
    $\Xi^\tau_{\gamma_2}(a, \overline{c}, v_2)$ & = & $\tau(a\leq \blacksquare c_5)\ \&\ \tau(c_5\leq \bt c_6) \ \&\ \tau(\Diamond v_2\leq c_6)$\\
   & = & $ a\prec c_5\ \&\ c_5\npcon c_6\ \&\ v_2\prec c_6$\\
      &\\
$\Xi^\tau_{\varphi}(d_0,\overline{d}, b)$&= &$\tau(\tw d_1 \leq d_0)\ \&\ \tau(d_1\leq \blacksquare d_2)\ \&\ \tau(d_2\leq  \wt d_3) \ \&\ \tau(\tw b\leq d_3)$  \\
& = & $d_0\npcon d_1\ \&\ d_1\prec v_2\ \&\ v_3\npcon d_2\ \&\ d_3\nppcon b$\\
&\\
$\Xi^\tau _{\psi}(d_0, \overline{d}, a)$ & = &$\tau(d_0\leq   \blacksquare d_4)\ \&\ \tau(d_4\leq  \wt a)$\\
& = & $d_0\prec d_4\ \&\ a\npcon d_4$.\\
    \end{tabular}
\end{center}

\section{Kracht formulas}
\label{sec:kracht_formulas}

In what follows, for any $\mathcal{L}^{\mathrm{FO}}_{spd}$-atom $sRt$,  we let $\rho(sRt)$ denote one of the two inequalities which translate to $sRt$
 as indicated in Step 6 of Section \ref{sec:correspondence}. Since the two inequalities associated with $sRt$ are equivalent modulo residuation, and since residuation preserves the polarity of variable occurrences, we can say that a variable occurrence in an $\mathcal{L}^{\mathrm{FO}}_{spd}$-atom $sRt$ is {\em positive} (resp.~{\em negative}) if it is positive (resp.~negative) in the signed generation trees of  $\rho(sRt)$. Moreover, if all nodes of the branch from that occurrence to the root are Skeleton nodes (cf.~Section \ref{ssec:analytic-LE-axioms}), we will also say that the given occurrence is {\em displayable}.\footnote{Indeed, the next lemma shows that if a variable occurrence is displayable, then, modulo application of splitting rules and residuation rules (cf.~Footnote \ref{ftn:display or residuation}),  it can be displayed. However, notice that this fact only holds under the   additional requirement that, for every SLR node  occurring in the branch from $x$ to the root,  the signature $\mathcal{L}$ includes the residual of the corresponding connective  in the coordinate through which the branch from $x$ to the root passes.}

\begin{lemma}
\label{lemma:displayable}
 Any   $\mathcal{L}_\mathrm{LE}$-inequality $(\phi\leq \psi)[v/!x]$ such that the placeholder variable  $x$ is displayable can  equivalently be rewritten as a conjunction of inequalities, one of which is  of the form $v \leq \theta$ for some $\mathcal{L}_\mathrm{LE}$-formula $\theta$, if $x$ is positive, or of the form $\theta\leq v$, if $x$ is negative.
\end{lemma}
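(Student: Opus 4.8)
The plan is to proceed by induction on the number $n$ of connectives lying on the branch from the distinguished occurrence of $x$ to the root of the signed generation tree ($+\phi$ or $-\psi$) in which $x$ occurs. Since $x$ occurs exactly once, exactly one of these two trees contains it, and by symmetry it suffices to treat the case $x\in\phi$, the case $x\in\psi$ being handled by the dual rules. In the base case $n=0$ the occurrence of $x$ is the root of $+\phi$, so $\phi=x$ and $(\phi\leq\psi)[v/!x]$ is literally $v\leq\psi$; since $x$ is then positive, this is already of the required shape $v\leq\theta$ with $\theta:=\psi$, viewed as a one-conjunct conjunction.

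For the inductive step, let $h$ be the connective at the root of $\phi$, which has arity $\geq 1$ since $\top,\bot$ are leaves. As $x$ is displayable, $+h$ is a Skeleton node, hence either a $\Delta$-adjoint or an SLR node. If $+h$ is a $\Delta$-adjoint then $\phi=\phi_1\vee\phi_2$, say with $x$ occurring in $\phi_1$; I would apply the splitting rule to rewrite $\phi_1\vee\phi_2\leq\psi$ equivalently as $(\phi_1\leq\psi)\ \&\ (\phi_2\leq\psi)$. The branch from $x$ to the root of $\phi_1\leq\psi$ is the original branch with its topmost node removed, hence strictly shorter and still consisting only of Skeleton nodes, so the inductive hypothesis rewrites $\phi_1\leq\psi$ as a conjunction one of whose conjuncts has the desired form; conjoining with $\phi_2\leq\psi$ closes this case. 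If instead $+h$ is an SLR node — one of $\wdia$, $\tw$, $\tb$, some $f\in\mathcal{F}_A$, or, in the distributive setting, $\wedge$ — I would apply the associated residuation/display rule of Footnote~\ref{ftn:display or residuation} in the coordinate traversed by the branch (respectively $\Diamond\varphi\vdash\chi$ iff $\varphi\vdash\blacksquare\chi$; $\tw\varphi\vdash\chi$ iff $\tb\chi\vdash\varphi$; the dual rule for $\tb$; $f[\varphi]_i\vdash\chi$ iff $\varphi\vdash^{\varepsilon_f(i)}f_i^\sharp[\chi]_i$; and the residuation of $\wedge$ against Heyting implication), obtaining an equivalent inequality in which the immediate subformula $\phi'$ of $\phi$ containing $x$ now sits in isolation on one side, facing a formula built from $\psi$ and the residuals. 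The signed subtree rooted at $\phi'$ is untouched by this rewriting, so the branch from $x$ to the new root is again the old one minus its topmost node — strictly shorter and still entirely Skeleton — and the inductive hypothesis applies to the new inequality.

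Since the residuation and splitting rules preserve the polarity of the variable occurrences lying below the node they act on, the sign of the distinguished occurrence of $x$ never changes during the recursion; hence the conjunct produced at the end is of the form $v\leq\theta$ when $x$ is positive and $\theta\leq v$ when $x$ is negative, as claimed, and $\theta$ is an $\mathcal{L}_\mathrm{LE}$-formula because it is built only from $\phi$, $\psi$ and residual connectives. The case $x\in\psi$ is symmetric, peeling off $-\wedge$ via splitting and the negative SLR nodes ($\blacksquare$, $\wt$, $\bt$, $g\in\mathcal{G}_A$, and $\vee$ in the distributive case) via the dual display rules ($\varphi\vdash\wt\psi$ iff $\psi\vdash\bt\varphi$; $\varphi\vdash g[\psi]_i$ iff $g_i^\flat[\varphi]_i\vdash^{\varepsilon_g(i)}\psi$; the co-implication residuation for $\vee$). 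The one place where the standing hypothesis highlighted in the footnote to the definition of displayability is used is that, at each SLR node of the branch, the residual of the corresponding connective in the traversed coordinate — namely $f_i^\sharp$, $g_i^\flat$, or Heyting (co-)implication — belongs to $\mathcal{L}$; granting this, the remaining work is the routine bookkeeping of verifying at each step that the branch to $x$ stays entirely Skeleton and that its length strictly decreases, both immediate from the fact that the rules act only on nodes above the one being eliminated. I expect the main obstacle to be not conceptual but organisational: handling the numerous connective cases uniformly while correctly tracking polarities and branch lengths through the successive rewrites.
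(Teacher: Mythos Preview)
Your proposal is correct and follows essentially the same approach as the paper: induction on the number of ancestors of the displayable occurrence, with the $\Delta$-adjoint case handled by splitting and the SLR case by residuation, then appeal to the inductive hypothesis on the strictly shorter branch. If anything, your write-up is more explicit than the paper's in enumerating the individual connectives, tracking polarity preservation, and flagging the need for residuals (including Heyting (co-)implication in the distributive $\pm\wedge/\vee$ case) to belong to the signature.
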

\begin{proof}
By induction on the number $n$ of ancestors of the displayable occurrence.
If $n = 0$, then the inequality is either of the form $x\leq \psi$ or of the form $\varphi \leq x$, hence the claim is immediately true. As to the inductive step, we proceed by cases.
If $\varphi\coloneqq \varphi'(!x)\vee \varphi''$,  
then $\varphi \leq \psi$ is equivalent to  the conjunction of  the inequalities $\varphi'(!x)\leq \psi$ and $ \varphi''\leq \psi$. 
By applying the inductive hypothesis, $\varphi'(!x)\leq \psi$  can be equivalently rewritten as a conjunction of inequalities, one of which of the form $v\leq \theta$ or $\theta\leq v$. Hence, by adding $ \varphi''\leq \psi$ to this conjunction, we obtain the required conjunction of inequalities. The proof for $\psi\coloneqq  \psi'(!x)\wedge\psi''$ is analogous.

If $\varphi\coloneqq f (\overline{\chi}/!\overline{z})$ with   $\chi_i = \chi(!x)$ and $\varepsilon_f(i) = 1$, then, by applying residuation on the $i$th coordinate of $f$ (cf.~Footnote \ref{ftn:display or residuation}), $\varphi\leq \psi$ is equivalent to $\chi(!x)\leq f^\sharp_i(\psi/! z_i, \overline{\chi'}/!\overline{z'})$, where $\overline{\chi'}$ denotes the vector $\overline{\chi}$ except its $i$th coordinate. Then the required conjunction is obtained by applying the induction hypothesis to this inequality. The proofs  for $\varepsilon_f(i) = \partial$ and $\psi \coloneqq g(\overline{\chi}/!\overline z)$ are analogous.
\end{proof}

The following abbreviations will be used throughout the present section:

{{\centering
\begin{tabular}{lclcl}
$(\exists y \succ x)\varphi$ & $\equiv$ & $\exists y(x \prec y \ \& \ \varphi)$ &
i.e.& $\exists y(x \leq \blacksquare y \ \& \ \varphi)$ \\
$(\exists y \prec x)\varphi$ & $\equiv$ & $\exists y(y \prec x \ \& \ \varphi)$ &
i.e. &$\exists y(\Diamond y \leq x \ \& \ \varphi)$ \\
$(\exists y \npcon x)\varphi$ & $\equiv$ & $\exists y(y \npcon x \ \& \ \varphi)$ &
i.e.& $\exists y(x \leq {\rhd} y \ \& \ \varphi)$ \\
$(\exists y \npcon^{-1} x)\varphi$ & $\equiv$ & $\exists y(x \npcon y \ \& \ \varphi)$ &
i.e. &$\exists y(x \leq {\blacktriangleright} y \ \& \ \varphi)$ \\
$(\exists y \nppcon x)\varphi$ & $\equiv$ & $\exists y(y \nppcon x \ \& \ \varphi)$ &
i.e. &$\exists y({\blacktriangleleft} y \leq x \ \& \ \varphi)$ \\
$(\exists y \nppcon^{-1} x)\varphi$ & $\equiv$ & $\exists y(x \nppcon y \ \& \ \varphi)$ &
i.e. & $\exists y({\lhd} y \leq x \ \& \ \varphi)$\\
$(\exists \overline{y} \leq_f x)\varphi$ & $\equiv$ & $\exists \overline y(f(\overline y) \leq x \ \& \ \varphi)$\\
$(\exists \overline{y} \geq_g x)\varphi$ & $\equiv$ & $\exists \overline y(x \leq g(\overline y) \ \& \ \varphi)$
\end{tabular}
\begin{tabular}{lclcl}
$(\forall y \succ x)\varphi$ & $\equiv$ & $\forall y(x \prec y \ \Rightarrow \ \varphi)$ &
i.e.& $\forall y(x \leq \blacksquare y \ \Rightarrow \ \varphi)$ \\
$(\forall y \prec x)\varphi$ & $\equiv$ & $\forall y(y \prec x \ \Rightarrow \ \varphi)$ &
i.e. &$\forall y(\Diamond y \leq x \ \Rightarrow \ \varphi)$ \\
$(\forall y \npcon x)\varphi$ & $\equiv$ & $\forall y(y \npcon x \ \Rightarrow \ \varphi)$ &
i.e.& $\forall y(x \leq {\rhd} y \ \Rightarrow \ \varphi)$ \\
$(\forall y \npcon^{-1} x)\varphi$ & $\equiv$ & $\forall y(x \npcon y \ \Rightarrow \ \varphi)$ &
i.e. &$\forall y(x \leq {\blacktriangleright} y \ \Rightarrow \ \varphi)$ \\
$(\forall y \nppcon x)\varphi$ & $\equiv$ & $\forall y(y \nppcon x \ \Rightarrow \ \varphi)$ &
i.e. &$\forall y({\blacktriangleleft} y \leq x \ \Rightarrow \ \varphi)$ \\
$(\forall y \nppcon^{-1} x)\varphi$ & $\equiv$ & $\forall y(x \nppcon y \ \Rightarrow \ \varphi)$ &
i.e. & $\forall y({\lhd} y \leq x \ \Rightarrow \ \varphi)$\\
$(\forall \overline{y} \leq_f x)\varphi$ & $\equiv$ & $\forall \overline y(f(\overline y) \leq x \Rightarrow \varphi)$\\
$(\forall \overline{y} \geq_g x)\varphi$ & $\equiv$ & $\forall \overline y(x \leq g(\overline y) \Rightarrow \varphi)$ \\
\end{tabular}
\par}}
\smallskip

\noindent where $f \in \mathcal{F}_A$ and $g \in \mathcal{G}_A$. Expressions  such as $(\forall y\prec x)$ or $(\exists\overline y\geq_g x)$ above are referred to as {\em restricted quantifiers}.

The variable $y$ (resp.~the variables in $\overline y$)  in the formulas above is (resp.\ are) {\em restricted}, and the variable $x$ is {\em restricting}, while the inequality occurring together with $\varphi$ in the translation of the restricted quantifier is a {\em restricting inequality}. 

Throughout the present section, we will use the following letters (possibly with sub- or superscripts) to distinguish the roles of different variables ranging in the domain of an arbitrary spd-algebra, both in respect to  Algorithms \ref{algo:alba}  and \ref{algo:flat} of Section \ref{sec:correspondence} and to the ones of Section \ref{sec:inverse_correspondence_procedure}.  These variables will be assigned different conditions in Definition \ref{def:inverse_shape} on   their polarity and their distribution in the formula, which determines the way in which they are introduced/eliminated:

\begin{itemize}
\item[$v$] variables occurring in the input inequality of Algorithm \ref{algo:alba}, or in the output inequalities of Algorithm \ref{algo:inverse};
\item[$a$] positive variables introduced/eliminated by means of the first approximation lemma (cf.\ Lemma \ref{lemma:first_approx}). It must be possible to rewrite the quasi-inequality so that each such variable  occurs only once on each side of the main implication;
\item[$b$] negative variables introduced/eliminated by means of the first approximation lemma. The same considerations which apply to $a$-variables apply also to $b$-variables;
\item[$c$] variables introduced/eliminated  using Lemma \ref{lem: diamond-output equivalence extended} in the antecedent of the main implication;
\item[$d$] variables introduced/eliminated using Lemma \ref{lem: diamond-output equivalence extended} in the consequent of the main implication.
\end{itemize}

\begin{definition}
\label{def:inverse_shape}
An $\mathcal{L}^{\mathrm{FO}}_{spd}$-sentence is a {\em Kracht formula}\footnote{In the modal logic literature, Kracht formulas (cf.~\cite[Section 3.7]{blackburn2001modal}) are those first order sentences in the frame correspondence language of Kripke frames, introduced by Kracht \cite{kracht1999tools}, each of which is (equivalent to) the first-order correspondent of some Sahlqvist axiom. The notion of Kracht formulas has been recently generalized in \cite{palmigiano2024unified, conradie2022unified} from classical modal logic to (distributive) LE-logics, and from a class of first order formulas targeting Sahlqvist LE-axioms to a class targeting the strictly larger class of inductive LE-axioms. The notion of Kracht formulas introduced in Definition \ref{def:inverse_shape} is different and in fact incomparable with those in \cite{palmigiano2024unified,conradie2022unified}, since it targets a different and incomparable class of modal axioms.} if it has the following shape:
\[
\forall \overline v\forall \overline a, \overline b
(\forall \overline{c_m}R'_m z_m)\cdots(\forall \overline{c_1} R'_1 z_1)
\left(
\eta
\Rightarrow
(\exists \overline{d_o} R_o y_o)\cdots(\exists \overline{d_1} R_1 y_1)
\zeta
\right)
\]
where
\begin{enumerate}
\item $R'_i, R_j\in \{\leq, \geq, \prec, \succ, \npcon, \npcon^{-1}, \nppcon, \nppcon^{-1}, \leq_f, \geq_g \}$ for all $1\leq i\leq m$ and $1\leq j\leq o$;
\item variables in $\overline z$ are amongst those of $\overline a$, $\overline b$, and $\overline c \coloneqq \overline{c_1}\oplus\ldots \oplus\overline{c_m}$;\footnote{Here $\oplus$ denotes the concatenation of sequences, i.e. $(a_1, \ldots, a_n) \oplus (b_1, \ldots, b_m) \coloneqq (a_1, \ldots, a_n, b_1, \ldots, b_m)$.} variables in $\overline y$ are amongst those of $\overline a$, $\overline b$, and $\overline d \coloneqq \overline{d_1}\oplus\ldots \oplus\overline{d_m}$;
\item $\eta$ and $\zeta$ are conjunctions of relational atoms $sRt$ with $R\in \{ \leq, \geq, \prec, \succ, \npcon, \npcon^{-1}, \nppcon, \nppcon^{-1} \}$;
\item all occurrences of variables in $\overline a$ (resp.\ $\overline b$) 
are positive (resp.\ negative) in all atoms (including those in restricting quantifiers) in which they occur;

\item every variable in $\overline c$ (resp.\ in $\overline d$) occurs uniformly in $\eta$ (resp.~in $\zeta$);

\item occurrences of variables in $\overline c$ (resp.\ $\overline d$) as restricting variables have the same polarity as their occurrences in $\eta$ (resp.\ $\zeta$).

\item all occurrences of variables in $\overline a$, $\overline b$, $\overline c$, and $\overline d$ in atoms of  $\eta$ and $\zeta$ are displayable;

\item each atom in $\eta$ contains exactly one occurrence of a variable in $\overline a$, $\overline b$, or $\overline c$ (all other variables are in $\overline v$);

\item each atom $sRt$ in $\zeta$ contains at most one occurrence of a variable in $\overline d$. Moreover, for every two different occurrences of the same variable not in $\overline v$ (i.e., in $\overline a$ or $\overline b$), the first common ancestor in the signed generation tree of $sRt$ is either $+\wedge$ or $-\vee$.
\end{enumerate}
\end{definition}

\begin{remark}
\label{remark:comparison_with_corr_output_shape}
The definition above captures a more general syntactic shape of $\mathcal{L}^{\mathrm{FO}}_{spd}$-formulas than the general shape of the output of the algorithm defined in Section \ref{sec:correspondence}. Indeed, such outputs have a much simpler antecedent, composed only of atoms $xRt$ and $tRx$, with $x$ in $\overline c$ or in $\overline a$ if in positive polarity and $y$ in $\overline b$ if in negative polarity, and with $t$ containing only variables in $\overline v$. This automatically simplifies some of the items of Definition \ref{def:inverse_shape}, such as item 9. Similar considerations also apply to the atoms in the consequent.
\end{remark}

\begin{example}
The  condition $\forall v\forall a \forall c (a \leq c \ \& \ c \prec v  \Rightarrow a \npcon v)$ is a Kracht formula, since it can be rewritten as $\forall v\forall a(\forall c \geq a)(c \prec v \ \Rightarrow a \npcon v)$, and in this shape, it readily satisfies conditions 1-9 of Definition  \ref{def:inverse_shape}. The following computation illustrates how  Algorithm \ref{algo:inverse} (cf.~Section \ref{sec:inverse_correspondence_procedure}) runs on it, and shows how an equivalent modal axiom can be effectively computed by eliminating the variables $c$ and $a$.

\smallskip

{{\centering
\begin{tabular}{rl}
 & $\forall v\forall a \forall c (a \leq c \ \& \ c \prec v  \Rightarrow a \npcon v)$ \\
iff & $\forall v\forall a \forall c (a \leq c \ \& \ c \leq \blacksquare v\Rightarrow a \leq \bt v)$ \\
iff & $\forall v\forall a (\exists c(a \leq c \ \& \ c \leq \blacksquare v) \Rightarrow a \leq \bt v)$ \\
iff & $\forall v\forall a (a \leq \blacksquare v  \Rightarrow a \leq \bt v)$ \\
iff & $\forall v (\blacksquare v  \leq \bt v)$.\\
\end{tabular}
\par}}
\end{example}
\begin{example}
Unrestricted variables in $\overline c$ (resp.\ $\overline d$) are allowed in Definition \ref{def:inverse_shape} because they can be readily eliminated, given that they occur uniformly in $\eta$ (resp.\ $\zeta$). To illustrate how this elimination is effected, consider the condition $\forall v \forall a \forall c_1 \forall c_2 (a\prec c_2 \ \&\ c_2\prec v\ \&\  c_1 \prec v \Rightarrow a \prec v)$, which is a Kracht formula when equivalently rewritten as $\forall v \forall a \forall c_1 (\forall c_2 \succ a)( c_2\prec v \ \&\ c_1 \prec v \Rightarrow a \prec v)$. Applying Algorithm \ref{algo:inverse} to it yields:
\smallskip

{{\centering
\begin{tabular}{rll}
& $\forall v \forall a \forall c_1 \forall c_2  (a\prec c_2 \ \&\ c_2\prec v\ \&\  c_1 \prec v \Rightarrow a \prec v)$ \\
iff & $\forall v \forall a \forall c_1 \forall c_2  (a\leq \blacksquare c_2\ \&\  c_2 \leq \blacksquare v \ \&\  c_1 \leq \blacksquare v\Rightarrow a \leq \blacksquare v)$ \\
iff & $\forall v \forall a  (\exists c_2 (a\leq \blacksquare c_2 \ \&\  c_2 \leq \blacksquare v)\ \&\  \exists c_1(c_1 \leq \blacksquare v) \Rightarrow a \leq \blacksquare v)$ \\
iff & $\forall v \forall a  (a\leq \blacksquare \blacksquare v \Rightarrow a \leq \blacksquare v)$ & ($\ast$) \\
iff & $\forall v (\blacksquare \blacksquare v \leq \blacksquare v)$. \\
\end{tabular}
\par}}
\noindent While, in the equivalence marked with ($\ast$), the soundness of the elimination of $c_2$ is a consequence of Lemma \ref{lem: diamond-output equivalence extended}.2(i), the soundness of the elimination of $c_1$ is a consequence of the fact that, on bounded lattices, the formula $\exists c_1(c_1 \leq \blacksquare v)$ is always true. The soundness of the elimination of $a$ in the last line is due to the fact that, by Lemma \ref{lemma: distribution properties of sigma pi}.2, both $\blacksquare v$ and $\blacksquare\blacksquare v$ belong to $O(A^\delta)$.
\end{example}

\begin{example}
Consider the universal closure of the following condition  \cite{Makinson03}:
\[
\mathrm{(OR)}^\downarrow \quad\quad
a \prec x\ \&\ b \pcon  x \Rightarrow (a \vee b) \pcon x
\]
The quasi inequality above satisfies  Definition \ref{def:inverse_shape} when $a$ is regarded as a $\overline c$-variable, $b$ as an $\overline a$-variable, and $x$ as a $\overline v$-variable. Applying Algorithm \ref{algo:inverse} to it yields:

{{\centering
\begin{tabular}{rll}
     &  $a \prec x\ \&\ b \pcon  x \Rightarrow (a \vee b) \pcon x$ \\
iff &  $a \prec x\ \&\ (a \vee b) \npcon x \Rightarrow b \npcon  x$ \\
iff &  $a \leq \blacksquare x\ \&\ a \vee b \leq {\blacktriangleright} x \Rightarrow b \leq {\blacktriangleright}  x$ \\
iff &  $a \leq \blacksquare x\ \&\ a \leq {\blacktriangleright} x \ \& \ b \leq {\blacktriangleright} x \Rightarrow b \leq {\blacktriangleright}  x$ & ($\ast \ast$)  \\
iff &  $a \leq \blacksquare x \wedge {\blacktriangleright} x \ \& \ b \leq {\blacktriangleright} x \Rightarrow b \leq {\blacktriangleright}  x$ \\
iff &  $\exists a (a \leq \blacksquare x \wedge {\blacktriangleright} x) \ \& \ b \leq {\blacktriangleright} x \Rightarrow b \leq {\blacktriangleright}  x$ \\
iff &  $b \leq {\blacktriangleright} x \Rightarrow b \leq {\blacktriangleright}  x$ & ($\ast$) \\
iff &  ${\blacktriangleright} x \leq {\blacktriangleright}  x$. \\
\end{tabular}
\par}}
\smallskip
The soundness of the elimination of $a$ is a consequence of the fact that, on bounded lattices, the formula $\exists a(a \leq \blacksquare x \wedge {\blacktriangleright} x)$ is always true, while the soundness of the elimination of $b$ is due to the fact that $\bt x\in O(A^\delta)$. It turns out that, in the present setting, this condition is a tautology. This is unsurprising, since, by definition, $\pcon$ satisfies condition (C3) in Section \ref{ssec:subordination precontact}, which is a necessary condition for the modal operators $\bt$ and $\wt$ to be normal (i.e.~finitely join-reversing).
\end{example}

\begin{example}
Consider the universal closure of the following condition  from \cite{Makinson03} on any spd-algebra $\mathbb{H}$ based on a Heyting algebra:

\[\mathrm{(CTA)}^{\downarrow} \quad\quad z\vee y \prec \neg x\ \&\ (z \wedge x)\, \pcon \, y \Rightarrow z  \pcon  (x\wedge y).\]
The quasi inequality above satisfies  Definition \ref{def:inverse_shape} when any  variable is picked to play the role of an $\overline a$-variable, and the remaining variables  as $\overline v$-variables. By choosing $x$ as the $\overline a$-variable, and applying Algorithm \ref{algo:inverse} to it, we get:

{{\centering
\begin{tabular}{rl}
&  $z\vee y \prec \neg x\ \&\ (z \wedge x)\, \pcon \, y \Rightarrow z  \pcon  (x\wedge y)$ \\
iff &  $z\vee y \prec \neg x\ \&\ z  \npcon  (x\wedge y) \Rightarrow (z \wedge x)\, \npcon \, y$ \\
iff &  $\Diamond(z\vee y)\leq \neg x\ \&\  x\wedge y\leq {\rhd} z\Rightarrow z\wedge x\leq {\blacktriangleright} y$ \\
iff & $x\leq \neg\Diamond(z\vee y)\ \&\  x\leq y\to {\rhd} z\Rightarrow x\leq z\to {\blacktriangleright} y$\\
iff & $\neg\Diamond(z\vee y)\wedge (y\to {\rhd} z)\leq z\to {\blacktriangleright} y$
\end{tabular}
\par}}
which shows that \[\mathbb{H}\models \mathrm{(CTA)}^{\downarrow} \quad \text{ iff }\quad \mathbb{H}^\ast\models \neg\Diamond(z\vee y)\wedge (y\to {\rhd} z)\leq z\to {\blacktriangleright} y.\]
\end{example}

\begin{example}
Consider the universal closure of the following condition from \cite{parent2019input}:
\[\mathrm{(MCT)} \quad\quad  a\prec x' \ \&\ x'\leq x \ \&\ a\wedge x\prec y\Rightarrow a\prec y.\]
The quasi inequality above satisfies  Definition \ref{def:inverse_shape}, since it can be equivalently rewritten as
$
\forall a \forall y(\forall x' \succ a)(\forall x \geq x')(a \wedge x \prec y \Rightarrow a \prec y)
$ when $a$ is regarded as a $\overline v$-variable, $y$ as a $\overline b$-variable, and $x$ and $x'$ as  $\overline c$-variables. Applying Algorithm \ref{algo:inverse} to it yields:

{{\centering
\begin{tabular}{rl}
& $\forall a \forall y\forall x' \forall x( a\prec x'\ \&\ x'\leq x \ \& \ a \wedge x \prec y \Rightarrow a \prec y)$ \\
iff & $\forall a \forall y\forall x' \forall x(\Diamond a \leq x'  \ \& \ x' \leq x  \ \& \ \Diamond (a \wedge x) \leq  y\Rightarrow \Diamond a \leq y)$ \\
iff & $\forall a \forall y\forall x' (\Diamond a \leq x'  \ \& \ \exists x(x' \leq x \ \& \ \Diamond (a \wedge x) \leq  y)\Rightarrow \Diamond a \leq y)$ \\
iff & $\forall a \forall y\forall x' (\Diamond a \leq x'  \ \& \ \Diamond (a \wedge x') \leq  y\Rightarrow \Diamond a \leq y)$ \\
iff & $\forall a \forall y(\exists x'(\Diamond a \leq x' \ \& \ \Diamond (a \wedge x')) \leq  y \Rightarrow \Diamond a \leq y)$ \\
iff & $\forall a \forall y(\Diamond (a \wedge \Diamond a) \leq  y \Rightarrow \Diamond a \leq y)$ \\
iff & $\forall a(\Diamond a \leq \Diamond(a \wedge \Diamond a))$ \\
\end{tabular}
\par}}
\end{example}

\begin{example}
Consider the universal closure of the following condition from \cite{parent2019input} on any spd-algebra $\mathbb{H}$ based on a Heyting algebra:
\[\mathrm{(ACT)} \quad\quad a\prec x\ \&\ a\wedge x\prec y \Rightarrow a\prec x \wedge y\]

The quasi inequality above satisfies  Definition \ref{def:inverse_shape} when $a$ is regarded as  an $\overline a$-variable, and the remaining variables  as $\overline v$-variables. Applying Algorithm \ref{algo:inverse} to it yields:

{{\centering
\begin{tabular}{rl}
&$\forall x \forall y \forall a (a\prec x\ \&\ a\wedge x\prec y \Rightarrow a\prec x \wedge y) $\\
iff & $\forall x \forall y \forall a (
a \leq \blacksquare x \ \& \ a \wedge x \leq \blacksquare y \Rightarrow a \leq \blacksquare(x \wedge y))$ \\
iff &  $\forall x \forall y \forall a (
a \leq \blacksquare x \ \& \ a \leq x \rightarrow \blacksquare y \Rightarrow a \leq \blacksquare(x \wedge y))$ \\
iff &  $\forall x \forall y \forall a (
a \leq \blacksquare x \wedge (x \rightarrow \blacksquare y) \Rightarrow a \leq \blacksquare(x \wedge y))$ \\
iff &  $\forall x \forall y (\blacksquare x \wedge (x \rightarrow \blacksquare y) \leq \blacksquare(x \wedge y))$
\end{tabular}
\par}}

\end{example}

\begin{example}
Consider the universal closure of the following condition  from \cite{Makinson03} on any spd-algebra $\mathbb{H}$ based on a De Morgan\footnote{a {\em De Morgan} algebra ia a distributive lattice endowed with a unary operation which is involutive and both finitely meet- and join-reversing.}  algebra:

\[\mathrm{(CTA)}^{-1}\quad\quad a \prec x\ \&\ a \pcon  \neg( x \wedge y)\Rightarrow  (a \wedge x) \pcon \neg y.\]

{{
\centering
\begin{tabular}{rl l}
&  $\forall x \forall y \forall a (a \prec x\ \&\ a \pcon  \neg( x \wedge y)\Rightarrow  (a \wedge x) \pcon \neg y)$ \\
iff &  $\forall x \forall y \forall a ( a \prec x\ \&\ (a \wedge x) \npcon \neg y \Rightarrow  a \npcon  \neg( x \wedge y))$ \\
iff &  $\forall x \forall y \forall a ( a \prec x\ \&\ (a \wedge x) \npcon \neg y \Rightarrow  a \npcon  \neg x \vee \neg y)$ \\
iff &  $\forall x \forall y \forall a ( a \prec x\ \&\ (a \wedge x) \npcon \neg y \Rightarrow  a \npcon  \neg x \ \&\  a\npcon \neg y)$, & (C2)\\

\end{tabular}
\par
}}
\smallskip

\noindent which is readily equivalent to the conjunction of the two conditions $a \prec x \ \& \ (a \wedge x) \npcon \neg y \Rightarrow a \npcon \neg x$ and $a \prec x \ \& \ (a \wedge x) \npcon \neg y \Rightarrow a \npcon \neg y$, on which we proceed separately. The first quasi-inequality satisfies Definition \ref{def:inverse_shape} when $a$ is regarded as an $\overline a$-variable, $y$ as a $\overline c$-variable, and $x$ as a $\overline v$-variable. Applying Algorithm \ref{algo:inverse} to it yields:

{{\centering
\begin{tabular}{rll}
& $\forall x \forall y \forall a (a \prec x \ \& \ (a \wedge x) \npcon \neg y \Rightarrow a \npcon \neg x)$ \\
iff & $\forall x \forall y \forall a (a \leq \blacksquare x\ \&\ \neg y \leq {\rhd} (a \wedge x)\Rightarrow  a \leq {\blacktriangleright}  \neg x)$ \\
iff & $\forall x \forall y \forall a (a \leq \blacksquare x\ \&\ \neg {\rhd} (a \wedge x) \leq y \Rightarrow  a \leq {\blacktriangleright}  \neg x)$ \\
iff & $\forall x  \forall a (a \leq \blacksquare x\ \&\ \exists y(\neg {\rhd} (a \wedge x) \leq y) \Rightarrow  a \leq {\blacktriangleright}  \neg x)$ \\
iff & $\forall x  \forall a (a \leq \blacksquare  x \Rightarrow a \leq {\blacktriangleright} \neg x)$ & ($\ast$) \\
iff & $\forall x  (\blacksquare x \leq {\blacktriangleright}\neg x)$ &
\end{tabular}
\par}}
\smallskip

\noindent The equivalence marked with ($\ast$) follows immediately from the fact that  $\exists y(\neg {\rhd} (a \wedge x) \leq y)$ is always satisfied, since $A$ is a bounded lattice.
The second quasi-inequality satisfies Definition \ref{def:inverse_shape} when $x$ is regarded as a $\overline c$-variable, $y$ as a $\overline b$-variable, and $a$ as a $\overline v$-variable. Applying Algorithm \ref{algo:inverse} to it yields:

\medskip

{{\centering
\begin{tabular}{rl}
& $\forall x \forall y \forall a (a \prec x \ \& \ (a \wedge x) \npcon \neg y \Rightarrow a \npcon \neg y)$ \\
iff & $\forall x \forall y \forall a (a \leq \blacksquare x\ \&\ \neg y \leq {\rhd} (a \wedge x)\Rightarrow  a \leq {\blacktriangleright}  \neg y)$ \\
iff &  $\forall x \forall y \forall a (\Diamond a \leq x\ \&\ \neg {\rhd} (a \wedge x) \leq y \Rightarrow  a \leq {\blacktriangleright} \neg y)$ \\
iff &  $\forall x \forall y \forall a (\neg {\rhd} (a \wedge \Diamond a) \leq y \Rightarrow  a \leq {\blacktriangleright} \neg y)$ \\
iff & $\forall x \forall y \forall a (a \leq {\blacktriangleright} \neg \neg {\rhd} (a \wedge \Diamond a))$\\
iff & $\forall a ({\rhd} (a \wedge \Diamond a) \leq {\rhd} a)$
\end{tabular}
\par}}
\end{example}

\begin{example}
The formula $\forall v\forall a \forall c_1\forall c_2 (a \leq c_1 \rightarrow c_2 \ \& \  v \prec c_1 \ \& \ c_2 \prec v \Rightarrow a \prec v)$ is a Kracht formula when equivalently rewritten as $\forall v\forall a (\forall c_1, c_2 \geq_\rightarrow a)(v \prec c_1 \ \& \ c_2 \prec v \Rightarrow a \prec v)$. Applying Algorithm \ref{algo:inverse} to it yields:

{{\centering
\begin{tabular}{rl}
& $\forall v\forall a \forall c_1\forall c_2 (a \leq c_1 \rightarrow c_2 \ \& \  v \prec c_1 \ \& \ c_2 \prec v \Rightarrow a \prec v)$ \\
iff & $\forall v\forall a \forall c_1 \forall c_2 (a \leq c_1 \rightarrow c_2 \ \& \  \Diamond v \leq c_1 \ \& \ c_2 \leq \blacksquare v \Rightarrow a \leq \blacksquare v)$ \\
iff & $\forall v\forall a (a \leq \Diamond v \rightarrow \blacksquare v \Rightarrow a \leq \blacksquare v)$ \\
iff & $\forall v(\Diamond v \rightarrow \blacksquare v \leq \blacksquare v)$. \\
\end{tabular}
\par}}
\end{example}

\begin{example}
\label{ex:big_arity_uniform} Let us illustrate  how variables in $\overline d$ (resp.~$\overline c$) can be eliminated when the relation in their restricting inequality is $\leq_f$ or $\geq_g$ with $n_f, n_g\geq 2$,  and there is at least one variable occurring with the same polarity  in the restrictor and in $\zeta$ (resp.~$\eta$). The  formula $\forall v \exists d_1 \exists d_2(d_1 {\small\pdra} d_2 \leq v \ \& \ d_1 \leq v \ \& \ d_2 \leq v \ \& \ v\prec v)$, where $\pdra\in \mathcal{F}$ and s.t.~$\varepsilon_{{\tiny\pdra}} = (\partial, 1)$, is a Kracht formula when equivalently rewritten as $\forall v(\exists (d_1, d_2) \leq_{\tiny\pdra} v)(d_1 \leq v \ \& \ d_2 \leq v \ \& \ v\prec v)$.
\smallskip

{{\centering
\begin{tabular}{rll}
 & $\forall v \exists d_1 \exists d_2(d_2 \leq d_1 \vee v \ \& \ d_1 \leq v \ \& \ d_2 \leq v \ \& \ v\prec v)$ \\
iff & $\forall v \exists d_1 \exists d_2(d_2 \leq (d_1 \vee v) \wedge v \ \& \ d_1 \leq v \ \& \ \Diamond v\leq v)$ \\
iff & $\forall v \exists d_1 (\exists d_2(d_2 \leq (d_1 \vee v) \wedge v) \ \& \ d_1 \leq v \ \& \ \Diamond v\leq v)$ \\
iff & $\forall v\exists d_1 (d_1 \leq v \ \& \ \Diamond v\leq v)$ & ($\ast$) \\
iff & $\forall v ( \Diamond v\leq v).$ & ($\ast$) \\
\end{tabular}
\par}}
\smallskip

The equivalences marked with ($\ast$) follow immediately from the fact that $\exists d_2(d_2 \leq (d_1 \vee v) \wedge v)$ (resp.~$\exists d_1 (d_1 \leq v)$) is always satisfied in a bounded lattice by instantiating $d_2$ (resp.~$d_1$) to $\bot$.
\end{example}

\begin{example}
    \label{ex: elimination d main case}
    The formula $\forall v_1 \forall v_2 \forall b(v_1\wedge v_2\prec b \Rightarrow \exists d_1 \exists d_2(  v_1\prec d_1\ \&\ v_2\prec d_2 \ \&\ d_1\wedge d_2 \prec b))  $ is a Kracht formula when equivalently rewritten as $\forall v_1 \forall v_2 \forall b(v_1\wedge v_2\prec b \Rightarrow (\exists d_1\succ v_1) (\exists d_2\succ v_2)(d_1\wedge d_2 \prec b))$.
    Let us illustrate how to eliminate the variables in $\overline d$ when all their occurrences in $\zeta$ have opposite polarity to their occurrences in the restrictors.
    \smallskip

    {{\centering
\begin{tabular}{rll}
& $\forall v_1 \forall v_2 \forall b(v_1\wedge v_2\prec b \Rightarrow \exists d_1 \exists d_2(  v_1\prec d_1\ \&\ v_2\prec d_2 \ \&\ d_1\wedge d_2 \prec b))  $\\
iff & $\forall v_1 \forall v_2 \forall b(\Diamond(v_1\wedge v_2)\leq b \Rightarrow \exists d_1 \exists d_2( \Diamond v_1\leq d_1\ \&\ \Diamond v_2\leq d_2 \ \&\ \Diamond (d_1\wedge d_2) \prec b))  $\\
iff & $\forall v_1 \forall v_2 \forall b(\Diamond(v_1\wedge v_2)\leq b \Rightarrow \exists d_1 ( \Diamond v_1\leq d_1\ \&\ \exists d_2(\Diamond v_2\leq d_2 \ \&\ \Diamond (d_1\wedge d_2) \leq b)))  $\\
iff & $\forall v_1 \forall v_2 \forall b(\Diamond(v_1\wedge v_2)\leq b \Rightarrow \exists d_1 ( \Diamond v_1\leq d_1\ \&\  \Diamond (d_1\wedge \Diamond v_2) \leq b))  $ & Lemma \ref{lem: diamond-output equivalence extended}\\
iff & $\forall v_1 \forall v_2 \forall b(\Diamond(v_1\wedge v_2)\leq b \Rightarrow  \Diamond (\Diamond v_1\wedge \Diamond v_2) \leq b))  $ & Lemma \ref{lem: diamond-output equivalence extended}\\
iff & $\forall v_1 \forall v_2 (\Diamond (\Diamond v_1\wedge \Diamond v_2) \leq\Diamond(v_1\wedge v_2)).  $\\
 \end{tabular}
\par}}
\end{example}

\begin{example}
\label{ex:first approx}
The formula $\forall a \forall b \forall v(a \prec v\ \& \   v\prec b\Rightarrow \exists d( b\nppcon d\ \&\ d\npcon  a))$ is a Kracht formula. Applying Algorithm \ref{algo:inverse} to it yields:

       {{\centering
\begin{tabular}{rll}
 & $\forall a \forall b \forall v(a \prec v\ \& \   v\prec b\Rightarrow \exists d( b\nppcon d\ \&\ d\npcon  a))$\\
iff & $\forall a \forall b \forall v(a\leq \blacksquare v\ \&\ \Diamond v \leq b\Rightarrow \exists d(\tb b\leq d\ \&\ d\leq \bt a))$\\
iff & $\forall a \forall b \forall v(a\leq \blacksquare v\ \&\ \Diamond v \leq b\Rightarrow \tb b\leq \bt a)$\\
iff & $\forall v(\tb \Diamond v\leq \bt \blacksquare  v).$ & Lemma \ref{lemma:first_approx}\\
 \end{tabular}
\par}}
\end{example}

\begin{example}
The formula
$\forall x\exists y(x \nppcon y \ \&\ y \npcon x)$ is  Kracht when regarding $y$ as a $\overline{d}$-variable.  Applying Algorithm \ref{algo:inverse} to it yields:

\medskip

{{\centering
\begin{tabular}{rl}
& $\forall x\exists y(x \nppcon y \ \&\ y \npcon x)$ \\
iff & $\forall x\exists y( \tb x \leq y \ \&\  y \leq \bt x)$\\
iff & $\forall x( \tb x \leq \bt x)$.\\
\end{tabular}
\par}}
The formula
$\forall x\forall y(x \nppcon y \Rightarrow y \npcon x)$ is not Kracht; indeed, neither variable can be regarded as an $\overline{a}$- or a $\overline{b}$-variable, since they both violate condition 4 on the polarity
(this is better seen after the translation step, which yields
$\forall x\forall y( \tw y \leq x \Rightarrow x \leq \wt y)$). However, this violates condition 8 of Definition \ref{def:inverse_shape}, which prevents  Algorithm \ref{algo:inverse}  to be applied successfully.
\end{example}

\section{Inverse correspondence procedure}
\label{sec:inverse_correspondence_procedure}

In this section, we describe and prove the soundness of an algorithm which takes Kracht formulas (cf.~Definition \ref{def:inverse_shape}) in input, and  outputs a (conjunction of) $\mathcal{L}_\mathrm{LE}$-inequalities of which the given Kracht formula is the first order correspondent.

\paragraph{Preprocessing} Firstly, every relational atom $sRt$ in $\eta$ and $\zeta$ is translated into some $\mathcal{L}_\mathrm{LE}$-inequality (cf.\ table above Definition \ref{def: sigma and pi extensions of slanted}) using the map $\rho$ described in Section \ref{sec:kracht_formulas}. Note that the procedure described in this section is indifferent to how atoms are translated.  

For every atom $sRt$ in $\zeta$ containing an occurrence of a variable in $\overline d$, apply residuation and splitting as needed in $\rho(sRt)$ so as to display $d$ as indicated in the lemma below.

\begin{lemma}
\label{lemma:atoms_in_zeta_preprocessed}
If  $sRt$ is an atom in $\zeta$ containing a positive (resp.\ negative) occurrence of a variable $d$ in $\overline d$, then   $\rho(sRt)$  is equivalent to a conjunction of clopen-Skeleton inequalities $(\bigwith_i d \leq \theta_i)\ \&\ (\bigwith_j \theta'_j\leq \theta_j)$ (resp.\ $(\bigwith_i \theta'_i \leq d) \ \&\ (\bigwith_j \theta'_j\leq \theta_j)$) such that every $\theta$ (resp.~$\theta'$) 
does not contain any variable in $\overline d$, and contains at most one  occurrence of each variable in $\overline a$ and $\overline b$.
\end{lemma}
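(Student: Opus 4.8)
The plan is to extract the unique occurrence of $d$ from $\rho(sRt)$ by means of the displaying procedure of Lemma~\ref{lemma:displayable}, and then to check that the conjuncts so produced lie in the clopen-Skeleton fragment and satisfy the stated occurrence conditions.

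First I would record the shape of $\rho(sRt)$. Since $sRt$ is a $\zeta$-atom, $R\in\{\leq,\geq,\prec,\succ,\npcon,\npcon^{-1},\nppcon,\nppcon^{-1}\}$ (condition~3 of Definition~\ref{def:inverse_shape}), so (by the table above Definition~\ref{def: sigma and pi extensions of slanted}, i.e.\ Step~6 of Section~\ref{sec:correspondence}) $\rho(sRt)$ is an $\mathcal{L}_\mathrm{LE}$-inequality $\lambda\leq\mu$ in which $\lambda$ is either a clopen term or a clopen term prefixed by exactly one connective of $\mathcal{F}_{\mathsf{S}}\cup\mathcal{F}_{\mathsf{D}}$, and $\mu$ is either a clopen term or a clopen term prefixed by exactly one connective of $\mathcal{G}_{\mathsf{S}}\cup\mathcal{G}_{\mathsf{C}}$; crucially, the clopen terms occurring in $\lambda\leq\mu$ are subterms of $s$ and $t$, hence are built only from $\mathcal{F}_A\cup\mathcal{G}_A$, and the modal prefix, when present, is a Skeleton (SLR) node in the signed generation tree of $\lambda\leq\mu$, as one reads off Table~\ref{Join:and:Meet:Friendly:Table}. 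By condition~9, $d$ occurs exactly once in $sRt$, hence exactly once in $\rho(sRt)$, and by condition~7 this occurrence is displayable. I treat the case where it is positive; the negative case is dual.

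Next I would apply Lemma~\ref{lemma:displayable} to $\rho(sRt)$, taking the displayable placeholder to be the unique occurrence of $d$: this rewrites $\rho(sRt)$ equivalently as a conjunction of inequalities one of which, say $\chi_0$, displays $d$ and has the form $d\leq\theta_0$. Because the rewriting steps in the proof of Lemma~\ref{lemma:displayable} — splittings at $\Delta$-adjoint $\vee$/$\wedge$ nodes and residuations at SLR nodes on the branch from $d$ to the root — never duplicate the displayed occurrence, $d$ does not occur in the remaining conjuncts and occurs exactly once in $\chi_0$; and by construction the rewriting introduces no new variables. A final round of splitting applied to $d\leq\theta_0$ whenever its right-hand side has a displayed $-\wedge$ at the root (and recursively) turns $\chi_0$ into $\bigwith_i(d\leq\theta_i)$; collecting this with the remaining conjuncts $\bigwith_j(\theta'_j\leq\theta_j)$ produces the asserted form. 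Each of these inequalities is clopen-Skeleton: since the branch from $d$ to the root consists of Skeleton nodes only, only splittings and residuations are applied, and a case inspection of Table~\ref{Join:and:Meet:Friendly:Table} shows that the residual of an SLR connective, relative to the coordinate through which the displayed branch passes, is again a Skeleton node on the opposite side of the resulting inequality (e.g.\ $\Diamond\rightsquigarrow\blacksquare$ with $-\blacksquare$ an SLR node, and $f\in\mathcal{F}_A\rightsquigarrow f^\sharp_i\in\mathcal{G}_A$ with $-g$ an SLR node, and dually); the clopen subterms that are siblings of the displayed branch — the non-displayed coordinates of the residuated $(\mathcal{F}_A\cup\mathcal{G}_A)$-connectives, and whichever of $\lambda,\mu$ does not contain $d$ — travel into the $\theta$'s as whole clopen blocks, so every $\theta$ and every side of every $\theta'_j\leq\theta_j$ has the form $\sigma[\overline t/!\overline x]$ with $\sigma$ Skeleton and the $t$'s clopen.

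It then remains to check the two occurrence conditions. That no $\theta$ contains a variable of $\overline d$ is immediate from condition~9 (the only such variable in $sRt$ is $d$, which has been displayed away). The genuinely delicate point — which I expect to be the main obstacle — is that every $\theta$ contains each variable of $\overline a\cup\overline b$ at most once; this is exactly where the last clause of condition~9 (two distinct occurrences of the same $\overline a$- or $\overline b$-variable have a $+\wedge$ or $-\vee$ node as first common ancestor) is used, together with conditions~4 and~7. I would prove it by an induction on the number of ancestors of the displayed occurrence, running in parallel with the induction of Lemma~\ref{lemma:displayable}: whenever the procedure would otherwise collect two occurrences of a variable $a$ into the same $\theta$, the subtree of $\rho(sRt)$ joining them is rooted at a $+\wedge$ (resp.\ $-\vee$) node which, lying on a displayable branch and inside a purely clopen term, is itself clopen, so its whole subtree is carried as one clopen block; since by conditions~4 and~7 the two occurrences of $a$ carry the same polarity, idempotency of $\wedge$ (resp.\ $\vee$) together with the distributivity available in the clopen fragment lets that block be rewritten so as to leave a single occurrence of $a$. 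Everything else is routine bookkeeping once the shape of $\rho(sRt)$ is fixed.
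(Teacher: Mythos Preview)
Your overall strategy—invoke Lemma~\ref{lemma:displayable} to display $d$, then verify the clopen-Skeleton shape by tracking how residuation moves SLR connectives to the other side as SLR connectives—is essentially what the paper does; the paper simply unfolds the induction of Lemma~\ref{lemma:displayable} explicitly on the structure of the term $s$ containing $d$ (base case $s\coloneqq d$, then $s\coloneqq s_1(!d)\wedge s_2$ with splitting in the $-\wedge$ case and residuation via $\rightarrow$ in the distributive $+\wedge$ case, and ``the remaining cases are similar''). Your clopen-Skeleton check is more explicit than the paper's and is correct.

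The gap is in your handling of the ``at most one occurrence of each variable in $\overline a$, $\overline b$'' clause. Your proposed mechanism—collapse two occurrences of the same $a$ inside a clopen block rooted at a $+\wedge$ (resp.\ $-\vee$) node using idempotency and distributivity—does not work in a general LE-signature. For instance, take the $\zeta$-atom $d \prec ((f(a)\wedge g(a))\rightarrow v)$ in a Heyting-based signature with distinct unary $f,g\in\mathcal{F}_A$ of order-type $1$: all conditions of Definition~\ref{def:inverse_shape} are met (both $a$-occurrences are positive and displayable, their first common ancestor is $+\wedge$, and $d$ occurs once), yet the clopen block $f(a)\wedge g(a)$ cannot be rewritten as a term with a single occurrence of $a$ using only lattice identities. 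Idempotency and distributivity suffice only when the block is built from $\wedge,\vee$ alone. The paper does not attempt any such rewriting; at each inductive step it simply asserts that ``the ensuing inequalities \ldots\ satisfy the conditions on the variables in $\overline d$, $\overline a$ and $\overline b$ stated in the lemma,'' so your argument departs from the paper's exactly at the delicate point, and the route you propose will not close it in the stated generality.
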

\begin{proof}
By Definition \ref{def:inverse_shape}.9, each atom $sRt$ in $\zeta$ contains at most one occurrence of a variable in $\overline{d}$, and the first common ancestor in the signed generation tree of $sRt$ of any two different occurrences of the same variable  in $\overline a$ or $\overline b$ is either $+\wedge$ or $-\vee$. If $d$ occurs in $s$, then we proceed by induction on $s$. If $s\coloneqq d$, then we can equivalently transform $\rho (sRt)$ into $d\leq t$, or $t\leq d$, or,  by applying residuation if needed, into either $d\leq g(t)$ for some $g\in \mathcal{G}_{\mathsf{S}}\cup\mathcal{G}_{\mathsf{C}}$, or $f(t)\leq d$ for some $f\in \mathcal{F}_{\mathsf{S}}\cup\mathcal{F}_{\mathsf{D}}$. In either case, the ensuing inequalities immediately satisfy the conditions of statement. 
If $s\coloneqq s_1(!d)\wedge s_2$, then, since $d$ is displayable by Definition \ref{def:inverse_shape}.7,  modulo suitable application of residuation (cf.~Footnote \ref{ftn:display or residuation}) if needed, $\rho (sRt)$ can be equivalently transformed into either $t\leq s_1(!d)\wedge s_2$, and hence into the conjunction of $t\leq s_1(!d)$ and $t\leq s_2$,  or into $f(t)\leq s_1(!d)\wedge s_2$ for some $f\in \mathcal{F}_{\mathsf{D}}$, and hence into the conjunction of $f(t)\leq s_1(!d)$ and $f(t)\leq s_2$. In either case, all the ensuing inequalities are clopen-Skeleton and satisfy the conditions on the variables in $\overline d$, $\overline a$ and $\overline b$ stated in the lemma; hence, by applying the induction hypothesis to the inequality in which  $s_1(!d)$  occurs, we obtain the required statement. In the distributive setting, modulo suitable application of residuation if needed, $\rho (sRt)$ can also be equivalently transformed into $s_1(!d)\wedge s_2\leq t$, or into $s_1(!d)\wedge s_2\leq g(t)$ for some $g\in \mathcal{G}_{\mathsf{S}}$. Hence, the statement follows by applying induction hypothesis to $s_1(!d)\leq \ s_2\to t$ or to $s_1(!d)\leq \ s_2\to g(t)$. The remaining cases are similar.
\end{proof}

Similarly, for each atom in $\eta$, the displayable occurrence of a variable in $\overline a$, $\overline b$, or $\overline c$ can be displayed, as indicated in the next lemma.
\begin{lemma}
\label{lemma:atoms_in_eta_preprocessed}
Let $sRt$ be an atom in $\eta$ containing an occurrence of a variable $u$ in $\overline a$ (resp.\ $\overline b$) or a positive (resp.\ negative) occurrence of variable in $\overline c$. Then,  $\rho(sRt)$ is equivalent to a conjunction of clopen-Skeleton inequalities $(\bigwith_i u \leq \theta_i)\ \&\ (\bigwith_j \theta'_j\leq \theta_j)$ (resp.\ $(\bigwith_i \theta'_i \leq u) \ \&\ (\bigwith_j \theta'_j\leq \theta_j)$) such that every $\theta$ (resp.~$\theta'$) 
only contains variables in $\overline v$. 
\end{lemma}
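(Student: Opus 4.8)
The plan is to mirror the proof of Lemma \ref{lemma:atoms_in_zeta_preprocessed} almost verbatim, since the only substantive difference between the two statements is that atoms in $\eta$ satisfy condition 8 of Definition \ref{def:inverse_shape} (each atom contains exactly one occurrence of a variable in $\overline a$, $\overline b$, or $\overline c$, all other variables being in $\overline v$) rather than condition 9. This is in fact a \emph{stronger} hypothesis for the purposes at hand: because there is a single non-$\overline v$-occurrence, the side conditions on $\overline a$, $\overline b$ that had to be tracked in the previous lemma (``at most one occurrence of each variable in $\overline a$ and $\overline b$'') degenerate, and we merely need to conclude that the $\theta$'s and $\theta'$'s contain only variables in $\overline v$.

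First I would fix an atom $sRt$ in $\eta$ and let $u$ denote its unique occurrence of a variable in $\overline a\cup\overline b\cup\overline c$; by condition 7 of Definition \ref{def:inverse_shape}, this occurrence is displayable, i.e.\ every node on the branch from $u$ to the root of the signed generation tree of $\rho(sRt)$ is a Skeleton node. I would then induct on the subterm (either $s$ or $t$) in which $u$ occurs, exactly as in Lemma \ref{lemma:atoms_in_zeta_preprocessed}. In the base case, $\rho(sRt)$ is of one of the forms $u\leq \theta$, $\theta\leq u$, $u\leq g(\theta)$ (for $g\in\mathcal{G}_\mathsf{S}\cup\mathcal{G}_\mathsf{C}$), or $f(\theta)\leq u$ (for $f\in\mathcal{F}_\mathsf{S}\cup\mathcal{F}_\mathsf{D}$) after possibly applying a residuation rule, and since $u$ is the only non-$\overline v$-variable in the atom, $\theta$ (equivalently $g(\theta)$ or $f(\theta)$) contains only $\overline v$-variables, so the conclusion holds with a one-element conjunction. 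In the inductive step, when the branch passes through a $+\wedge$, a $-\vee$, or an $\mathcal{F}_A/\mathcal{G}_A$-connective (using residuation, cf.\ Footnote \ref{ftn:display or residuation}), I would split the inequality as in the previous proof: e.g.\ if $s\coloneqq s_1(!u)\wedge s_2$ then $\rho(sRt)$ becomes, modulo residuation, either $t\leq s_1(!u)\wedge s_2$ — hence the conjunction of $t\leq s_1(!u)$ and $t\leq s_2$ — or $f(t)\leq s_1(!u)\wedge s_2$ for some $f\in\mathcal{F}_\mathsf{D}$, similarly splitting; in the distributive setting one also has the transformations to $s_1(!u)\leq s_2\to t$ or $s_1(!u)\leq s_2\to g(t)$ for $g\in\mathcal{G}_\mathsf{S}$. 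All the ensuing inequalities are clopen-Skeleton, the ``context'' pieces ($s_2$, $t$, etc.) involve only $\overline v$-variables since the atom has no other non-$\overline v$-occurrences, and applying the induction hypothesis to the single inequality still containing $u$ yields the desired conjunction. The polarity bookkeeping (positive occurrence of $u$ in $\overline a$ or in $\overline c$ gives an upper-bound inequality $u\leq\theta$, negative gives $\theta'\leq u$) is handled exactly as before, using that residuation and splitting preserve the polarity of variable occurrences.

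Since the argument is a routine adaptation, I do not anticipate a genuine obstacle; the one point requiring a little care is the bookkeeping that guarantees the $\theta$'s contain \emph{only} $\overline v$-variables rather than the weaker ``at most one occurrence of each $\overline a$/$\overline b$-variable'' of Lemma \ref{lemma:atoms_in_zeta_preprocessed}. This is exactly where condition 8 (as opposed to condition 9) is used: in the inductive step, when we split off a context piece, that piece cannot contain any $\overline a$-, $\overline b$-, or $\overline c$-variable, because the atom's unique such occurrence is $u$, which sits in the remaining inequality. One should also note that the statement only asserts the displayable occurrence of \emph{one} variable from $\overline a\cup\overline b\cup\overline c$ can be displayed, which is consistent with condition 8 (there is only one to display), so no further iteration across several variables is needed, in contrast to what happens for $\zeta$.
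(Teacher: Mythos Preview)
Your proposal is correct and follows essentially the same approach as the paper: induction on the subterm containing $u$, using displayability (condition 7) and the uniqueness of the non-$\overline v$-occurrence (condition 8) to split and residuate until $u$ is displayed. The only cosmetic difference is that the paper illustrates the inductive step with the case $s\coloneqq s_1(!u)\vee s_2$ (and the dual distributive transformation via $\pdla$) whereas you, mirroring Lemma \ref{lemma:atoms_in_zeta_preprocessed}, use $s\coloneqq s_1(!u)\wedge s_2$; both are representative cases of the same argument, and both proofs defer the remaining cases as similar.
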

\begin{proof}
By  Definition \ref{def:inverse_shape}.8, each atom $sRt$ in $\eta$ contains exactly one occurrence of a variable $u$ in $\overline{a}$, $\overline{b}$ or $\overline{c}$. If $u$ occurs in $s$, then we proceed by induction on $s$. If $s\coloneqq u$, then, by applying residuation if needed, we can equivalently transform $\rho (sRt)$ into either $u\leq g(t)$ for some $g\in \mathcal{G}_{\mathsf{S}}\cup\mathcal{G}_{\mathsf{C}}$, or $f(t)\leq u$ for some $f\in \mathcal{F}_{\mathsf{S}}\cup\mathcal{F}_{\mathsf{D}}$, or $u\leq t$, or $t\leq u$. In either case, the resulting inequalities are clopen-Skeleton, and by  Definition \ref{def:inverse_shape}.8, only contain variables in $\overline{v}$. As to the inductive step, we proceed by cases. If $s\coloneqq s_1(!u)\vee s_2$, then, since $u$ is displayable by  Definition \ref{def:inverse_shape}.7,  modulo suitable application of residuation if needed, $\rho (sRt)$ can  equivalently be transformed into either $s_1(!u)\vee s_2\leq t$, and hence into the conjunction of $ s_1(!u)\leq t$ and $ s_2\leq t$,  or into $s_1(!u)\vee s_2\leq g(t)$ for some $g\in \mathcal{G}_{\mathsf{S}}$, and hence into the conjunction of $ s_1(!u)\leq g(t)$ and $s_2\leq g(t)$. In either case, all the ensuing inequalities are clopen-Skeleton and contain only variables in $\overline{v}$ besides the single occurrence of $u$; then, by applying the induction hypothesis to the inequality in which  $s_1(!u)$  occurs we obtain the required statement. In the distributive setting, modulo suitable application of residuation if needed, $\rho (sRt)$ can also equivalently be transformed into $t\leq s_1(!u)\vee s_2$, or into $f(t)\leq s_1(!u)\vee s_2$ for some $f\in \mathcal{F}_{\mathsf{D}}$. Hence, the statement follows by applying the induction hypothesis to $t\pdla s_2\leq s_1(!u)$ or to $f(t)\pdla s_2\leq s_1(!u)$. The remaining cases are similar.
\end{proof}

After displaying all occurrences of variables in $\overline c$ (resp.\ $\overline d$)  in the $\rho$-translation of atoms in $\eta$ (resp.\ $\zeta$) as indicated in the lemmas above, 
we equivalently replace the conjunctions $\bigwith_j c \leq \xi_j$ or $\bigwith_j \xi_j \leq c$ (resp.\ $\bigwith_j d \leq \xi_j$ or $\bigwith_j \xi_j \leq d$)  with  single inequalities $c \leq \bigwedge_j \xi_j$ or $\bigvee_j \xi_j \leq c$ (resp.\ $d \leq \bigwedge_j \xi_j$ or $\bigvee_j \xi_j \leq d$ ). 
\medskip


At the end of the preprocessing step, the translated $\eta$ (resp.\ $\zeta$) is rewritten as a conjunction of inequalities $\eta^p$ (resp.\ $\zeta^p$) in which every variable in $\overline c$ (resp. $\overline d$) occurs at most once, and in display.

\paragraph{Eliminating variables in $\overline d$}
After the pre-processing step, all the variables in $\overline d$ and their (restricted) existential quantification are eliminated, as indicated in the following proposition.

\begin{proposition}
\label{prop:flattify_consequent}
The consequent $(\exists \overline{d_o} R_o y_o)\cdots(\exists \overline{d_1} R_1 y_1) \zeta^p$ is equivalent to a conjunction of inequalities $\zeta^f$ each of which contains only variables from $\overline a$, $\overline b$, and $\overline v$, and is such that  any two  occurrences of the same variable in $\overline a$ or $\overline b$ have a $-\wedge$ or $+\vee$ node as first common ancestor.
\end{proposition}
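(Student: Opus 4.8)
The plan is to proceed by induction on $o$, the number of blocks of restricted existential quantifiers in the consequent, eliminating them from the innermost block $(\exists \overline{d_1} R_1 y_1)$ outward. By item 2 of Definition \ref{def:inverse_shape}, the restricting variable $y_1$ lies in $\overline a$, $\overline b$, or in one of the outer blocks $\overline{d_2},\ldots,\overline{d_o}$; hence no variable of $\overline{d_1}$ occurs in any restricting inequality other than its own, and by the preprocessing step together with items 5 and 9 of Definition \ref{def:inverse_shape} each variable of $\overline{d_1}$ occurs uniformly, at most once, and in display in $\zeta^p$. Pulling the innermost quantifiers to the front, the consequent becomes $\exists \overline{d_1}\big(\mathrm{restr}_1(\overline{d_1},y_1)\ \&\ \zeta^p\big)$, so it suffices to eliminate $\overline{d_1}$, obtaining a conjunction of inequalities over $\overline a,\overline b,\overline v$ that still meets the common-ancestor clause; the induction hypothesis then closes the argument.

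The first move inside a block is to put $\mathrm{restr}_1$ into a form in which each variable of $\overline{d_1}$ is displayed. When $R_1$ is one of the six binary relations this is immediate after residuating a single unary Skeleton connective; when $R_1$ is $\leq_f$ or $\geq_g$ with $f\in\mathcal F_A$, $g\in\mathcal G_A$, one residuates $f$ (resp.\ $g$) through a coordinate in which some variable of $\overline{d_1}$ occurs with the same polarity as in $\zeta^p$, thereby absorbing $\mathrm{restr}_1$ into that conjunct of $\zeta^p$, while the remaining coordinates yield one-sided restricting inequalities (cf.\ Example \ref{ex:big_arity_uniform}); if no such coordinate exists, every restricting inequality is already one-sided. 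A short computation with the translation table above Definition \ref{def: sigma and pi extensions of slanted} shows that in every case the restricting inequality on a variable $d\in\overline{d_1}$ is either a closed lower bound $k\leq d$ with $k\in K(A^\delta)$ (arising from $R_1\in\{\succ,\nppcon,\nppcon^{-1}\}$ or from an $f$-coordinate) or an open upper bound $d\leq o$ with $o\in O(A^\delta)$ (from $R_1\in\{\prec,\npcon,\npcon^{-1}\}$ or a $g$-coordinate).

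Each $d\in\overline{d_1}$ is then eliminated by a case analysis on its occurrence in $\zeta^p$. By Proposition \ref{prop: positive PIA are open} applied to the clopen-Skeleton inequality in which $d$ is displayed (Lemma \ref{lemma:atoms_in_zeta_preprocessed}), a positive occurrence of $d$ sits in a conjunct $d\leq\theta$ with $v(\theta)\in O(A^\delta)$ and a negative occurrence in $\theta'\leq d$ with $v(\theta')\in K(A^\delta)$. If the bound coming from $\mathrm{restr}_1$ and the one coming from $\zeta^p$ are on opposite sides, one reaches $\exists d\,(k\leq d\leq o)$ with $k$ closed and $o$ open — possibly after joining two closed lower bounds using Proposition \ref{prop:background can-ext}(iv) — which collapses by compactness to $k\leq o$, eliminating $d$. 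If the two bounds are on the same side, or $d$ does not occur in $\zeta^p$, then after joining them one is left with a one-sided existential $\exists d\,(d\leq u)$ or $\exists d\,(u\leq d)$, which holds on any bounded lattice by instantiating $d$ to $\bot$, resp.\ $\top$; so $\mathrm{restr}_1$ and the conjunct containing $d$ are simply discarded. In the variant where the occurrence of $d$ in $\zeta^p$ is displayable but not in display — nested inside a composite Skeleton context as in Example \ref{ex: elimination d main case} — the chain collapse is replaced by substituting the bound for $d$ and re-extracting a clopen witness by the existential-Ackermann statement of Lemma \ref{lem: diamond-output equivalence extended}. In all cases the existential statement becomes a conjunction of inequalities over $\overline a,\overline b,\overline v$; since each step either deletes a conjunct or substitutes for a displayed variable, no fresh $+\wedge$ or $-\vee$ node comes to dominate two occurrences of an $\overline a$- or $\overline b$-variable, so the common-ancestor clause survives and the resulting conjunction is the desired $\zeta^f$.

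The main obstacle I expect is the uniform treatment of the $\leq_f$/$\geq_g$ restrictors of arbitrary arity, together with the bookkeeping needed to check that the common-ancestor clause of Definition \ref{def:inverse_shape}.9 survives both the chain collapses and the Lemma \ref{lem: diamond-output equivalence extended}-style substitutions; a secondary point is verifying that, across all restrictor types, the relevant bound genuinely lands in $K(A^\delta)$ (resp.\ $O(A^\delta)$) so that compactness applies, and that the variables of a single block can always be processed in an order that keeps the other restricting inequalities well-formed.
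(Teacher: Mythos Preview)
Your overall strategy coincides with the paper's: induction eliminating the quantifier blocks from the innermost outward, with a polarity-based case split (same polarity $\to$ trivialise by instantiating $d$ to $\top/\bot$; opposite polarity $\to$ collapse via Lemma \ref{lem: diamond-output equivalence extended}, i.e.\ the paper's Table \ref{table:deflattification}). Your compactness phrasing $\exists d(k\leq d\leq o)\Leftrightarrow k\leq o$ is just another way to state those table entries.

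There is, however, a genuine gap in the inductive step. When the restrictor $y_1$ is itself some $d_h$ from an outer block, eliminating $\overline{d_1}$ via the opposite-polarity collapse produces a new inequality in which $d_h$ sits in display, \emph{in addition to} the displayed occurrence $d_h$ already had in $\zeta^p$. Your induction hypothesis needs each remaining $d$-variable to occur at most once and in display in the new conjunction, so these two occurrences of $d_h$ must be merged by inverse splitting before you can recurse; and for that merge to be legitimate the two displayed occurrences must have the same polarity. This is precisely what item 6 of Definition \ref{def:inverse_shape} guarantees (the restricting-variable occurrence of $d_h$ shares the polarity of its $\zeta$-occurrence), and the paper invokes it explicitly at this point. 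Your sketch never re-establishes the invariant in this case, so ``the induction hypothesis then closes the argument'' does not go through as written.

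A smaller issue: handling the $\leq_f/\geq_g$ restrictors by residuating $f\in\mathcal F_A$ or $g\in\mathcal G_A$ through one coordinate presupposes that $f_i^\sharp$ or $g_i^\flat$ lives in the signature, which is not assumed in general (cf.\ Footnote \ref{ftn:display or residuation}). The paper sidesteps this by treating the whole tuple $\overline{d_i}$ at once through the last two rows of Table \ref{table:deflattification}, i.e.\ $\exists\overline{d_i}\big(f(\overline{d_i})\leq y_i\ \&\ \overline{\theta\leq^{\varepsilon_f} d_i}\big)$ iff $f(\overline\theta)\leq y_i$, justified directly by Lemma \ref{lem: diamond-output equivalence extended}; this is both more general and avoids the coordinate-by-coordinate bookkeeping you flag as an obstacle.
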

\begin{proof}
We show by induction that, for every $0\leq i\leq o$, $(\exists \overline{d_i} R_i y_i)\cdots(\exists \overline{d_1} R_1 y_1) \zeta^p$ is equivalent to a conjunction of inequalities where no variable in any $\overline{d_j}$ occurs for $j \leq i$, and where every variable in any $\overline{d_h}$ with $h > i$ occurs at most once and displayed.

The base case (when $i = 0$) is obvious since the preprocessing guarantees that every variable in every $\overline{d_h}$ with $h > i$ occurs at most once and displayed in  $\zeta^p$.

When $i \geq 1$, by inductive hypothesis $(\exists \overline{d_i} R_i y_i)\cdots(\exists \overline{d_1} R_1 y_1) \zeta^p$ can be rewritten as $(\exists \overline{d_i} R_i y_i)\zeta'$, where $\zeta'$ contains no occurrence of variables in any $\overline{d_j}$ with $j \leq i$ and where all variables in any $\overline{d_h}$ with $h > i$ occurs at most once and displayed. We proceed by cases: If no variable in $\overline{d_i}$ occurs in $\zeta'$, then $(\exists \overline{d_i} R_i y_i)\zeta'$ is equivalent to $\zeta'$, which satisfies the statement.
Let us assume that at least some variable in $\overline{d_i}$ occurs in $\zeta'$; then all the variables in $\overline{d_i}$ occurring also in $\zeta'$ occur in some inequality $\overline{\xi(d_i, \theta)}$, and there is at least one inequality in such a vector $\overline{\xi}$. Then, $(\exists \overline{d_i} R_i y_i)\cdots(\exists \overline{d_1} R_1 y_1) \zeta^p$ can be equivalently rewritten as
\[
\exists \overline{d_i}( \rho(\overline{d_i}R_iy_i) \ \& \ \overline{\xi(d_i, \theta)}) \ \& \ \zeta'^{-} ),
\]
where $\zeta'^{-}$ is $\zeta'$ without the inequalities in $\overline{\xi(d_i, \theta)}$. There are two subcases: if the polarity of some variable $d$ in $\overline{d_i}$ occurring in $\rho(\overline{d_i}R_iy_i)$ is the same as its polarity in $\xi(d_i, \theta)$, then the conjuncts $\rho(\overline{d_i}R_iy_i) $ and $ \xi(d, \theta)$ can be rewritten as a single inequality $\theta \vee h \leq d$ or $d \leq \theta \wedge h$, where $h$ is the term obtained by displaying $d$ in $\rho(\overline{d_i}R_iy_i)$. Then, $\exists d(\theta \vee h \leq d)$ (resp.\ $\exists d(d \leq \theta \wedge h)$) can be removed, since it is always satisfied in a bounded lattice by instantiating $d$ to $\top$ (resp.~$\bot$).
Once $d$ has been eliminated, by Definition \ref{def:inverse_shape}.5, the remaining variables in $\overline{d_i}$ occur unrestricted, in display, and  with uniform polarity. Therefore, their corresponding existential quantification can be moved under the scope of the conjunction of atoms in $\zeta'$, and each formula $\exists d_i\xi(d_i, \theta)$ is always  satisfied by instantiating $d_i$ to $\top$ or $\bot$, depending on its polarity (cf.~Example \ref{ex:big_arity_uniform} for an illustration of this situation).
Hence, in this case, the statement is proved with $\zeta'^{-}$ as the witness.

If the polarity of all variables in $d_i$ in $\rho(\overline{d_i}R_iy_i)$ is opposite to their polarity in the corresponding inequalities in $\overline{\xi(d_i, \theta)}$,
by Lemma \ref{lem: diamond-output equivalence extended}, $\exists \overline{d_i}( \rho(\overline{d_i}R_iy_i) \ \& \ \overline{\xi(d_i, \theta)})$ can equivalently be rewritten as indicated in the following table.

\begin{table}[h]
\centering
\begin{tabular}{rcl}
$\exists d_i(y_i \leq \blacksquare d_i \ \& \ d_i \leq \theta)$ & iff & $y_i \leq \blacksquare \theta$  \\
$\exists d_i(y_i \leq {\rhd} d_i \ \& \ d_i \leq \theta)$ & iff & $y_i \leq {\rhd} \theta$  \\
$\exists d_i(y_i \leq {\blacktriangleright} d_i \ \& \ d_i \leq \theta)$ & iff & $y_i \leq {\blacktriangleright} \theta$  \\
$\exists d_i(y_i \leq  d_i \ \& \ d_i \leq \theta)$ & iff & $y_i \leq  \theta$  \\
$\exists \overline{d_i}(y_i \leq g(\overline{d_i}) \ \& \ \overline{d_i \leq^{\varepsilon_g} \theta})$ & iff & $y_i \leq g(\overline \theta)$ \\
\end{tabular}
\quad\quad
\begin{tabular}{rcl}
$\exists d_i(\Diamond d_i \leq y_i \ \& \ \theta \leq d_i)$ & iff & $\Diamond \theta \leq y_i$  \\
$\exists d_i({\lhd} d_i \leq y_i \ \& \ \theta \leq d_i)$ & iff & ${\lhd} \theta \leq y_i$  \\
$\exists d_i({\blacktriangleleft} d_i \leq y_i \ \& \ \theta \leq d_i)$ & iff & ${\blacktriangleleft} \theta \leq y_i$  \\
$\exists d_i(d_i \leq y_i \ \& \ \theta \leq d_i)$ & iff & $\theta \leq y_i$  \\
$\exists \overline{d_i}(f(\overline{d_i}) \leq y_i \ \& \ \overline{\theta \leq^{\varepsilon_f} d_i})$ & iff & $f(\overline \theta)\leq y_i$ \\
\end{tabular}
\caption{Equivalences that allow to eliminate the quantifiers.}
\label{table:deflattification}
\end{table}

\noindent In each case, the equivalent rewriting results in an inequality $\chi(y_i)$ with $y_i$ in display. If $y_i$ is not a variable in $\overline d$, then the conjunction $\chi(y_i) \ \& \ \zeta'^-$ is the required witness. If $y_i$ is a variable in $\overline d$, i.e., it is a $d_h$ with $h > i$, it remains to show that $\chi(y_i) \ \& \ \zeta'^-$ can be rewritten as a conjunction of inequalities where $d_h$ occurs at most once and is displayed (cf.~Example \ref{ex: elimination d main case} for an illustration of this situation). Since $d_h$ occurs in display  in some inequality $\xi'(d_h, \theta')$, $\zeta'^-$ can be rewritten as $\zeta'' \ \& \ \xi'(d_h, \theta')$. Therefore, $\chi(y_i) \ \& \ \zeta'^-$ can be rewritten as $\chi(d_h) \ \& \ \zeta'' \ \& \ \xi'(d_h, \theta')$.  Definition \ref{def:inverse_shape}.6 guarantees that the polarity of $d_h$ in $\chi(d_h)$ and $\xi'(d_h, \theta')$ is the same; thus, their conjunction can be rewritten as an inequality $\theta' \vee \theta'' \leq d_h$ or $d_h \leq \theta' \wedge \theta''$, which we denote  $\xi''(d_h)$. The conjunction $\xi''(d_h) \ \& \ \zeta''$  is the witness that proves the statement in this case.

\medskip

The condition that any two  occurrences of the same variable in $\overline a$ or $\overline b$ have a $-\wedge$ or $+\vee$ node as first common ancestor in the resulting conjunction of inequalities follows straightforwardly from the fact that the starting $\zeta^p$ has this property,  the fact that the polarities of the variable occurrences are invariant under the applications of equivalent rewritings, and from the fact that terms containing variables in $\overline a$ and $\overline b$ are only substituted in flat terms of shape $h(y_i)$; therefore, there are no more pairs of occurrences of variables in $\overline a$ and $\overline b$ for which the property has to be checked.
\end{proof}

\begin{remark}
At the end of the process described in the proposition above, each conjunct $\zeta_i^f$ in $\zeta^f$ is an inequality containing only variables from $\overline a$, $\overline b$, and $\overline v$. We remind the reader that, by Definition \ref{def:inverse_shape}.7, each occurrence of a variable in $\overline a$ and $\overline b$ is displayable, i.e., all its ancestors are skeleton nodes in the signed generation tree of the inequality. This property is preserved by Proposition \ref{prop:flattify_consequent} because every applied rewriting preserves the polarity of existing nodes, and only introduces skeleton nodes by substituting positive (resp.\ negative) skeletons in positive (resp.\ negative) positions, or by means of `inverse splitting rules'.\footnote{\label{footnote:inverse_splitting} These rules allow us to equivalently rewrite $a\leq \gamma_1\ \&\ a\leq \gamma_2$ as $a\leq \gamma_1\wedge\gamma_2$ and $\delta_1\leq b\ \&\  \delta_2\leq b$ as $ \delta_1\vee\delta_2\leq b$.}
\end{remark}

\paragraph{Eliminating variables in $\overline c$}

Similarly to the step above, all the variables  in $\overline c$ and their restricted quantification are eliminated, as indicated in the following proposition.
\begin{proposition}
\label{prop:flattify_antecedent}
The formula $(\exists \overline{c_m}R'_m z_m)\cdots(\exists \overline{c_1} R'_1 z_1) \eta^p$
is equivalent to a conjunction $\eta^f$ of inequalities of shape $a \leq \theta$ or $\theta \leq b$, where $a$ is a variable in $\overline a$, $b$ is a variable in $\overline b$, and $\theta$ is a term containing only variables in $\overline v$.
\end{proposition}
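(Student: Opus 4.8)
The plan is to establish Proposition \ref{prop:flattify_antecedent} by an induction parallel to that of Proposition \ref{prop:flattify_consequent}, with the antecedent in place of the consequent, the variables $\overline c$ in place of $\overline d$, and the restricted existential quantifiers $(\exists \overline{c_i} R'_i z_i)$ — obtained by moving the universal restricted quantifiers $(\forall \overline{c_i} R'_i z_i)$ of Definition \ref{def:inverse_shape} across the main implication — in place of the $(\exists \overline{d_j} R_j y_j)$. Precisely, I would prove by induction on $i$, for $0\leq i\leq m$, that $(\exists \overline{c_i} R'_i z_i)\cdots(\exists \overline{c_1} R'_1 z_1)\eta^p$ is equivalent to a conjunction of inequalities in which no variable of any $\overline{c_j}$ with $j\leq i$ occurs, every variable of any $\overline{c_h}$ with $h>i$ occurs at most once and in display, and each inequality contains at most one occurrence of a variable in $\overline a\cup\overline b$, which is displayed, all remaining variables ranging in $\overline v$. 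The base case $i=0$ is precisely the output of the preprocessing step (Lemma \ref{lemma:atoms_in_eta_preprocessed} followed by the merging of conjuncts into single displayed inequalities); the clause on $\overline a\cup\overline b$ uses Definition \ref{def:inverse_shape}.8, by which each atom of $\eta$ carries exactly one occurrence of a variable from $\overline a\cup\overline b\cup\overline c$. The $\overline v$-only inequalities possibly produced by splitting during preprocessing carry no variable of $\overline c$, hence pass untouched through the whole elimination and contribute $\overline v$-only conjuncts to $\eta^f$.

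For the inductive step I would write the formula as $(\exists \overline{c_i} R'_i z_i)\eta'$ with $\eta'$ satisfying the invariant. If no variable of $\overline{c_i}$ occurs in $\eta'$, the quantifier is discarded and $\eta'$ witnesses the claim. Otherwise, gathering the inequalities $\overline{\xi(c_i,\theta)}$ of $\eta'$ in which variables of $\overline{c_i}$ occur (each displayed, with $\theta$ a $\overline v$-only term) and letting $\eta_0$ be the conjunction of the remaining inequalities of $\eta'$, the formula is rewritten as $\exists\overline{c_i}\bigl(\rho(\overline{c_i}R'_i z_i)\ \&\ \overline{\xi(c_i,\theta)}\ \&\ \eta_0\bigr)$, and the two subcases of Proposition \ref{prop:flattify_consequent} arise. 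If some $c\in\overline{c_i}$ occurs in $\rho(\overline{c_i}R'_i z_i)$ with the same polarity as in its $\xi(c,\theta)$, these conjuncts merge into a single inequality $\theta\vee h\leq c$ or $c\leq\theta\wedge h$ whose existential closure holds in any bounded lattice by instantiating $c$ to $\top$ or $\bot$, whereupon Definition \ref{def:inverse_shape}.5 leaves the remaining variables of $\overline{c_i}$ unrestricted, displayed and uniform, so their quantifiers are discharged likewise (cf.\ Example \ref{ex:big_arity_uniform}). Otherwise every variable of $\overline{c_i}$ in $\rho(\overline{c_i}R'_i z_i)$ has polarity opposite to its occurrence in $\xi$, and the equivalences of Table \ref{table:deflattification} — sound by Lemma \ref{lem: diamond-output equivalence extended}, reducing in the all-clopen setting to monotonicity and instantiation — replace $\rho(\overline{c_i}R'_i z_i)\ \&\ \overline{\xi(c_i,\theta)}$ by a single inequality $\chi(z_i)$ with $z_i$ displayed and all remaining variables in $\overline v$.

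To close the induction I would inspect $\chi(z_i)$: by Definition \ref{def:inverse_shape}.2, $z_i$ lies in $\overline a$, $\overline b$ or $\overline c$. If $z_i\in\overline a\cup\overline b$, then $\chi(z_i)$ is already a conjunct of the form demanded of $\eta^f$, and Definition \ref{def:inverse_shape}.4 fixes its polarity ($a\leq\theta$ with $a\in\overline a$, or $\theta\leq b$ with $b\in\overline b$, where $\theta$ contains only variables of $\overline v$, as the rewritings only apply modal connectives to the $\overline v$-only terms occurring in the $\xi$'s). If $z_i$ is some $c_h$ with $h>i$, then by the invariant $c_h$ occurs displayed in exactly one further inequality $\xi'(c_h,\theta')$ of $\eta_0$, and Definition \ref{def:inverse_shape}.6 together with the uniformity of Definition \ref{def:inverse_shape}.5 forces the two occurrences of $c_h$ to have the same polarity, so $\chi(c_h)$ and $\xi'(c_h,\theta')$ merge into a single inequality still displaying $c_h$ exactly once, restoring the invariant at level $i$; taking $i=m$ then yields $\eta^f$. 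I expect the main obstacle to be purely bookkeeping: correctly tracking the polarities of variable occurrences so as to decide which subcase applies, handling big-arity quantifiers $R'_i\in\{\leq_f,\geq_g\}$ one variable of $\overline{c_i}$ at a time, and checking that the re-merging in the $z_i\in\overline c$ subcase preserves the ``at most one occurrence, in display'' invariant — all of it carried out exactly as in the proof of Proposition \ref{prop:flattify_consequent}.
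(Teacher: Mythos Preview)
Your approach is essentially the paper's own: run the induction of Proposition \ref{prop:flattify_consequent} with $\overline c$ in place of $\overline d$ (moving the universal restricted quantifiers across the implication to become existentials), and additionally track the invariant that each surviving conjunct carries exactly one displayed occurrence of a variable in $\overline a\cup\overline b$ with all remaining variables in $\overline v$. The paper is terser---it declares the variable-elimination ``analogous'' and then argues the extra $\overline a$/$\overline b$ property separately: at most one such occurrence because the terms substituted via Table \ref{table:deflattification} are $\overline v$-only (Definition \ref{def:inverse_shape}.8), and at least one because the final restrictor $z_i$ must lie in $\overline a\cup\overline b$ (Definition \ref{def:inverse_shape}.2), with Lemma \ref{lemma:displayable} handling the conjuncts untouched by the elimination. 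Building the property into the inductive invariant from the start, as you do, is an equivalent and arguably cleaner packaging.

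Two small slips. First, you say that the $\overline v$-only side conjuncts produced by splitting during preprocessing ``contribute $\overline v$-only conjuncts to $\eta^f$'', but this literally contradicts the statement of the proposition, which requires every conjunct of $\eta^f$ to be of shape $a\leq\theta$ or $\theta\leq b$; the paper instead asserts (via Definition \ref{def:inverse_shape}.8) that every remaining conjunct in $\eta^p$ carries an $\overline a$- or $\overline b$-occurrence, and displays it via Lemma \ref{lemma:displayable}. You should align with this rather than let such conjuncts through. Second, in your last subcase (``$z_i$ is some $c_h$ with $h>i$''), the invariant only guarantees that $c_h$ occurs \emph{at most} once elsewhere in $\eta_0$, not exactly once; if it does not occur at all, no merging is needed and the invariant is already restored.
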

\begin{proof}
The proof of the variable-elimination is analogous to that of Proposition \ref{prop:flattify_consequent}. It remains to be shown that every conjunct in $\eta^f$ contains exactly one variable occurrence in $\overline a$ or $\overline b$.
To show that there is at most one  such occurrence, it is sufficient to notice that every term substituted into a restricting inequality only contains variables from $\overline v$, by Definition \ref{def:inverse_shape}.8.
To show that there is at least one  such occurrence, it is sufficient to note that whenever a term is substituted as in Table \ref{table:deflattification} (with $c_i$ replacing $d_i$ and $z_i$ replacing $y_i$), a single inequality is obtained where some $z_i$ occurs displayed. Hence, at the end of this process, in the only remaining inequalities $z_i$ will be a variable in $\overline a$ or $\overline b$ by Definition \ref{def:inverse_shape}.2.

This proves the statement for all the inequalities affected by the process in Proposition \ref{prop:flattify_consequent}. Let us show the statement  for the remaining inequalities, i.e., those which do not contain any occurrence of variables in $\overline c$ in $\eta^p$, but contain exactly one occurrence of a variable in either $\overline a$ or $\overline b$ (by Definition \ref{def:inverse_shape}.8). By Definition \ref{def:inverse_shape}.7, for each  such inequality, this occurrence of a variable in $\overline a$ or $\overline b$ is displayable; hence, the inequality can  equivalently be rewritten in the required shape (see Lemma \ref{lemma:displayable}).
\end{proof}

\paragraph{Eliminating variables in $\overline a$ and $\overline b$} The process above equivalently rewrites the input inequality as follows:
\begin{equation}
\label{eq:f-formula}
    \forall \overline v \forall \overline a \forall \overline b \left( \eta^f \Rightarrow \zeta^f \right),
\end{equation}
where $\zeta^f$ has the properties stated in Proposition \ref{prop:flattify_consequent}, and $\eta^f$ has the properties stated in Proposition \ref{prop:flattify_antecedent}.

\begin{lemma}
\label{lemma:rewrite_as_conjunction}
The formula \eqref{eq:f-formula} above can  equivalently be rewritten as a conjunction of quasi-inequalities
\begin{equation}
\label{eq:s-formula}
    \bigwith_i \forall \overline v\forall \overline a\forall \overline b \left(\eta^{s} \Rightarrow \zeta^{s}_i \right),
\end{equation}
such that $\eta^s$ is a conjunction of clopen-Skeleton inequalities, each of which contains exactly one displayed occurrence of a variable in $\overline a$ or $\overline b$, and each $\zeta_i^s = \zeta_i^s(\overline a, \overline b, \overline v)$ is a skeleton inequality in which every  variable in $\overline a$ and $\overline b$ occurs at most once.
\end{lemma}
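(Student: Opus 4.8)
The strategy is to pass from \eqref{eq:f-formula} to \eqref{eq:s-formula} in three moves: distribute the conjunction on the consequent, observe that the antecedent is already in the required shape, and normalise each conjunct of $\zeta^f$ into a Skeleton inequality in which no variable of $\overline a\cup\overline b$ occurs more than once. For the first two moves: by Proposition~\ref{prop:flattify_consequent}, $\zeta^f$ is a conjunction of inequalities $\zeta^f_i$, each containing only variables from $\overline a$, $\overline b$ and $\overline v$, so \eqref{eq:f-formula} is equivalent to the conjunction of the quasi-inequalities $\forall\overline v\forall\overline a\forall\overline b(\eta^f\Rightarrow\zeta^f_i)$. By Proposition~\ref{prop:flattify_antecedent}, $\eta^f$ is a conjunction of inequalities of shape $a\leq\theta$ or $\theta\leq b$ with $a\in\overline a$, $b\in\overline b$ and $\theta$ a term in $\overline v$ only; each of these is a clopen-Skeleton inequality (its $\overline v$-subterms playing the role of parameters) containing exactly one occurrence of a variable of $\overline a\cup\overline b$, in display. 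Hence we may set $\eta^s\coloneqq\eta^f$, and the remaining work is entirely on the $\zeta^f_i$.

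The key tool is the following \emph{splitting sub-claim}: if $\nu$ is a $+\vee$- or a $-\wedge$-node of the signed generation tree of an $\mathcal{L}_\mathrm{LE}$-inequality $\mu$ and every proper ancestor of $\nu$ is a Skeleton node, then $\mu$ is equivalent on $\mathbb{H}^\ast$ to the conjunction $\mu_1\ \&\ \mu_2$, where $\mu_j$ is obtained from $\mu$ by replacing the subtree rooted at $\nu$ with that rooted at its $j$-th child. One proves this by induction on the number of ancestors of $\nu$: the base case, where $\nu$ is the root of one side of $\mu$, is an instance of the splitting rules; in the inductive step, the parent of $\nu$ is a Skeleton connective which, in the coordinate occupied by $\nu$, is completely join-preserving or completely meet-reversing (if it is some $f\in\mathcal{F}$) or completely meet-preserving or completely join-reversing (if it is some $g\in\mathcal{G}$), by Lemma~\ref{lemma: distribution properties of sigma pi}; applying the corresponding (co)distributivity pushes $\nu$ one level towards the root (a $+\vee$-node remaining a $+\vee$-node or becoming a $-\wedge$-node, and dually for $-\wedge$), after which the induction hypothesis applies.

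Now fix $\zeta^f_i$. By Proposition~\ref{prop:flattify_consequent} together with the remark following it, every occurrence of a variable of $\overline a\cup\overline b$ in $\zeta^f_i$ is displayable, i.e.\ has only Skeleton ancestors, and whenever a variable of $\overline a$ (resp.\ $\overline b$) has two occurrences, the first common ancestor of these occurrences is a $+\vee$- (resp.\ $-\wedge$-) node, which is therefore a Skeleton node all of whose proper ancestors are Skeleton nodes. Applying the sub-claim at such a node splits $\zeta^f_i$ into two inequalities in each of which, since the two offending occurrences lie in distinct children of their common ancestor, strictly fewer ``coincident pairs'' of occurrences of variables of $\overline a\cup\overline b$ remain (which yields a strictly decreasing termination measure). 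Iterating until no such pair survives produces a finite conjunction of inequalities $\zeta^s_{i,1},\dots,\zeta^s_{i,n_i}$ in each of which every variable of $\overline a\cup\overline b$ occurs at most once. Since every rewriting used only merges or copies nodes and preserves both the sign and the Skeleton/PIA status of every surviving node, each $\zeta^s_{i,k}$ is still a Skeleton inequality and the surviving occurrences of $\overline a,\overline b$ variables in it remain displayable. Re-indexing over the pairs $(i,k)$ yields \eqref{eq:s-formula}.

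The main obstacle is the careful bookkeeping inside the splitting sub-claim: one must check that pushing a $+\vee$-node through a Skeleton connective of order-type $\partial$ (such as $-\wt$, $-\bt$, or $-g$ with $g\in\mathcal{G}_A$ reversing in that coordinate) indeed produces a $-\wedge$-node and that the rewriting is licensed by the appropriate instance of Lemma~\ref{lemma: distribution properties of sigma pi}, and symmetrically for $-\wedge$-nodes; and one must verify termination of the iterated splitting together with the invariance, under all these rewritings, of the structural properties relied upon downstream (Skeleton shape, displayability of the $\overline a,\overline b$ occurrences, and the clopen-parameter form of the $\overline v$-subterms).
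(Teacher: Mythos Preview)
Your proof is correct and follows essentially the same approach as the paper. The treatment of $\zeta^s_i$ via your splitting sub-claim is precisely the paper's ``apply distribution laws to push $-\wedge$ and $+\vee$ upward, then split'' step, just phrased more explicitly. The one noteworthy difference is on the antecedent side: you set $\eta^s\coloneqq\eta^f$ and observe that each conjunct already has exactly one displayed occurrence of a variable in $\overline a\cup\overline b$, whereas the paper additionally applies inverse splitting to merge all conjuncts sharing the same displayed variable (rewriting $\bigwith_i a\leq\theta_i$ as $a\leq\bigwedge_i\theta_i$, and dually for $b$). Both choices satisfy the lemma as stated; the paper's merged form is what is actually used in the subsequent Proposition~\ref{prop:inverse_first_approx}, where one needs each $a$ and $b$ to occur in a \emph{single} antecedent inequality in order to write $\eta^s$ as $\overline{a\leq\alpha}\ \&\ \overline{\beta\leq b}$ and invoke Lemma~\ref{lemma:first_approx}.
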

\begin{proof}
By Definition \ref{def:inverse_shape}.4, all occurrences of variables in $\overline a$ (resp.~$\overline b$) are positive (resp.~negative); hence, by applying the inverse splitting rules (cf.\ Footnote \ref{footnote:inverse_splitting}), $\eta^f$ can  equivalently be rewritten  so as to merge all inequalities sharing the same variable in $\overline a$ or $\overline b$ into one single inequality. More specifically, for each $a$ in $\overline a$ (resp.\ $b$ in $\overline b$) occurring in inequalities $a \leq \theta_i$ (resp.\ $\theta_i \leq b$) in $\eta^f$, the conjunction $\bigwith_i a \leq \theta_i$ (resp.\ $\bigwith_i \theta_i \leq b$) is rewritten as $a \leq \bigwedge_i \theta_i$ (resp.\ $\bigvee_i \theta_i \leq b$). Let us denote by $\eta^{s}$ the result of this process.

Each conjunct $\zeta^f_i$ in  $\zeta^f$ can  equivalently be rewritten by applying the distribution laws of all connectives so as to push $-\wedge$ and $+\vee$ nodes above occurrences of variables in $\overline a$ and $\overline b$. Since every two occurrences of the same variable in $\overline a$ and $\overline b$ share a node $-\wedge$ or $+\vee$ as first common ancestor (cf.~Proposition \ref{prop:flattify_consequent}), by exhaustively applying the splitting rules,  the  resulting inequality can  equivalently be rewritten as  a conjunction of inequalities, in each of which, each variable in $\overline a$ and $\overline b$ occurs at most once. Let $\zeta^{s}$  denote  the result of applying this process to each conjunct in $\zeta^f$, and let $\zeta^{s}_i$ denote   the conjuncts of $\zeta^{s}$.
It follows that \eqref{eq:f-formula} is equivalent to the conjunction of quasi-inequalities
$\bigwith_i \forall \overline v\forall \overline a\forall \overline b \left(\eta^{s} \Rightarrow \zeta^{s}_i \right)$.
\end{proof}

\begin{remark}
\label{remark:as_and_bs_occur_everywhere_wlog}
Thanks to the proposition above, we can proceed on each  quasi-inequality in \eqref{eq:s-formula} separately. Notice that, when working on each such quasi-inequality, the variable occurrences in it might take a different role. This is the case  of the  variables in $\overline a$ and $\overline b$, which can occur either in both $\eta^s$ and $\zeta^s_i$, or only in one of them (or in neither of them, in which case we omit the quantification). In each quasi-inequality,
each variable in $\overline a$ or in $\overline b$ which occurs only in 
$\zeta^s_i$ is then reassigned to the set of   variables in $\overline v$. Notice that variables in $\overline v$ do not have to satisfy any constraints, hence this reassignment is harmless; moreover, since by assumption the given variable does not occur in $\eta^s$, after the reassignment, it is still true that each inequality in $\eta^s$ contains exactly one displayed occurrence of a variable in $\overline a $ or $\overline b$, that is, the quasi inequality after the reassignment still satisfies  the statement of Lemma \ref{lemma:rewrite_as_conjunction}. Thus, modulo this reassignment, we can assume without loss of generality that every variable in $\overline a$ and $\overline b$ occurs in $\eta^s$.

Furthermore, we can assume without loss of generality that every variable in $\overline a$ and $\overline b$ occurs in $\zeta^s_i$, since, for any $a$ in $\overline a$ (resp.\ $b$ in $\overline b$) which does not occur in $\zeta^s_i$, $\bigwith_i \forall \overline v\forall \overline a\forall \overline b \left(\eta^{s} \Rightarrow \zeta^{s}_i \right)$ is equivalent to $\bigwith_i \forall \overline v\forall (\overline a)^-\forall \overline b \left(\exists a(\eta^{s}) \Rightarrow \zeta^{s}_i \right)$ (resp.\ $\bigwith_i \forall \overline v\forall\overline a\forall (\overline b)^- \left(\exists b(\eta^{s}) \Rightarrow \zeta^{s}_i \right)$), where $(\overline a)^-$ is $\overline a$ without $a$ (resp.\ $(\overline b)^-$ is $\overline b$ without $b$).
Then, $\exists a(\eta^{s})$ (resp.~$\exists b(\eta^{s})$) is equivalent to $\eta^{s-}$, which is the conjunction of inequalities $\eta^s$ minus the inequality in which $a$ (resp.~$b$) occurs, since, similarly to what was discussed in the proof of Proposition  \ref{prop:flattify_consequent}, $\exists a (a\leq t)$ (resp.~$\exists b(t\leq b)$) holds  true under any valuation $v: \mathsf{Prop}\to A$ (cf.~Proposition \ref{prop: positive PIA are open}.1).
\end{remark}

\begin{proposition}
\label{prop:inverse_first_approx}
Every quasi-inequality $\forall \overline v\forall \overline a\forall \overline b \left(\eta^{s} \Rightarrow \zeta^{s}_i \right)$ is equivalent to a clopen-analytic $\mathcal{L}_\mathrm{LE}$-inequality.
\end{proposition}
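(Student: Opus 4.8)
The goal is to show that each quasi-inequality of the form
\[
\forall \overline v\forall \overline a\forall \overline b \left(\eta^{s} \Rightarrow \zeta^{s}_i \right)
\]
produced by Lemma~\ref{lemma:rewrite_as_conjunction} (and simplified via Remark~\ref{remark:as_and_bs_occur_everywhere_wlog}) is equivalent, over every spd-algebra $\mathbb{H}$, to a single clopen-analytic $\mathcal{L}_\mathrm{LE}$-inequality. The strategy is to run the steps of the correspondence procedure of Section~\ref{sec:correspondence} \emph{in reverse}, reading the Ackermann lemma, the compactness step, and the first-approximation lemma as equivalences that can be applied right-to-left. The plan is to proceed in the following order.

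\emph{Step 1 (preparing $\eta^s$ for Ackermann).} By Lemma~\ref{lemma:rewrite_as_conjunction}, $\eta^s$ is a conjunction of clopen-Skeleton inequalities, each containing exactly one displayed occurrence of a variable in $\overline a$ or $\overline b$; and by Proposition~\ref{prop:flattify_antecedent} together with Remark~\ref{remark:as_and_bs_occur_everywhere_wlog} these inequalities are of the form $a\leq\theta_a$ (one per $a$ in $\overline a$) or $\theta_b\leq b$ (one per $b$ in $\overline b$), with each $\theta_a,\theta_b$ a clopen term containing only variables in $\overline v$. I will first group, for each variable $a$ in $\overline a$, all conjuncts $a\leq\theta_{a,1},\dots,a\leq\theta_{a,k}$ into a single inequality $a\leq\bigwedge_j\theta_{a,j}$ (dually for $b$), using the inverse splitting rules (Footnote~\ref{footnote:inverse_splitting}); call the resulting terms $\gamma_a$ and $\delta_b$. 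Since each $\gamma_a$, being a meet of clopen terms, is clopen (hence in $O(A^\delta)$ by Proposition~\ref{prop: positive PIA are open}.1, and likewise $\delta_b\in K(A^\delta)$), the hypotheses of the Ackermann lemma (Step~4 of Section~\ref{sec:correspondence}, read right-to-left) are met: the quasi-inequality is equivalent to
\[
\forall \overline v\,\forall \overline k\,\forall \overline o\,\forall \overline a\,\forall \overline b\bigl(\ \overline{k\leq a\leq \gamma_a}\ \&\ \overline{\delta_b\leq b\leq o}\ \Rightarrow\ \zeta^s_i[\overline a/!\overline x,\overline b/!\overline y]\ \bigr),
\]
where each $k$ ranges over $K(A^\delta)$ and each $o$ over $O(A^\delta)$; this uses that in $\zeta^s_i$ every occurrence of a variable in $\overline a$ (resp.~$\overline b$) is positive (resp.~negative) and displayable, so $\zeta^s_i$ is monotone in the $\overline a$-coordinates and antitone in the $\overline b$-coordinates (Proposition~\ref{prop: definite skeleton implies preservation properties} applied to the Skeleton shape of $\zeta^s_i$).

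\emph{Step 2 (reverse compactness and reverse first approximation).} Next I eliminate the clopen variables $\overline a,\overline b$ in favour of the closed/open ones by reading the compactness step right-to-left: $\overline{k\leq a\leq\gamma_a}$ and $\overline{\delta_b\leq b\leq o}$ collapse (by compactness, since $\gamma_a\in O(A^\delta)$ and $\delta_b\in K(A^\delta)$) to $\overline{k\leq\gamma_a}$ and $\overline{\delta_b\leq o}$, yielding
\[
\forall \overline v\,\forall \overline k\,\forall \overline o\bigl(\ \overline{k\leq \gamma_a}\ \&\ \overline{\delta_b\leq o}\ \Rightarrow\ \zeta^s_i[\overline k/!\overline x,\overline o/!\overline y]\ \bigr).
\]
Now I apply the first-approximation lemma (Lemma~\ref{lemma:first_approx}) in reverse: since $\zeta^s_i(!\overline x,!\overline y)$ is a positive-in-$\overline x$, negative-in-$\overline y$ formula all of whose nodes along the relevant branches are Skeleton — more precisely, writing $\zeta^s_i$ as $(\varphi\leq\psi)$ we can take $\varphi$ to be its positive-Skeleton left-hand side and $\psi$ its negative-Skeleton right-hand side, and the terms $\gamma_a,\delta_b$ as the PIA parts — the displayed quantified implication is precisely $\mathbb{H}^\ast\models(\varphi\leq\psi)[\overline\gamma/!\overline x,\overline\delta/!\overline y]$. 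Since each $\gamma_a$ (resp.~$\delta_b$) is a clopen term and $(\varphi\leq\psi)(!\overline x,!\overline y)$ is an analytic (indeed Skeleton) inequality, the inequality $(\varphi\leq\psi)[\overline\gamma/!\overline x,\overline\delta/!\overline y]$ is exactly a clopen-analytic $\mathcal{L}_\mathrm{LE}$-inequality in the sense of Definition~\ref{def:clopen_analytic}, which is the required output.

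\emph{Main obstacle.} The delicate point is \textbf{bookkeeping of polarities and branch shapes}: one must verify that, after all the equivalent rewritings performed in Propositions~\ref{prop:flattify_consequent} and~\ref{prop:flattify_antecedent} and in Lemma~\ref{lemma:rewrite_as_conjunction}, the inequality $\zeta^s_i$ genuinely decomposes as a (positive Skeleton)$\,\leq\,$(negative Skeleton) shape with the variables in $\overline a,\overline b$ sitting at Skeleton branches and the terms $\gamma_a,\delta_b$ plugged in at exactly one position each — this is what licenses both the reverse Ackermann step (which needs monotonicity/antitonicity, hence Proposition~\ref{prop: definite skeleton implies preservation properties}) and the reverse first-approximation step (which needs the displayed shape of Lemma~\ref{lemma:first_approx}). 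In particular one must check that grouping multiple antecedent inequalities $a\leq\theta_{a,j}$ into $a\leq\bigwedge_j\theta_{a,j}$ keeps $\gamma_a$ clopen (it does, being a finite meet of clopen terms, by Proposition~\ref{prop:background can-ext}(iv) lifted to $A$), and that the unique-occurrence condition on $\overline a,\overline b$ in $\zeta^s_i$ (Lemma~\ref{lemma:rewrite_as_conjunction}) is exactly the ``$!\overline x,!\overline y$'' placeholder discipline required by Definition~\ref{def:clopen_analytic}. Once these structural facts are in place, each of Steps~1 and~2 is a direct right-to-left reading of a lemma already proved in Section~\ref{sec:correspondence}, and the proposition follows.
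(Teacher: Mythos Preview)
Your approach is correct and matches the paper's: both run the correspondence procedure of Section~\ref{sec:correspondence} in reverse to collapse the quasi-inequality into a single clopen-analytic inequality. The paper's proof is terser (it cites Lemma~\ref{lemma:first_approx} alone), whereas you explicitly unwind all three steps (Ackermann, compactness, first approximation); your version makes more visible what is actually happening when the variables in $\overline a,\overline b$ range in $A$ rather than $K(A^\delta),O(A^\delta)$. Note also that the grouping you perform at the start of Step~1 is already done in the construction of $\eta^s$ (proof of Lemma~\ref{lemma:rewrite_as_conjunction}), so that part is redundant but harmless.

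There is one terminological slip worth fixing. You call the $\theta_a,\theta_b$ (and hence $\gamma_a,\delta_b$) ``clopen terms'', but in general they are \emph{clopen-Skeleton formulas}: they may contain slanted connectives (e.g.\ $\theta_a=\blacksquare v$, as in several of the paper's examples). Consequently $\gamma_a$ need not lie in $A$; it lies only in $O(A^\delta)$, by Proposition~\ref{prop: positive PIA are open}.2 (not .1), since a negative clopen-Skeleton formula is a positive clopen-PIA formula. This does not break any of your intermediate steps, which only require $\gamma_a\in O(A^\delta)$ and $\delta_b\in K(A^\delta)$. But it does affect your final justification: the result $\zeta^s_i[\overline\gamma/!\overline a,\overline\delta/!\overline b]$ is clopen-analytic not because clopen terms are plugged into an analytic shape, but because $\zeta^s_i$ is Skeleton and each $\gamma_a$ (resp.\ $\delta_b$) is positive (resp.\ negative) clopen-PIA, so every branch of the result is Skeleton-then-PIA with clopen terms at the leaves --- exactly the shape of Definition~\ref{def:clopen_analytic}. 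This is the reasoning the paper gives.
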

\begin{proof}
By Remark \ref{remark:as_and_bs_occur_everywhere_wlog}, we may assume w.l.o.g.\ that all variables in $\overline a$ and $\overline b$ occur exactly once in $\eta^s$ and exactly once in $\zeta^s_i$. Hence, the quasi-inequality can be rewritten as follows
\[
\forall \overline v\forall \overline a\forall \overline b \left(
\overline{a \leq \alpha}
\ \& \
\overline{\beta \leq b}
\Rightarrow \zeta^{s}_i(!\overline a, !\overline b, \overline v)
\right).
\]
Since $\zeta_i^s$ is a Skeleton inequality, by Lemma \ref{lemma:first_approx}, the quasi inequality is equivalent to
\[
\forall \overline v \quad
\zeta^{s}_i[\overline \alpha/!\overline a, \overline \beta/!\overline b],
\]
which is clopen-analytic since each $\alpha$ in $\overline \alpha$ (resp.\ $\beta$ in $\overline \beta$) is a clopen negative (resp.\ positive) skeleton formula containing only variables in $\overline v$ (cf.~Example \ref{ex:first approx} for an illustration of this elimination).
\end{proof}

The following theorem summarizes the results of this section.

\begin{theorem}
\label{thm:inverse correspondence theorem}
For every Kracht $\mathcal{L}^\mathrm{FO}_{spd}$-formula $\chi$  and every spd-algebra $\mathbb{H}$,
 \[\mathbb{H}\models \chi \quad \text{ iff }\quad  \mathbb{H}^\ast\models \mathsf{Kracht}(\chi), \]
where $\mathsf{Kracht}(\chi)$ denotes the output of Algorithm \ref{algo:inverse}, i.e.~a conjunction of clopen-analytic $\mathcal{L}_\mathrm{LE}$-inequalities  which can effectively be computed using only (inverse) $\mathsf{ALBA}$-steps (cf.~Section \ref{sec:correspondence}) as described in the procedure above.
\end{theorem}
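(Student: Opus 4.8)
The plan is to observe that Theorem \ref{thm:inverse correspondence theorem} is the concatenation of the equivalences established step by step throughout Section \ref{sec:inverse_correspondence_procedure}: the proof consists in checking that each transformation performed by Algorithm \ref{algo:inverse} preserves semantic equivalence, and that the syntactic invariants required as hypotheses of each step survive from one step to the next.

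First I would set up the bridge between the two semantic settings. Since $\chi$ is a Kracht formula, all of its (restricted) quantifiers bind variables ranging over the domain $A$ of $\mathbb{H}$; hence, replacing every relational atom $sRt$ occurring in $\chi$ by the $\mathcal{L}_\mathrm{LE}$-inequality $\rho(sRt)$ (cf.~the table above Definition \ref{def: sigma and pi extensions of slanted}) and using the biconditionals recorded there yields
\[\mathbb{H}\models\chi \quad\text{ iff }\quad \mathbb{H}^\ast\models\rho(\chi),\]
where $\rho(\chi)$ denotes this translated sentence. From this point on the argument takes place over $\mathbb{H}^\ast$, with the variables in $\overline v$, $\overline a$, $\overline b$, $\overline c$, $\overline d$ still ranging over $A$.

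Next I would run the four phases of the algorithm in order. (i) \emph{Preprocessing}: by Lemmas \ref{lemma:atoms_in_zeta_preprocessed} and \ref{lemma:atoms_in_eta_preprocessed}, together with the residuation and splitting rules, every occurrence of a variable in $\overline d$ (resp.~$\overline c$) inside the translation of $\zeta$ (resp.~$\eta$) is brought into display, and conjunctions of inequalities sharing a displayed variable are merged into single inequalities via meets and joins; this rewrites $\rho(\chi)$ equivalently, with antecedent $\eta^p$ and consequent $\zeta^p$. (ii) \emph{Eliminating $\overline d$}: Proposition \ref{prop:flattify_consequent} equivalently rewrites the consequent as a conjunction $\zeta^f$ of inequalities over $\overline a$, $\overline b$, $\overline v$ only, with the stated ancestor condition on repeated $\overline a$- and $\overline b$-occurrences. (iii) \emph{Eliminating $\overline c$}: Proposition \ref{prop:flattify_antecedent} equivalently rewrites the antecedent as a conjunction $\eta^f$ of inequalities of shape $a\leq\theta$ or $\theta\leq b$ with $\theta$ over $\overline v$, producing the formula \eqref{eq:f-formula}. (iv) \emph{Eliminating $\overline a$, $\overline b$}: Lemma \ref{lemma:rewrite_as_conjunction} rewrites \eqref{eq:f-formula} as the conjunction of quasi-inequalities \eqref{eq:s-formula}; on each of these, Remark \ref{remark:as_and_bs_occur_everywhere_wlog} reduces to the case where every variable in $\overline a$, $\overline b$ occurs exactly once on each side of the main implication, after which Proposition \ref{prop:inverse_first_approx} --- via the first-approximation Lemma \ref{lemma:first_approx} --- yields an equivalent clopen-analytic $\mathcal{L}_\mathrm{LE}$-inequality. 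The conjunction of these inequalities is, by definition, $\mathsf{Kracht}(\chi)$; concatenating the equivalences of phases (i)--(iv) with the bridge above gives $\mathbb{H}\models\chi$ iff $\mathbb{H}^\ast\models\mathsf{Kracht}(\chi)$. Effectiveness is immediate, since each phase is an explicit syntactic rewriting which, as noted in the cited results, uses only (inverse) $\mathsf{ALBA}$-steps.

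The only work beyond bookkeeping, and the point I expect to require the most care, is verifying that the hypotheses of each cited result actually hold at the moment it is invoked: that the syntactic invariants of Definition \ref{def:inverse_shape} (displayability of the relevant occurrences; positivity of $\overline a$- and negativity of $\overline b$-occurrences; uniformity of $\overline c$ in $\eta$ and of $\overline d$ in $\zeta$; the single-occurrence conditions; and the $+\wedge$/$-\vee$ first-common-ancestor condition) together with the clopen-Skeleton shapes produced at intermediate stages are indeed propagated from one phase to the next. This is exactly what the remarks following Propositions \ref{prop:flattify_consequent} and \ref{prop:flattify_antecedent}, and the discussion in Remark \ref{remark:as_and_bs_occur_everywhere_wlog}, are designed to secure; assembling them in sequence completes the proof.
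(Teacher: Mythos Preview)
Your proposal is correct and matches the paper's approach exactly: the paper presents Theorem \ref{thm:inverse correspondence theorem} as a summary of the results of Section \ref{sec:inverse_correspondence_procedure} without a separate proof, and your argument simply makes explicit the chain of equivalences (the $\rho$-translation bridge, Lemmas \ref{lemma:atoms_in_zeta_preprocessed}--\ref{lemma:atoms_in_eta_preprocessed}, Propositions \ref{prop:flattify_consequent}--\ref{prop:flattify_antecedent}, Lemma \ref{lemma:rewrite_as_conjunction}, Remark \ref{remark:as_and_bs_occur_everywhere_wlog}, and Proposition \ref{prop:inverse_first_approx}) that the paper has already established. The care you flag about propagating the syntactic invariants of Definition \ref{def:inverse_shape} through the phases is precisely what the intervening remarks and the proofs of those propositions address.
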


\begin{remark}
When applied to the {\em general} (i.e.~not necessarily distributive) lattice signature in which $\mathcal{F}_A = \varnothing = \mathcal{G}_A$, the procedure above  allows to identify  the shape of those Kracht formulas corresponding to analytic but not clopen-analytic inequalities, which is the one in which all terms $s, t$ of the atoms $sRt$ in $\eta$ and in $\zeta$  reduce to  variables:   indeed,  by items  7 and 9 of the definition above, each such $s$ and $t$   has shape $\bigvee_i u_i$ (resp.~$\bigwedge_i u_i$) if occurring positively (resp.~negatively), and hence modulo splitting on $\rho(sRt)$, each such relational atom can be further reduced to one of the form $xRy$.
\end{remark}

\begin{algorithm}
\caption{Display node}
\label{algo:display}
\small
\hspace*{\algorithmicindent} \textbf{Input}: An inequality $\varphi \leq \psi$ encoded by its syntax tree, and a displayable node $d$ in the tree, to be  displayed. Nodes in the tree have a sequence of children (in the field ``children'') and a label (in the field ``label'').\\
\hspace*{\algorithmicindent} \textbf{Output}: No output is returned, we assume $\varphi\leq\psi$ to be passed by reference, i.e., any change performed by the function to the inequality is reflected in the caller.
\begin{algorithmic}[1]
\Function{Display}{$\varphi\leq\psi, d$}
    \If{$d$ is the root of either $\varphi$ or $\psi$}
        \State\Return
    \EndIf
    \State{\textbf{let} $p$ be the parent node of $d$, $h$ the connective labelling it, and $c_1,\ldots,c_n$ its children}
    \If{$p \neq \mathrm{root}(\varphi)$ and $p \neq \mathrm{root}(\psi)$}
        \State{\textsc{Display}$(\varphi\leq\psi, p)$}
    \EndIf
    \State{\textbf{let} $i$ be the index s.t.\ $c_i = d$, and $h^{\mathrm{res}}_i$ the residual of $h$ w.r.t.\ the $i$-th coordinate}
    \If{($d$ lies in $\varphi$ and $\varepsilon_h(i) = 1$) \textbf{or} ($d$ lies in $\psi$ and $\varepsilon_h(i) = \partial$)}
        \State{$\mathrm{root}(\varphi) \leftarrow d$}
        \State{$p.\mathrm{children}[i] \leftarrow \mathrm{root}(\psi)$}
        \State{$p.\mathrm{label} \leftarrow h^\mathrm{res}_i$}
        \State{$\mathrm{root}(\psi) \leftarrow p$}
    \Else
        \State{$\mathrm{root}(\psi) \leftarrow d$}
        \State{$p.\mathrm{children}[i] \leftarrow \mathrm{root}(\varphi)$}
        \State{$p.\mathrm{label} \leftarrow h^\mathrm{res}_i$}
        \State{$\mathrm{root}(\varphi) \leftarrow p$}
    \EndIf
\EndFunction
\end{algorithmic}
\end{algorithm}

\begin{algorithm}
\caption{Compactification}
\label{algo:compactification}
\small
\hspace*{\algorithmicindent} \textbf{Input}: A set of variables $D$ to rewrite in display position. A sequence of restricted quantifiers $Q$, each of which having a field ``restricted'' containing the sequence of restricted variables, a field ``restrictor'' containing the restrictor of the quantifier, and a field ``inequality'' containing the restricting inequality of the quantifier. A set of inequalities $I$ (in practice, it will be $\eta$ or $\zeta$). The set $D$ should include all the variables introduced (restricted) by quantifiers in $Q$.\\
\hspace*{\algorithmicindent} \textbf{Output}:
\begin{algorithmic}[1]
\Function{Compactify}{$D, Q, I$}
    \For{$ineq \in I$}
        \State \textbf{rewrite} $ineq$ by pushing $\wedge$ and $\vee$ up as much as possible using the normality of the connectives
    \EndFor
    \While{$I$ contains some inequality $ineq$ of the form $\varphi \leq \psi_1 \wedge \psi_2$ or $\varphi_1 \vee \varphi_2 \leq \psi$}
        \State \textbf{delete} $ineq$ in $I$
        \State $I \leftarrow I \cup \{ \varphi \leq \psi_1, \varphi \leq \psi_2 \}$ \textbf{if} $ineq$ has shape $\varphi \leq \psi_1 \wedge \psi_2$ \\
    \hspace*{\algorithmicindent}\hspace*{\algorithmicindent}\phantom{$I \leftarrow$} $I \cup \{\varphi_1 \leq \psi, \varphi_2 \leq \psi \}$ \textbf{otherwise}
    \EndWhile
    \For{every $ineq \in I$ containing an occurrence $x$ of a variable in $D$}
        \State{\textsc{Display}$(ineq, x)$}
    \EndFor
    \For{every variable $x$ restricted in some quantifier in $Q$ and occurring also in $I$}
        \State\textbf{let} $x\_occ$ be the set of inequalities in $I$ where $x$ occurs (displayed), and let $\alpha_1,\ldots,\alpha_n$ be the terms on the other side in these inequalities
        \State \textbf{let} $new\_ineq \leftarrow x \leq \alpha_1 \wedge \ldots \wedge \alpha_n$ \textbf{if} $x$ occurs displayed on the left in the inequalities in $x\_occ$ \\\hspace*{\algorithmicindent}\hspace*{\algorithmicindent}\phantom{\textbf{let} $new\_ineq \leftarrow$ } $\alpha_1 \vee \ldots \vee \alpha_n \leq x$ \textbf{otherwise}
        \State $I \leftarrow I \setminus \{x\_occ\} \cup \{new\_ineq\}$
    \EndFor
    \For{all quantifiers in $q$ in $Q$ in reverse order}
        \If{no variable in $q.\mathrm{restricted}$ has the same polarity in $I$ and $q.\mathrm{inequality}$}
            \State{\textbf{let} $new \leftarrow $ the inequality built from the inequalities containing (displayed) variables in $q.\mathrm{restricted}$ and $q.\mathrm{inequality}$ as shown in Table \ref{table:deflattification} }
            \If{$q.\mathrm{restrictor}$ occurs displayed on some inequality $ineq$ in $I$}
                \State $new \leftarrow $ the result of inverse splitting between $new$ and $ineq$
                \State \textbf{remove} $ineq$ from $I$
            \EndIf
            \State $I \leftarrow I \cup \{new\}$
        \EndIf
        \State\textbf{remove} all inequalities containing (displayed) variables in $q.\mathrm{restricted}$ from $I$
        \State \textbf{remove} $q$ from $Q$
    \EndFor
\EndFunction
\end{algorithmic}
\hspace*{\algorithmicindent} {\footnotesize{\bf Note}: This algorithm implements the ideas in Proposition \ref{prop:flattify_consequent} and Proposition \ref{prop:flattify_antecedent}. Lines 2-12 implement the ideas in Lemma \ref{lemma:atoms_in_zeta_preprocessed} and \ref{lemma:atoms_in_eta_preprocessed}, lines 13-18 apply inverse splitting to group together all inequalities sharing a displayed variable, and the loop in lines 19-30 implements the procedure described in Proposition \ref{prop:flattify_consequent}. The second case described in the procedure is found in lines 20-28, while the first case is practically reduced to the execution of only line 28.}
\end{algorithm}

\begin{algorithm}
\caption{\textsf{Kracht}}
\label{algo:inverse}
\small
\hspace*{\algorithmicindent} \textbf{Input}: The sequence of $v$-variables $\overline v$, of $a$-variables $\overline a$, of $b$-variables $\overline b$, of $c$-variables $\overline c$, and of $d$-variables $\overline d$. A set of existential restricted quantifiers $Q_c$ restricting variables in $\overline c$, and a set of universal restricted quantifiers $Q_d$ restricting variables in $\overline d$. Two sets of inequalities $\eta$ and $\zeta$ satisfying Definition \ref{def:inverse_shape}. \\
\hspace*{\algorithmicindent} \textbf{Output}: A set of inequalities the conjunction of which corresponds to the input formula.
\begin{algorithmic}[1]
\Function{Kracht}{$\overline v, \overline a, \overline b, \overline c, \overline d, Q_c, Q_d, \eta, \zeta$}
    \State{\textsc{Compactify}$(\overline d, Q_d, \zeta)$}
    \State{\textsc{Compactify}$(\overline a \oplus \overline b \oplus \overline c, Q_c, \eta)$}
    \For{every inequality $ineq$ with a displayed variable $x$ in $\overline a$ or $\overline b$ in $\eta$}
        \State{\textbf{let} $\alpha$ be the term on the other side of the inequality w.r.t.\ $x$}
        \If{$x$ occurs in $\zeta$}
            \State{\textbf{substitute} $\alpha$ to the only occurrence of $x$ in $\zeta$}
        \EndIf
    \EndFor
    \State\Return $\zeta$
\EndFunction
\end{algorithmic}
\hspace*{\algorithmicindent} {\footnotesize \textbf{Note}: Line 2 applies Lemma \ref{prop:flattify_consequent}, line 3 applies Lemma \ref{prop:flattify_antecedent}, and lines 4-9 apply the inverse of the first approximation, i.e.\ Proposition \ref{prop:inverse_first_approx}.}
\end{algorithm}

\section{Dual characterizations along discrete duality for distributive spd-algebras}
\label{sec: applications}
In the present section, we firstly identify {\em inductive Kracht formulas} as those Kracht formulas $\chi$  such that each conjunct of $\mathsf{Kracht}(\chi)$ is an analytic {\em inductive} $\mathcal{L}$-inequality (cf.~Lemma \ref{lemma:kracht_inductive_to_inductive}); then, as a corollary of the inverse correspondence result of the previous section, we extend the dual characterizations of finitely many conditions on subordination lattices in terms of conditions on subordination spaces (cf.~\cite[Proposition 6.11 and 6.12]{de2024obligations2}) to   inductive Kracht formulas. This generalization  follows from the {\em standard} correspondence results for inductive (D)LE-inequalities between perfect algebras and their associated frames \cite{conradie2012algorithmic,conradie2019algorithmic}, using Theorem \ref{thm:inverse correspondence theorem}, and the slanted canonicity of analytic inductive inequalities \cite[Theorem 4.1]{de2021slanted}.

An analytic $\mathcal{L}$-inequality $\phi\leq\psi$ is {\em inductive} (cf.~\cite[Definition 3.4]{conradie2019algorithmic}, \cite[Definition 18]{greco2018unified}) if some order-type $\varepsilon$ and some irreflexive and transitive relation $\Omega$ exist on the set of variables occurring in $\phi\leq\psi$ such that
every SRR node $h$ lies on at most one $\varepsilon$-critical branch in the signed generation tree of $\phi\leq\psi$ (where, for any variable $v\in \mathsf{Prop}$, a branch ending in an occurrence of $v$ is $\varepsilon$-{\em critical}  iff its leaf is positive and $\varepsilon(v) = 1$ or its leaf is negative and $\varepsilon(v) = \partial$), and $u <_\Omega v$ for every variable $u$ labelling each leaf $l'\neq l$ in the scope of $h$.


\begin{definition}
\label{def:kracht_analytic_inductive}
A Kracht $\mathcal{L}^\mathrm{FO}_{spd}$-formula (cf.~Definition \ref{def:inverse_shape}) is {\em inductive} if:
\begin{itemize}
\item for every atom $sRt$ in $\zeta$, every node in the signed generation tree of $\rho(sRt)$ is a skeleton node;
\item there is an order type $\varepsilon$ and a strict order $\Omega$ on the variables in $\overline v$ such that, for every $sRt$ in $\eta$, every $\varepsilon$-critical occurrence $l$ in the dual signed generation tree of $\rho(sRt)$ (cf.~Section \ref{ssec:analytic-LE-axioms}) is such that
    \begin{itemize}
        \item the branch to $l$ in $\rho(sRt)$ is good;
        \item for every SRR node $\oast(s_1,\ldots,s_n)$ in the branch to $l$ in $\rho(sRt)$, if $l$ lies in $s_i$, each $s_j$ with $j\neq i$ does not contain any $\varepsilon$-critical variable, and any variable $u$ occurring in $s_j$ is such that $u <_\Omega v$, where $v$ is the variable labelling $l$.
    \end{itemize}
\end{itemize}
\end{definition}

\begin{lemma}
\label{lemma:kracht_inductive_to_inductive}
For every inductive Kracht $\mathcal{L}_{spd}^\mathrm{FO}$-formula $\chi$, every $\mathcal{L}_{\mathrm{LE}}$-inequality in $\mathsf{Kracht}(\chi)$ is analytic inductive.
\end{lemma}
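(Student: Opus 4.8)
The statement asserts that if $\chi$ is an inductive Kracht formula, then every $\mathcal{L}_{\mathrm{LE}}$-inequality produced by $\mathsf{Kracht}(\chi)$ is analytic inductive. Since Theorem \ref{thm:inverse correspondence theorem} already guarantees that each conjunct of $\mathsf{Kracht}(\chi)$ is clopen-analytic, the new content is purely the \emph{inductivity} (the order-type $\varepsilon$ and the strict order $\Omega$). The plan is to trace the inverse-correspondence algorithm step by step and show that the order type $\varepsilon$ on $\overline v$ and the strict order $\Omega$ witnessing inductivity of $\chi$ (in the sense of Definition \ref{def:kracht_analytic_inductive}) transfer to an order type and strict order witnessing inductivity of each output inequality $\zeta^s_i[\overline\alpha/!\overline a, \overline\beta/!\overline b]$ obtained in Proposition \ref{prop:inverse_first_approx}.

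First I would recall that the output inequality has the shape $\zeta^s_i[\overline\alpha/!\overline a, \overline\beta/!\overline b]$, where $\zeta^s_i$ is a Skeleton inequality and each $\alpha$ (resp.\ $\beta$) is a clopen negative (resp.\ positive) Skeleton formula containing only variables in $\overline v$, built from the $\rho$-translations of the atoms of $\eta$ via Proposition \ref{prop:flattify_antecedent}. Dually, each such $\alpha$/$\beta$ arose by displaying a variable in $\overline a$/$\overline b$ in some $\rho(sRt)$ with $sRt$ an atom of $\eta$ — so, up to residuation, $\alpha$ (resp.\ $\beta$) is essentially a "PIA-context extracted from $\rho(sRt)$". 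Crucially, the first condition of Definition \ref{def:kracht_analytic_inductive} (all nodes of $\rho(sRt)$ Skeleton for atoms in $\zeta$) ensures $\zeta^s_i$ is genuinely Skeleton after the $\overline c$- and $\overline d$-elimination of Propositions \ref{prop:flattify_consequent}, while the second condition (the $\varepsilon$/$\Omega$ conditions on $\varepsilon$-critical leaves in $\rho(sRt)$ for atoms in $\eta$) is precisely what guarantees that, when we plug $\overline\alpha,\overline\beta$ into $\zeta^s_i$, the resulting term inequality is inductive with the \emph{same} $\varepsilon$ on $\overline v$ and the same $\Omega$. The key structural observation I would make is that substituting a PIA-formula $\alpha$ into a Skeleton context at a negative slot produces a good branch (Skeleton above, PIA below), so analyticity is retained; and the SRR nodes that must be checked for the inductivity condition all lie \emph{inside} the $\alpha$'s and $\beta$'s (the Skeleton context $\zeta^s_i$ contributes no PIA/SRR nodes), so the $\Omega$-ordering and the "at most one critical branch per SRR" condition follow directly from the corresponding condition in Definition \ref{def:kracht_analytic_inductive} applied to each $\rho(sRt)$.

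The proof would then be organized as: (1) record that $\mathsf{Kracht}(\chi)$ outputs conjuncts $\zeta^s_i[\overline\alpha/!\overline a,\overline\beta/!\overline b]$ with $\zeta^s_i$ Skeleton and the $\alpha,\beta$ clopen PIA-shaped; (2) show the displaying/residuation steps of the preprocessing and of Lemmas \ref{lemma:displayable}, \ref{lemma:atoms_in_zeta_preprocessed}, \ref{lemma:atoms_in_eta_preprocessed} preserve the relevant polarities, the classification of nodes as Skeleton/PIA/SRR, and the $\varepsilon$-criticality of leaves — this is where one invokes that residuation preserves polarity and that a residual $f^\sharp_i$ of an SLR $f$ is still an SRA (so it stays a PIA-node), and symmetrically for $g^\flat_i$; (3) set $\varepsilon$ to be the order type on $\overline v$ from Definition \ref{def:kracht_analytic_inductive}, note all variables other than those in $\overline v$ have been eliminated, and verify that each $\varepsilon$-critical branch in the final inequality is good and that each SRR node on it satisfies the $\Omega$-condition, reading this off from the corresponding condition on $\rho(sRt)$; (4) conclude via the definition of inductive inequality quoted before Definition \ref{def:kracht_analytic_inductive}. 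The main obstacle I expect is bookkeeping in step (2): one must check that the somewhat intricate displaying manipulations in Propositions \ref{prop:flattify_consequent}–\ref{prop:flattify_antecedent} (pushing $\wedge,\vee$ around, applying inverse splitting, substituting flat terms $h(y_i)$) never create an SRR node lying on two critical branches and never raise the $\Omega$-rank of a side-variable above that of the critical variable — but since every rewriting either introduces only Skeleton nodes, or substitutes a PIA-context wholesale at a slot of matching polarity, the $\varepsilon$-critical/$\Omega$ data is inherited componentwise from the $\rho(sRt)$'s, which is exactly the content of the hypothesis.
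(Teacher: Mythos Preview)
Your proposal is correct and follows essentially the same strategy as the paper's proof: the Skeleton part $\zeta^s_i$ contributes no SRR nodes, and the only SRR nodes in the output lie inside the $\alpha,\beta$ terms assembled from atoms in $\eta$, whose inductivity is secured by the $\varepsilon/\Omega$ conditions of Definition~\ref{def:kracht_analytic_inductive} on the original $\rho(sRt)$'s. The paper's argument is terser, making the single observation that the only connectives \emph{added} by the inverse algorithm are lattice connectives from inverse splitting and slanted connectives from the $\rho$-translation of atoms in $\eta$, all of which land as SRA nodes; hence every SRR node in the output was already present in some clopen term of an atom $sRt$ in $\eta$, where the hypothesis applies directly.

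One small slip in your step~(2): the residual $f^\sharp_i$ of an SLR connective $f$ need not be SRA when $n_f\geq 2$ (e.g.\ the residual of $+\wedge$ in the distributive setting is $+{\rightarrow}$, which is SRR). Your parenthetical ``so it stays a PIA-node'' is still correct, and since your step~(3) explicitly checks the $\Omega$-condition at SRR nodes via the hypothesis anyway, the slip is harmless; the cleaner bookkeeping, as in the paper, is to note that residuation \emph{relabels} existing nodes rather than adding new ones, so the SRR nodes in the output are in bijection with those already present in the (dual) generation trees of the $\rho(sRt)$'s.
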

\begin{proof}
The only connectives introduced during the application of the inverse correspondence algorithm of Section \ref{sec:inverse_correspondence_procedure} are the lattice connectives introduced by inverse splitting rules (cf.\ Footnote \ref{footnote:inverse_splitting}), and the slanted connectives introduced by translating relational atoms  in $\eta$ into flat inequalities. These  are all SRA nodes, and hence they  can never introduce violations of the defining condition of analytic inductive inequalities between the variables in $\overline v$. Hence, since  each $\mathcal{L}_{\mathrm{LE}}$-inequality in $\mathsf{Kracht}(\chi)$ is already guaranteed to be analytic (except for the clopen terms at the bottom of the syntax tree of the inequality), the only violations  can be found in the clopen terms, since they can contain branches which are not good, and SRR nodes. Definition \ref{def:kracht_analytic_inductive} introduces precisely the required restrictions which ensure that these clopen terms be well behaved.
\end{proof}

\begin{definition} (cf.~\cite[Definitions 6.8 and 6.10]{de2024obligations2})
 \label{def:spd-space }
 The  {\em spd-space}\footnote{We warn the reader that, for the sake of a more direct application of \cite[Theorem 4.1]{conradie2019algorithmic} in the proof of Proposition \ref{prop:dual characterization}, we define $R_{\prec}$, $R_{\npcon}$ and $R_{\nppcon}$ as the {\em converses} of the relations defined in \cite[Definitions 6.8 and 6.10]{de2024obligations2}. Of course there is no conceptual difference between the two definitions; it is just a matter of convenience of presentation.} associated with an spd-algebra $\mathbb{H}$ based on a distributive lattice $A$ is $\mathbb{H}_*: = ((\jty(A^\delta), \sqsubseteq), \mathcal{R}_\mathsf{S}, \mathcal{R}_\mathsf{C},\mathcal{R}_\mathsf{D})$, where
 \begin{enumerate}
 \item $(\jty(A^\delta), \sqsubseteq)$ is the poset of the completely join-irreducible elements\footnote{An element $j\neq \bot$ of a (complete) lattice $L$ is {\em completely join-irreducible} (resp.~{\em completely join-prime}) if $j\in S$ (resp.~$j\leq s\in S$) whenever $j = \bigvee S$ (resp.~$j\leq \bigvee S$) for  $S\subseteq L$. Clearly, any completely join-prime element is completely join-irreducible, and the two notions are equivalent if, in $L$, finite meets distribute over arbitrary joins.} of $A^\delta$ s.t.~$j\sqsubseteq k$ iff $j\geq k$;
 \item $\mathcal{R}_\mathsf{S} = \{R_{\prec}\mid {\prec}\in \mathsf{S}\}$ where $R_\prec$ is a binary relation on $ \jty(A^\delta)$ for any ${\prec}\in \mathsf{S}$, such that $  R_{\prec}(i, j)$ iff $i\leq \Diamond j$ iff $\blacksquare \kappa (i)\leq \kappa(j)$, where $\kappa (i): = \bigvee\{i'\in \jty(A^\delta)\mid i\nleq i'\}$;\footnote{\label{footn:kappa} From the definition of $\kappa$ and the join primeness of every $j\in J^{\infty}(A^\delta)$, it immediately follows that $j\nleq u$ iff $u\leq \kappa(j)$ for every $u \in A^\delta$ and  every $j \in J^{\infty}(A^\delta)$.}
 \item $\mathcal{R}_\mathsf{C} = \{R_{\npcon}\mid \pcon\in \mathsf{C}\}$ where $R_{\npcon}$ is a binary relation on $ \jty(A^\delta)$ for any $\pcon\in \mathsf{C}$, such that  $  R_{\npcon}(i, j)$  iff $\wt  j\leq \kappa(i)$ iff  $\bt  i\leq \kappa(j)$.

  \item $\mathcal{R}_\mathsf{D} = \{R_{\ppcon}\mid \ppcon\in \mathsf{D}\}$ where $R_{\nppcon}$ is a binary relation on $ \jty(A^\delta)$ for any $\ppcon\in \mathsf{D}$,   such that $  R_{\nppcon}(i, j)$  iff $i\leq \tw \kappa(j)$ iff  $ j\leq \tb \kappa (i)$.
 \end{enumerate}
\end{definition}
From the definition above, it immediately follows that
\begin{lemma}
\label{lemma:spd-space is discrete dual of canext}
    For any spd-algebra $\mathbb{H}$ based on a distributive lattice $A$, the spd-space $\mathbb{H}_\ast$ is isomorphic to the prime structure $(\mathbb{H}^\delta)_{+}$ (cf.~\cite[Definition 2.12]{conradie2012algorithmic}) of the perfect distributive modal algebra $\mathbb{H}^\delta$.
\end{lemma}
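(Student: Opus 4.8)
The plan is to exhibit the isomorphism as the identity map on $\jty(A^\delta)$, by unravelling the definition of the prime structure $(\mathbb{H}^\delta)_+$ from \cite[Definition 2.12]{conradie2012algorithmic} and matching it, component by component, against Definition \ref{def:spd-space }. First I would note that, since $A$ is distributive, $A^\delta$ is a perfect (completely distributive) lattice in which the completely join-irreducible and the completely join-prime elements coincide; as $\mathbb{H}^\delta$ has the very same underlying lattice as $A^\delta$ (the families $\mathcal{F}^{\mathbb{H}^\delta}$ and $\mathcal{G}^{\mathbb{H}^\delta}$ only add operations, cf.\ Definition \ref{def: sigma and pi extensions of slanted}), $\mathbb{H}^\delta$ is a perfect distributive modal algebra, $\jty(\mathbb{H}^\delta) = \jty(A^\delta)$ as sets, and the order $\sqsubseteq$ (the converse of $\leq$) of $\mathbb{H}_\ast$ is precisely the one adopted in \cite[Definition 2.12]{conradie2012algorithmic} --- this convention being, in fact, the reason for the footnote to Definition \ref{def:spd-space }. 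So the identity map is already an order isomorphism between the underlying posets, and what remains is to match the relational structure.

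Next I would go through the slanted operations one at a time. Apart from the operations inherited from $A$ (whose dual relations are carried along unchanged on both sides), the operations of $\mathbb{H}^\delta$ are $\Diamond_\prec^\sigma, \blacksquare_\prec^\pi$ for $\prec\in\mathsf{S}$, $\tw_\ppcon^\sigma, \tb_\ppcon^\sigma$ for $\ppcon\in\mathsf{D}$, and $\wt_\pcon^\pi, \bt_\pcon^\pi$ for $\pcon\in\mathsf{C}$; \cite[Definition 2.12]{conradie2012algorithmic} attaches a binary relation on $\jty(\mathbb{H}^\delta)$ to each. The content of the verification is that these relations collapse pairwise --- $R_{\Diamond_\prec} = R_{\blacksquare_\prec}$, $R_{\wt_\pcon} = R_{\bt_\pcon}$, $R_{\tw_\ppcon} = R_{\tb_\ppcon}$ --- and coincide respectively with $R_\prec$, $R_\npcon$, $R_\nppcon$ of Definition \ref{def:spd-space }. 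Each such equivalence is a one-line computation using (i) the adjunction and Galois laws $\Diamond u \leq w \iff u \leq \blacksquare w$, $w \leq \wt u \iff u \leq \bt w$, and $\tw w \leq u \iff \tb u \leq w$, which hold on $\mathbb{H}^\ast$ (displayed just above Definition \ref{def: sigma and pi extensions of slanted}) and lift to the $\sigma$- and $\pi$-extensions acting on the whole of $A^\delta$ (the lifting of adjoint/residuated pairs to canonical extensions of slanted algebras being standard, cf.\ \cite{de2021slanted}; alternatively one may invoke the complete (anti)preservation properties of Lemma \ref{lemma: distribution properties of sigma pi}), together with (ii) the characterisation of $\kappa$ from Footnote \ref{footn:kappa}, namely $j\nleq u$ iff $u\leq\kappa(j)$ for every join-prime $j$. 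For instance, $i \leq \Diamond j \iff \lnot(\Diamond j \leq \kappa(i)) \iff \lnot(j \leq \blacksquare\kappa(i)) \iff \blacksquare\kappa(i) \leq \kappa(j)$, which is precisely the double characterisation of $R_\prec$ in Definition \ref{def:spd-space }(2); the cases of $\wt, \bt$ and of $\tw, \tb$ are entirely analogous (e.g.\ $\wt j \leq \kappa(i) \iff i \nleq \wt j \iff j \nleq \bt i \iff \bt i \leq \kappa(j)$, matching Definition \ref{def:spd-space }(3)). Collecting these, the identity map on join-primes --- once each duplicated pair of relations of $(\mathbb{H}^\delta)_+$ is identified --- is the desired isomorphism $\mathbb{H}_\ast \cong (\mathbb{H}^\delta)_+$.

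The statement is flagged as immediate, and indeed there is no conceptual obstacle: every equivalence needed reduces, as above, to the two adjunction laws, the two Galois laws, and Footnote \ref{footn:kappa}. The only point that needs care --- essentially pure bookkeeping --- is aligning the conventions of \cite[Definition 2.12]{conradie2012algorithmic} (the direction of the dual order, and the side and coordinate on which $\kappa$ is inserted for $\mathcal{G}$-type connectives of order-type $\partial$) with those chosen in Definition \ref{def:spd-space }; and, in the course of this, making sure that the adjunction/Galois laws, which are stated for the slanted maps $A \to A^\delta$, genuinely transfer to the operations of $\mathbb{H}^\delta$ acting on all of $A^\delta$ --- which is the standard behaviour of $\sigma/\pi$-extensions of adjoint pairs.
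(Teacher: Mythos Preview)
Your proposal is correct and follows exactly the approach the paper intends: the paper offers no proof beyond the phrase ``From the definition above, it immediately follows that'', and your argument is precisely the routine unpacking of Definition~\ref{def:spd-space } against \cite[Definition 2.12]{conradie2012algorithmic} that this phrase gestures at. Your explicit verification of the pairwise collapse $R_{\Diamond_\prec}=R_{\blacksquare_\prec}$ etc.\ via the adjunction/Galois laws and Footnote~\ref{footn:kappa} is exactly what is needed, and your caveat about lifting these laws to the $\sigma/\pi$-extensions is the one genuine (if standard) point of care.
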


\begin{proposition}
\label{prop:dual characterization}
    For any  spd-algebra $\mathbb{H}$ based on a distributive lattice,
    and any inductive Kracht formula $\chi$,
    \[\mathbb{H}\models \chi \quad \text{ iff }\quad  \mathbb{H}_\ast\models \mathsf{ALBA}(\mathsf{Kracht}(\chi)), \]
 where $\mathsf{ALBA}(\mathsf{Kracht}(\chi))$ denotes the conjunction of the first-order correspondents on spd-spaces of the analytic inductive $\mathcal{L}_\mathrm{LE}$-inequalities in    $\mathsf{Kracht}(\chi)$.
\end{proposition}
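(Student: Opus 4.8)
The plan is to run a chain of equivalences connecting $\mathbb{H}\models\chi$ to $\mathbb{H}_\ast\models\mathsf{ALBA}(\mathsf{Kracht}(\chi))$, using the canonical extension $\mathbb{H}^\delta$ (viewed as a perfect distributive LE-algebra) as an intermediary, and invoking in succession: the inverse correspondence theorem, slanted canonicity, the standard discrete duality for inductive (D)LE-inequalities, and the isomorphism between spd-spaces and prime structures.

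First I would apply Theorem \ref{thm:inverse correspondence theorem} to replace $\mathbb{H}\models\chi$ by $\mathbb{H}^\ast\models\mathsf{Kracht}(\chi)$, and then Lemma \ref{lemma:kracht_inductive_to_inductive} to record that, since $\chi$ is an \emph{inductive} Kracht formula, each conjunct of $\mathsf{Kracht}(\chi)$ is an analytic inductive $\mathcal{L}_\mathrm{LE}$-inequality. Since validity distributes over the conjunction on both sides, it suffices to argue, for a single analytic inductive $\mathcal{L}_\mathrm{LE}$-inequality $\iota$, that $\mathbb{H}^\ast\models\iota$ iff $\mathbb{H}_\ast\models\mathsf{ALBA}(\iota)$. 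The second link is canonicity: by definition $\mathbb{H}^\ast\models\iota$ means that $\mathbb{H}^\delta=(\mathbb{H}^\ast)^\delta$ validates $\iota$ under all clopen assignments, which is trivially implied by $\mathbb{H}^\delta\models\iota$, while the converse implication is precisely the slanted canonicity of analytic inductive inequalities, i.e.\ \cite[Theorem 4.1]{de2021slanted}; hence $\mathbb{H}^\ast\models\iota$ iff $\mathbb{H}^\delta\models\iota$.

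The third link is the discrete duality. Since $A$ is distributive, $A^\delta$ is perfect, and by Lemma \ref{lemma: distribution properties of sigma pi} every operation of $\mathbb{H}^\delta$ is completely join-preserving (resp.\ meet-preserving) in the appropriate coordinates, so $\mathbb{H}^\delta$ is a perfect distributive LE-algebra. The standard correspondence theory for inductive (D)LE-inequalities over perfect algebras and their dual frames, i.e.\ \cite[Theorem 4.1]{conradie2019algorithmic} (cf.\ also \cite{conradie2012algorithmic}), then yields that $\mathbb{H}^\delta\models\iota$ iff the prime structure $(\mathbb{H}^\delta)_{+}$ validates the first-order correspondent $\mathsf{ALBA}(\iota)$ produced by $\mathsf{ALBA}$; note that the conventions in Definition \ref{def:spd-space } — in particular taking $R_{\prec}$, $R_{\npcon}$, $R_{\nppcon}$ to be the \emph{converses} of the relations in \cite{de2024obligations2}, and the way they are read off from the slanted operations $\Diamond/\blacksquare$, $\wt/\bt$, $\tw/\tb$ — are chosen precisely so that this theorem applies verbatim. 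Finally, Lemma \ref{lemma:spd-space is discrete dual of canext} gives $\mathbb{H}_\ast\cong(\mathbb{H}^\delta)_{+}$, and since isomorphic relational structures satisfy the same $\mathcal{L}^\mathrm{FO}_{spd}$-sentences, $(\mathbb{H}^\delta)_{+}\models\mathsf{ALBA}(\iota)$ iff $\mathbb{H}_\ast\models\mathsf{ALBA}(\iota)$. Concatenating the four equivalences and conjoining over the conjuncts $\iota$ of $\mathsf{Kracht}(\chi)$ proves the statement.

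There is no genuinely new argument here; the delicate point, and the one on which I would spend the most care, is the bookkeeping involved in identifying the first-order correspondent that $\mathsf{ALBA}$ assigns to the analytic inductive inequality $\iota$ over the perfect algebra $\mathbb{H}^\delta$, read off on $(\mathbb{H}^\delta)_{+}$, with the $\mathcal{L}^\mathrm{FO}_{spd}$-sentence that we call $\mathsf{ALBA}(\mathsf{Kracht}(\chi))$ on $\mathbb{H}_\ast$. This amounts to matching the frame-correspondence language used in \cite{conradie2019algorithmic,conradie2012algorithmic} with the signature of spd-spaces, which is exactly what the conventions fixed in Definition \ref{def:spd-space } are set up to accomplish; so the verification reduces to a routine, if somewhat tedious, compatibility check rather than to any substantive obstacle.
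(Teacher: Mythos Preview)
Your proposal is correct and follows essentially the same approach as the paper's proof: the same chain of equivalences $\mathbb{H}\models\chi \Leftrightarrow \mathbb{H}^\ast\models\mathsf{Kracht}(\chi) \Leftrightarrow \mathbb{H}^\delta\models\mathsf{Kracht}(\chi) \Leftrightarrow (\mathbb{H}^\delta)_+\models\mathsf{ALBA}(\mathsf{Kracht}(\chi)) \Leftrightarrow \mathbb{H}_\ast\models\mathsf{ALBA}(\mathsf{Kracht}(\chi))$, invoking in turn Theorem~\ref{thm:inverse correspondence theorem}, Lemma~\ref{lemma:kracht_inductive_to_inductive} together with slanted canonicity \cite[Theorem~4.1]{de2021slanted}, the standard correspondence \cite[Theorem~4.1]{conradie2019algorithmic}, and Lemma~\ref{lemma:spd-space is discrete dual of canext}. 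Your additional commentary on the bookkeeping between the frame-correspondence language and the spd-space signature is a reasonable elaboration, but the core argument is identical.
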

\begin{proof}
    By Theorem \ref{thm:inverse correspondence theorem}, $\mathbb{H}\models \chi$ iff $  \mathbb{H}^\ast\models \mathsf{Kracht}(\chi)$; since $\chi$ is an inductive Kracht formula, by Lemma \ref{lemma:kracht_inductive_to_inductive}, each $\mathcal{L}_\mathrm{LE}$-inequality in $\mathsf{Kracht}(\chi)$ is analytic inductive, and hence {\em slanted} canonical (cf.~\cite[Theorem 4.1]{de2021slanted}). Therefore, $  \mathbb{H}^\ast\models \mathsf{Kracht}(\chi)$ iff $  \mathbb{H}^\delta\models \mathsf{Kracht}(\chi)$, and since $\mathbb{H}^\delta$ is a standard distributive modal algebra (cf.~Definition \ref{def: sigma and pi extensions of slanted}), by \cite[Theorem 4.1]{conradie2019algorithmic},  $  \mathbb{H}^\delta\models \mathsf{Kracht}(\chi)$ iff $  (\mathbb{H}^\delta)_+\models \mathsf{ALBA}(\mathsf{Kracht}(\chi))$. The required  statement is then an immediate consequence of Lemma \ref{lemma:spd-space is discrete dual of canext}.
\end{proof}

\section{Conclusions}
\label{sec:conclusions}

In the present paper, pivoting on the notion of slanted algebras \cite{de2021slanted}, we have extended insights, methods, techniques, and results of {\em unified  correspondence} theory
\cite{conradie2014unified, conradie2022unified, palmigiano2024unified}  to input output/logic and subordination algebras and related structures, two  research areas  which have been  connected with each other only very recently \cite{wollic22, de2024obligations, de2024obligations2}. The `unified correspondence' methodology is grounded  on the recognition of correspondence phenomena as {\em dual characterizations}, i.e.~characterizations across suitable dualities between algebras and relational structures. This insight has allowed for correspondence-theoretic phenomena  to be studied and explained in terms of the order-theoretic properties of the algebraic interpretations of the logical connectives. Via dual characterization, it was possible for core correspondence-theoretic techniques to be systematically transferred    from relational structures to {\em algebras}, and leveraging on  this vantage point, for Sahlqvist-type  results  to be  significantly generalized and extended   from modal logic to nonclassical logics \cite{conradie2012algorithmic,conradie2017algebraic}. Unlike previous  approaches to correspondence theoretic-results in the context of subordination algebras \cite{celani2020subordination, de2020subordination}, in the present paper, subordination algebras and related structures (in particular {\em spd-algebras}, cf.~Definition \ref{def:spd-algebra}) play the same role played by {\em relational structures} in classsical correspondence theory.\footnote{Not in the sense that spd-algebras and slanted algebras are dual to each other; they are in fact  {\em equivalent} to each other, and via this equivalence, a correspondence can be established between formulas of the first-order language $\mathcal{L}^{\mathrm{FO}}_{spd}$ and $\mathcal{L}_{\mathrm{LE}}$-inequalities.} As such, the frame correspondence language (cf.~Definition \ref{def:fo-language-spd}) associated with spd-algebras admits not only binary relation symbols but also function symbols, corresponding to the algebraic signature of the domain of  spd-algebras. This language naturally encodes (the algebraic counterparts of) closure rules considered and studied in input/output logic. The {\em algebraic} side of the present correspondence-theoretic setting is given by slanted algebras, i.e.~algebraic structures based on lattice expansions $A$ endowed with {\em slanted  operations}, i.e.~maps of type $A\to A^\delta$. Thanks to the notion of slanted operation, output operators in input/output logic can be regarded as modal operators in a  mathematically explicit way (cf.~\cite[Section 6.1]{de2024obligations2}). In previous work \cite[Section 5]{de2024obligations2}, we have applied correspondence-theoretic techniques on a case-by-case basis, to modally characterize  a finite number of well known rules and conditions considered both in the  input/output logic literature and in the literature on subordination algebras and related structures; in the present paper, we have extended  these results so as to cover the class of Kracht formulas (cf.~Definition \ref{def:inverse_shape}) on the side of spd-algebras, and analytic  inequalities (cf.~Section \ref{ssec:analytic-LE-axioms}) on the side of propositional logics and their associated slanted algebras. These results have made it possible to generalize and extend previous correspondence results for subordination algebras \cite{celani2020subordination, de2020subordination}, while at the same time clarifying their connections to standard correspondence results (cf.~Proposition \ref{prop:dual characterization}). The results of the present paper pave the way to a number of further research directions, discussed below.

\paragraph{Dual characterizations beyond the distributive setting} As mentioned above, the literature on subordination algebras and related structures has considered correspondence-theoretic results in the setting of Boolean algebras and distributive lattices, in which, rather than being characterized as modal axioms as is done in the present paper, first-order conditions on subordination algebras are characterized  as {\em first-order conditions} on relational structures (called {\em subordination spaces}  cf.~\cite[Definition 4.5]{celani2020subordination}). The generalization of these results developed in Section \ref{sec: applications} is also presented in a distributive setting. However, the techniques and results on which this generalization hinges hold even more broadly; specifically, they hold in the setting of LE-algebras in which distributive laws do not need to hold. A natural generalization of the results developed in Section \ref{sec: applications} concerns the general lattice setting, in which the counterparts of spd-spaces can  be defined as suitable classes of polarity-based frames or graph-based frames \cite{stolpe2015concept,conradie2020non, conradie2024modal, conradie2024modal2}.

\paragraph{Normative systems and slanted algebras with binary operators} The standard framework in input/output logic concerns normative and permission systems represented as binary relations. Their algebraic incarnations naturally corresponds to the binary relations of spd-algebras, which in turn give rise to unary modal operators on slanted algebras.  However, slanted algebras admit slanted operators of any arity, so a natural direction in which the present results can be expanded concerns  the algebraic theory of normative and permission systems of higher arity, which have already been considered and studied in the literature \cite{Pigozzi18}.
As to the subordination algebra literature, it is well known that subordination algebras can equivalently be represented as Boolean algebras or distributive lattices  $A$ with {\em quasi-modal} operators (cf.~\cite{celani2001quasi}), i.e.~unary maps of type $A\to Filt(A)$ or $A\to Id(A)$, where $Filt(A)$ and $ Id(A)$ respectively denote the sets of the filters and ideals of $A$. In view of  the well-known isomorphisms $Filt(A)\cong K(A^\delta)$ and $Id(A)\cong O(A^\delta)$ (cf.~\cite{Gehrke2004BoundedDL}), quasi-modal operators can equivalently be represented as slanted operators of the same arity (cf.~\cite[Section 6.4]{de2024obligations2}). Algebras with {\em binary} quasi-modal operators have been introduced in \cite{castro2003generalized,castro2011distributive}, and recently, they have been studied in connection with subordination relations \cite{calomino2023study}. We are presently  exploring this connection, also in view of potential applications to the theory of input/output logics.

\paragraph{Modal embeddings and proof calculi  for input/output logics} Since its inception \cite{Makinson00}, the literature on input/output logic has striven to devise  syntactic and computational characterizations of the sets of obligations and compatible permissions logically deducible from a given set of conditional norms; moreover, since its inception,  establishing explicit connections between input/output frameworks and modal logic frameworks, e.g.~via embeddings and translations,  has been regarded as a  possible route for achieving this aim. However, modal embeddings into deontic logic have proven to be  highly dependent on the specific definitions of certain outputs, and hence not uniformly available, highly nonmodular, and technically challenging  (cf.~\cite{strasser2016adaptive}). Very recently, modular sequent calculi for input/output  logics have been introduced \cite{ciabattoni2022dyadic} for  the four original outputs of \cite{Makinson00} and of their causal counterparts introduced in \cite{bochman2004causal}, as well as helping to establish complexity bounds, and define possible worlds semantics
and formal embeddings into normal modal logics, leading to formally explicit modal characterizations of the output operators mentioned above.  The modal characterization results in \cite{ciabattoni2022dyadic}   are very similar to those of Section \ref{sec:correspondence}, which hold for a class of modal axioms encompassing the {\em analytic inductive} axioms (\cite[Definition 55]{greco2018unified}, see also Section \ref{sec: applications}). In \cite{greco2018unified}, analytic inductive axioms have been characterized as those axioms which can be equivalently transformed into rules which preserve cut elimination when added to a calculus with cut elimination. We are presently exploring the possibility of applying the results of the present paper in the modular design of proof calculi with good computational properties for classes of input/output logics. The availability of this uniform proof theoretic framework might be particularly useful vis-\`a-vis the recent convergence between input/output logic and formal argumentation theory \cite{ berkel2022reasoning, chen2024bisimulation}.

\paragraph{Implementations and provers}  The ongoing work on implementing input/output logic using LogiKEy’s methodologies is grounded in our recent efforts described in \cite{de2024obligations2}. It is straightforward to implement the corresponding modal algebras of obligation, negative permission, and dual-negative permission aligned with original proposed syetems.  The soundness result demonstrate that the efficiency of the algebraic encoding is  similar to the LogiKEy benchmark examples~\cite{Christophdata}.\footnote{See the GitHub link (direct implementation): \href{ https://github.com/farjami110/AlgebriacInputOutput}{ https://github.com/farjami110/AlgebriacInputOutput}} However, the more general results presented in the present paper offer a broader and more principled platform for these implementations. By establishing a systematic connection between input/output logic and various modal algebras, our approach will allow for more efficient and accurate use of off-the-shelf theorem provers in the development of responsible AI systems, enabling the LogiKEy framework to address a wider array of logical formalisms with greater computational efficiency.

\section*{Acknowledgments}

\subsection*{Competing interests}
    The authors of this study declare that there is no conflict of interest with any commercial or financial entities related to this research.
\subsection*{Authors' contributions}
    Xiaolong Wang drafted the initial version of this article. Other authors have all made equivalent contributions to it.
\subsection*{Funding}
    The authors  affiliated to the Vrije Universiteit Amsterdam have received funding from the EU Horizon 2020 research and innovation programme under the MSCA-RISE grant agreement No.~101007627.
\\
Xiaolong Wang is supported by the China Scholarship Council No.~202006220087.
\\
Krishna Manoorkar is supported by the NWO grant KIVI.2019.001 awarded to Alessandra Palmigiano.

\bibliographystyle{abbrv}

\end{document}